\documentclass[onecolumn,journal]{IEEEtran}
\usepackage{cite}
\usepackage{amsmath,amssymb,amsfonts}
\usepackage{algorithmic}
\usepackage{graphicx}
\usepackage{textcomp}
\usepackage{url}
\usepackage{hyperref}
\usepackage{cancel}
\usepackage{amsthm}
\usepackage{mathtools}
\usepackage{color}

\def\BibTeX{{\rm B\kern-.05em{\sc i\kern-.025em b}\kern-.08em
    T\kern-.1667em\lower.7ex\hbox{E}\kern-.125emX}}

\theoremstyle{plain}
\newtheorem{theorem}{Theorem}
\newtheorem{proposition}{Proposition}
\newtheorem{remark}{Remark}
\newtheorem{lemma}{Lemma}
\theoremstyle{definition}
\newtheorem{definition}{Definition}
    
\begin{document}

\title{A Synonymous Variational Perspective on the Rate-Distortion-Perception Tradeoff}
\author{Zijian Liang, \IEEEmembership{Graduate Student Member, IEEE}, Kai Niu, \IEEEmembership{Senior Member, IEEE}, Changshuo Wang, \IEEEmembership{Graduate Student Member, IEEE}, Jin Xu, \IEEEmembership{Graduate Student Member, IEEE}, and Ping Zhang, \IEEEmembership{Fellow, IEEE}
\thanks{This paper extends our previous conference work presented in ICML 2025~\cite{liang2025synonymous} with additional theoretical analysis.}
\thanks{This paper was supported in part by the National Key R\&D Program of China under Grant 2025YFF0514400, Grant 2025YFF0514404; in part by the National Natural Science Foundation of China under Grant 62293481, Grant 62471054. (\textit{Corresponding Author: Kai Niu}.)}
\thanks{Zijian Liang, Kai Niu, Changshuo Wang, and Jin Xu are with the Key Laboratory of Universal Wireless Communications, Ministry of Education, Beijing University of Posts and Telecommunications, Beijing 100876, China (email: liang1060279345@bupt.edu.cn, niukai@bupt.edu.cn, changshuo\_wang@bupt.edu.cn, xujinbupt@bupt.edu.cn).}
\thanks{Ping Zhang is with the State Key Laboratory of Networking and Switching Technology, Beijing University of Posts and Telecommunications, Beijing 100876, China (email: pzhang@bupt.edu.cn).}}

\maketitle

\begin{abstract}

The fundamental limit of natural signal compression has traditionally been characterized by classical rate-distortion (RD) theory through the tradeoff between coding rate and reconstruction distortion, while the rate-distortion-perception (RDP) framework introduces a divergence-based measure of perceptual quality as a modeling principle, leaving its theoretical origin unclear. In this paper, motivated by a synonymity-based semantic information perspective, we reformulate perceptual reconstruction as recovering any admissible sample within an ideal synonymous set (synset) associated with the source, rather than the source sample itself, and establish a synonymous source coding architecture. On this basis, we develop a synonymous variational inference (SVI) analysis framework with a synonymous variational lower bound (SVLBO) for tractable analysis of synset-oriented compression. Within this framework, we establish a synonymity-perception consistency principle, showing that optimal identification of semantic information is theoretically consistent with perceptual optimization. {  Based on this result, we further derive a tight-bound synonymous source coding rate characterization and show that its Jensen-limit relaxation leads to a synonymous rate-distortion-perception form for practical optimization.} These analytical results show that the distributional divergence term arises naturally from the synset-based reconstruction objective, clarify its compatibility with existing RDP formulations and classical RD theory, and suggest the potential advantages of synonymous source coding.

\end{abstract}

\begin{IEEEkeywords}
Rate-distortion-perception tradeoff, perceptual compression, synonymity-based semantic information, variational inference, distributional divergence. 
\end{IEEEkeywords}

\section{Introduction}\label{SectionI}

Natural signal compression has traditionally been governed by rate-distortion (RD) theory, which characterizes the tradeoff between coding rate and reconstruction distortion. Rooted in Shannon’s lossy coding theorem \cite{shannon1948mathematical, shannon1959coding, cover2006elements}, this framework underpins compression across modalities such as audio, images, and video, and supports standards including AAC \cite{brandenburg1999mp3}, JPEG-2000 \cite{marcellin2000overview}, and H.264/AVC \cite{wiegand2003overview}. However, minimizing distortion alone often leads to perceptually unsatisfactory reconstructions, especially at low rates, where artifacts, oversmoothing, and blurring are common. To address this limitation, Blau and Michaeli introduced the distortion-perception tradeoff and its information-theoretic formulation as the rate-distortion-perception (RDP) framework \cite{blau2018perception,blau2019rethinking}. This formulation quantifies perceptual quality via a divergence between the distributions of natural and reconstructed signals, yielding a three-way tradeoff among rate, distortion, and perception.

Following this formulation, extensive efforts have explored the RDP tradeoff from both theoretical and practical perspectives. Theoretically, Theis and Wagner \cite{theis2021a} showed that the RDP function is achievable with stochastic variable-length codes, while Chen et al. \cite{chen2022rate} demonstrated that deterministic codes suffice in most cases and clarified the role of randomness, including distinctions between weak and strong perceptual criteria. Salehkalaibar et al. \cite{salehkalaibar2024rate} further characterized the RDP tradeoff for discrete and continuous memoryless sources under conditional divergence-based perception measures. Additional work has addressed computability and specialization under various $f$-divergence constraints \cite{serra2023computation}. Furthermore, Wang et al. \cite{wang2024lossy} incorporated classification constraints and showed that no inherent tradeoff exists between perception and classification in the rate-perception-classification setting.

In parallel, the RDP tradeoff has significantly influenced learned compression, particularly for images \cite{agustsson2019generative, mentzer2020high, zhang2021universal, muckley2023improving, jia2024generative}, and increasingly for audio \cite{zeghidour2021soundstream, polyak2021speech} and video \cite{wang2021one, yang2022perceptual}. Perceptual quality is typically enhanced via perceptual losses, adversarial training, or generative decoders. More recently, diffusion models have been incorporated into learned codecs to leverage their strong generative capabilities, further improving perceptual reconstruction at low bitrates \cite{careil2023towards, ma2025diffusion}.

Despite these advances, a fundamental question remains insufficiently addressed: The use of distributional divergence as a surrogate for perceptual quality, although widely adopted in the RDP framework, is largely introduced as a modeling principle rather than a theoretically-derived principle. Earlier related ideas, such as distribution-preserving quantization \cite{li2011distribution} and output-constrained lossy source coding \cite{saldi2015output}, have already highlighted that reconstructions sharing the same statistical distribution as natural signals are, in a statistical sense, indistinguishable from real data. This observation provides an intuitive justification for measuring perceptual quality through distributional similarity. However, such justification remains at a conceptual level: the divergence-based perception term is postulated as a modelling assumption and an introduced constraint, rather than emerging as a necessary consequence of the underlying reconstruction objective. In particular, within the classical rate-distortion framework, reconstruction is fundamentally formulated as an estimation of the original signal itself, under which no distribution-level constraint arises naturally. This leaves a theoretical gap between the empirical success of divergence-based perception measures and their lack of derivation from the fundamental assumptions of lossy compression.

In this paper, we revisit the rate-distortion-perception tradeoff by reformulating the underlying reconstruction objective. Instead of requiring the reconstruction to estimate the original sample itself, we consider {  a more general reconstruction objective for continuous natural sources}, i.e., the reconstruction is only required to produce a perceptually valid sample within an admissible set associated with the source. Under this { conceptual} formulation, we show that once such a relaxation is accepted, a distributional divergence term naturally emerges from the reconstruction objective through a Bayesian characterization { in the mathematical analysis}, rather than being introduced as an external constraint. Notably, this reformulation does not arise from the classical syntactic viewpoint of information theory, where the reconstruction target is intrinsically tied to the original sample. Instead, it is fundamentally motivated by a synonymity-based semantic information perspective \cite{niu2024mathematical, niu2025mathematical}, in which perceptual validity is characterized by the similarity among admissible samples within the admissible set (referred to as ``synset'') of the source signal. 

Building on this formulation, we develop a synonymous variational inference (SVI) framework to re-analyze perceptual compression and reconstruction from a synonymity-based semantic information perspective.
{  This framework provides a principled explanation for the emergence of the distributional divergence term in perceptual compression and further clarifies how the tight synonymous reconstruction constraint connects to the RDP-like tradeoff through its Jensen-limit relaxation.}
The main contributions of this paper are summarized as follows.

\begin{enumerate}
    \item We {  reformulate} the design objectives of perceptual codec via a synset-based formulation, in which the reconstructed sample is only required to correspond to any admissible sample within the ideal synset associated with the source, rather than the original sample itself. Building on this reformulation, we establish a synonymous source coding architecture and develop a corresponding system model for perceptual compression and reconstruction. This formulation provides a precise operational interpretation of perceptual validity and lays the foundation for the subsequent analysis.

    \item We develop a synonymous variational inference (SVI) analysis framework to analyze synset-oriented optimization. Within this framework, we introduce the synonymous variational lower bound (SVLBO) as a tractable analytical tool for studying this optimization problem. Furthermore, we derive the necessary and sufficient conditions for lossless representation and lossless identification of synonymous information characterized by the ideal synset associated with the original signal sample. This analysis framework provides a principled characterization of synonymous information and establishes a theoretical foundation for synset-oriented reconstruction and compression.

    \item Based on the SVI analytical framework, we establish a synonymity-perception consistency principle between optimal identification at the semantic-information level and perceptual optimization at the syntactic level. Specifically, we propose and prove the synonymous likelihood lemma, which shifts the objective of signal reconstruction from the source signal itself to an arbitrary sample within the ideal synset associated with the source. Under this reformulation, the distributional divergence that characterizes perceptual quality arises naturally in the theoretical derivation. 
    {  Building on this consistency principle, we further derive a tight-bound synonymous source coding rate characterization and show that its Jensen-limit relaxation leads to a synonymous rate-distortion-perception form for practical optimization.}
    

\end{enumerate}

The rest of this paper is organized as follows. Section~\ref{SectionII} reviews preliminaries on classical variational inference-based learned compression, RDP-based perceptual compression, and a brief introduction to synonymity-based semantic information theory. Section~\ref{SectionIII} { reformulates} perceptual reconstruction via a synset-based formulation and introduces the synonymous source coding architecture. Section~\ref{SectionIV} presents the synonymous variational inference framework and the synonymous variational lower bound, including conditions for lossless representation. {  Section~\ref{SectionV} establishes the synonymous likelihood lemma, derives the emergence of the distributional divergence term, presents the synonymous source coding rate characterization, and interprets its Jensen-limit connection to the synonymous RDP tradeoff.}
Section~\ref{SectionVI} provides discussions for the relationships between the proposed synonymous source coding and the existing compression techniques, and Section~\ref{SectionVII} concludes this paper.

\emph{Notational Conventions}: Throughout this paper, lowercase letters (e.g., $x$) denote scalars, bold lowercase letters (e.g., $\boldsymbol{x}$) denote vectors, bold calligraphic uppercase letters (e.g., $\boldsymbol{\mathcal{X}}$) denote sets of vectors, bold uppercase letters (e.g., $\boldsymbol{X}$) denote high-dimensional random variables, and bold normal uppercase letters (e.g., $\mathbf{X}$) denote matrices. $\mathbf{I}_K$ denotes a $K$-dimensional identity matrix. Subscripts $i, j, k$ serve as ordinal identifiers.
$\ln(\cdot)$ and $\log(\cdot)$ represent the natural and base-2 logarithms, respectively. $\mathbb{E}$ denotes expectation, while $\mathbb{R}$ and $\mathbb{C}$ are the sets of real and complex numbers, respectively. $\mathcal{U}(x|a,b)$ denotes a uniform distribution with the range from $a$ to $b$. $\mathcal{N}(x|\mu,\sigma^2)$ denotes a Gaussian distribution with mean $\mu$ and variance $\sigma^2$. { For clarity, the main notations used throughout this paper are summarized in Table~\ref{tab:notation}, covering random variables, samples, synsets, semantic labels, quantized latent variables, and continuous relaxed latent variables.}

{ 
\begin{table*}[t]
\centering
\caption{{ Summary of Main Notations}}
\label{tab:notation}
\renewcommand{\arraystretch}{1.15}
\begin{tabular}{p{0.13\textwidth} p{0.75\textwidth}}
\hline
\textbf{Notation} & \textbf{Meaning} \\
\hline
\(\boldsymbol X, \boldsymbol x\) & Source random variable and its sample in the data space. \\
\(\hat{\boldsymbol X}, \hat{\boldsymbol x}\) & Reconstructed random variable and its sample. \\
\(\tilde X, \tilde x\) & Semantic variable and semantic symbol in the semantic space; \(\tilde x\) indexes a synset. \\
\(\tilde{\boldsymbol X}, \tilde{\boldsymbol x}\) & High-dimensional semantic variable and its realization. \\
\(\hat{\tilde{\boldsymbol X}}, \hat{\tilde{\boldsymbol x}}\) & Reconstructed semantic variable and its realization induced by the reconstructed sample or reconstructed synset. \\
\(\boldsymbol{\mathcal X}, \boldsymbol{\mathcal X}_{i_s}\) & Ideal synset in the data space; for continuous sources, it is treated as a measurable subset or cell induced by the synonymity criterion. \\
\(\hat{\boldsymbol{\mathcal X}}\) & Reconstructed synset generated by the decoder. \\
\(|\boldsymbol{\mathcal X}|\) & Measure of the measurable synset under the reference measure, i.e., \(|\boldsymbol{\mathcal X}|=\mu(\boldsymbol{\mathcal X})\), rather than discrete cardinality. \\
\(H_s(\tilde X), H_s(\tilde{\boldsymbol X})\) & Semantic entropy, interpreted as a discrete entropy. \\
\(\bar{\boldsymbol Y}, \bar{\boldsymbol y}\) & Quantized latent random variable and its sample used for entropy coding. \\
\(\bar{\boldsymbol Y}_s, \bar{\boldsymbol y}_s\) & Quantized synonymous latent representation and its sample; this is the entropy-coded semantic representation. \\
\(\bar{\boldsymbol Y}_{\epsilon}, \bar{\boldsymbol y}_{\epsilon}\) & Quantized detailed latent representation and its sample, which describes intra-synset details and is not entropy-coded in the ideal synonymous codec. \\
\(\bar{\boldsymbol{\mathcal Y}}\) & Latent synset composed of quantized latent samples corresponding to samples in the ideal synset. \\
\(\breve{\boldsymbol Y}, \breve{\boldsymbol y}\) & Continuous relaxed latent random variable and its sample, obtained from the quantized latent variable by adding uniform noise for differentiable variational analysis. \\
\(\breve{\boldsymbol Y}_s, \breve{\boldsymbol y}_s\) & Continuous relaxed counterpart of \(\bar{\boldsymbol Y}_s\); it is used in variational analysis and training, while the operational coding rate is associated with \(\bar{\boldsymbol Y}_s\). \\
\(q_{\boldsymbol\phi}(\breve{\boldsymbol y}_s|\boldsymbol x)\) & Variational synonymous encoder distribution parameterized by \(\boldsymbol\phi\). \\
\(p_{\boldsymbol\theta}(\boldsymbol{\mathcal X}|\breve{\boldsymbol y}_s)\) & Synset-level synonymous likelihood assigned by the decoder to the ideal synset conditioned on \(\breve{\boldsymbol y}_s\). \\
\(p_{\boldsymbol\theta}(\hat{\boldsymbol x}|\breve{\boldsymbol y}_s)\) & Conditional generative distribution of reconstructed samples. \\
\(g_a(\cdot;\boldsymbol\phi)\), \(g_s(\cdot;\boldsymbol\theta)\) & Parametric synonymous encoder and generative synonymous decoder. \\
\(\Delta(\boldsymbol x,\hat{\boldsymbol x})\)  & Distortion measure. \\
\(D\) & distortion constraint. \\
\(P\) & perception or divergence constraint. \\
\(\mathcal S_{1,\Delta}\), \(S\) & Tight synonymous reconstruction constraint and its constraint level. \\
\(\rho\) & Tight-to-relaxed scaling parameter; \(\rho=1\) gives the tight form and \(\rho\rightarrow0\) gives the Jensen-limit relaxed form. \\
\hline
\end{tabular}
\end{table*}

}

\section{Preliminaries}\label{SectionII}

This section introduces the preliminaries required for the subsequent analysis. We first review rate-distortion theory and its connection to variational inference, followed by the rate-distortion-perception tradeoff. We then present the synonymity-based semantic information theory, which provides the conceptual foundation for the proposed formulation.

\subsection{Rate-Distortion Theory and Variational Inference}
As one of the fundamental theorems in Shannon's classical information theory, rate-distortion theory \cite{shannon1948mathematical, shannon1959coding} aims to address the lossy compression problem. It provides a theoretical lower bound of the compression rate $R\left(D\right)$ with a given distortion $D$, which can be characterized as a rate-distortion function \cite{cover2006elements}
\begin{equation}
    R\left(D\right) = \min_{p_{\hat{\boldsymbol{X}}|\boldsymbol{X}}\left(\hat{\boldsymbol{x}}|\boldsymbol{x}\right)} I\left(\boldsymbol{X};\hat{\boldsymbol{X}}\right) \quad \mathrm{s}.\mathrm{t}. \quad \mathbb{E}_{x,\hat{\boldsymbol{x}} \sim p_{\boldsymbol{X},\hat{\boldsymbol{X}}}\left(\boldsymbol{x}, \hat{\boldsymbol{x}}\right)} \left[\Delta\left(\boldsymbol{x}, \hat{\boldsymbol{x}}\right)\right] \leq D,
\end{equation}
in which $I\left(\boldsymbol{X};\hat{\boldsymbol{X}}\right)$ represents mutual information between the source $X$ and the reconstructed $\hat{\boldsymbol{X}}$, numerically equal to the average coding rate for compressing $X$ with a given lossy codec; $D$ can be any reference distortion measure satisfying the condition that $\Delta\left(\boldsymbol{x}, \hat{\boldsymbol{x}}\right) = 0$ if and only if $\boldsymbol{x} = \hat{\boldsymbol{x}}$, typified by the mean squared error (MSE). 

{  Motivated by this objective, learned compression techniques optimize transform coding models in an end-to-end manner and have demonstrated competitive or superior rate-distortion performance compared with hand-crafted codecs in many practical scenarios. While the ultimate optimization target is still related to the rate-distortion tradeoff, the optimization process is commonly formulated through variational inference for generative models, especially variational auto-encoders \cite{kingma2013auto, balle2016end, balle2018variational}.}
The core idea of classical variational inference is to build a parametric latent density $q_{\boldsymbol{\phi}}(\breve{\boldsymbol{y}}|\boldsymbol{x})$ and minimize the KL divergence, a standard measure in classical information theory, to approximate the true posterior $p_{\breve{\boldsymbol{Y}}|\boldsymbol{X}}(\breve{\boldsymbol{y}}|\boldsymbol{x})$, i.e.,
\begin{equation}
\begin{aligned}
    D_{\text{KL}}\left[q_{\boldsymbol{\phi}}(\breve{\boldsymbol{y}}|\boldsymbol{x}) \, || \, p_{\breve{\boldsymbol{Y}}|\boldsymbol{X}}(\breve{\boldsymbol{y}}|\boldsymbol{x})\right] &= \int q_{\boldsymbol{\phi}}(\breve{\boldsymbol{y}}|\boldsymbol{x}) \log \frac{q_{\boldsymbol{\phi}}(\breve{\boldsymbol{y}}|\boldsymbol{x})}{p_{\breve{\boldsymbol{Y}}|\boldsymbol{X}}(\breve{\boldsymbol{y}}|\boldsymbol{x})}d\breve{\boldsymbol{y}} \\
    &=  \mathbb{E}_{\breve{\boldsymbol{y}}\sim q_{\boldsymbol{\phi}}(\breve{\boldsymbol{y}}|\boldsymbol{x})}\left[\log \frac{q_{\boldsymbol{\phi}}(\breve{\boldsymbol{y}}|\boldsymbol{x})}{p_{\boldsymbol{X}, \breve{\boldsymbol{Y}}}(\boldsymbol{x},\breve{\boldsymbol{y}})}\right] + \log p_{\boldsymbol{X}}(\boldsymbol{x}).
\end{aligned}
\end{equation}

By rearranging the derivation result, the above expression can be rewritten as the following identity
\begin{equation}\label{VI_equation}
    \log p_{\boldsymbol{X}}(\boldsymbol{x}) = \underset{\text{Evidence Lower Bound (ELBO)}}{\underbrace{\mathbb{E}_{\breve{\boldsymbol{y}}\sim q_{\boldsymbol{\phi}}}\left[\log\frac{p_{\boldsymbol{X}, \breve{\boldsymbol{Y}}}(\boldsymbol{x},\breve{\boldsymbol{y}})}{q_{\boldsymbol{\phi}}(\breve{\boldsymbol{y}}|\boldsymbol{x})}\right]}} + D_{\text{KL}}\left[q_{\boldsymbol{\phi}}(\breve{\boldsymbol{y}}|\boldsymbol{x}) \, || \, p_{\breve{\boldsymbol{Y}}|\boldsymbol{X}}(\breve{\boldsymbol{y}}|\boldsymbol{x})\right],
\end{equation}
in which the marginal probability distribution of the natural signal $\boldsymbol{x}$ on the left-hand side is referred to as the evidence in classical variational inference. Since this true marginal distribution is generally unknown, its logarithm is an intractable constant. On the right-hand side, the first term is the logarithm of the ratio between the joint distribution $p_{\boldsymbol{X}, \breve{\boldsymbol{Y}}}(\boldsymbol{x},\breve{\boldsymbol{y}})$ and the variational posterior $q_{\boldsymbol{\phi}}(\breve{\boldsymbol{y}}|\boldsymbol{x})$, while the second term is the KL divergence to be minimized. When the KL divergence term is reduced to zero, the first term becomes equal to the log marginal likelihood (evidence) on the left-hand side. Therefore, in classical variational inference, this first term is referred to as the ``Evidence Lower Bound'' (ELBO), whose maximization is equivalent to minimizing a rate-distortion tradeoff, i.e.,
\begin{equation}\label{classicalVI_factorized}
\begin{aligned}
    \max \,\, \mathbb{E}_{\boldsymbol{x}\sim p_{\boldsymbol{X}}(\boldsymbol{x})}\mathbb{E}_{\breve{\boldsymbol{y}}\sim q_{\boldsymbol{\phi}}}\left[\log\frac{p_{\boldsymbol{X},\breve{\boldsymbol{Y}}}(\boldsymbol{x},\breve{\boldsymbol{y}})}{q_{\boldsymbol{\phi}}(\breve{\boldsymbol{y}}|\boldsymbol{x})}\right]  
    \Longleftrightarrow \min \,\, \mathbb{E}_{\boldsymbol{x}\sim p_{\boldsymbol{X}}\left(\boldsymbol{x}\right)}\mathbb{E}_{\breve{\boldsymbol{y}}\sim q_{\boldsymbol{\phi}}}\Bigg[\cancelto{0}{\log q_{\boldsymbol{\phi}}\left(\breve{\boldsymbol{y}}|\boldsymbol{x}\right)}
     \underbrace{- \log p_{\boldsymbol{X}|\breve{\boldsymbol{Y}}}\left(\boldsymbol{x}|\breve{\boldsymbol{y}}\right)}_{\text{weighted distortion}}
     \underbrace{- \log p_{\breve{\boldsymbol{Y}}}\left(\breve{\boldsymbol{y}}\right)}_{\text{rate}}\Bigg].
\end{aligned}
\end{equation}
As the first term equals 0 under the assumption of a uniform density on the unit interval centered on $\boldsymbol{y}$, and the last term is a constant, the optimization simplifies to the sum of a weighted distortion and a coding rate, 
{  thereby leading to the standard rate-distortion optimization objective used in variational learned compression.}

\subsection{The Rate-Distortion-Perception tradeoff}
Since Blau and Michaeli demonstrated the apparent tradeoff between perceptual quality and distortion measure that widely exists in various distortion measures \cite{blau2018perception}, they extended the classic rate-distortion tradeoff to a triple tradeoff version \cite{blau2019rethinking}. Specifically, they define the perceptual quality index $d\left(p_{\boldsymbol{X}}, p_{\hat{\boldsymbol{X}}}\right)$ based on some divergence between distributions of the source and reconstructed samples, and build a new lower bound of compression rate $R\left(D, P\right)$ with considerations of the perception index, i.e.,
\begin{equation}\label{RDPbound}
    R\left(D, P\right) =  \min_{p_{\hat{\boldsymbol{X}}|\boldsymbol{X}}\left(\hat{\boldsymbol{x}}|\boldsymbol{x}\right)} I\left(\boldsymbol{X};\hat{\boldsymbol{X}}\right)
    \quad \mathrm{s}.\mathrm{t}. \quad \mathbb{E}_{\boldsymbol{x},\hat{\boldsymbol{x}} \sim  p_{\boldsymbol{X},\hat{\boldsymbol{X}}}\left(\boldsymbol{x}, \hat{\boldsymbol{x}}\right)} \left[\Delta\left(\boldsymbol{x}, \hat{\boldsymbol{x}}\right)\right] \leq D, \quad d\left(p_{\boldsymbol{X}}, p_{\hat{\boldsymbol{X}}}\right) \leq P.
\end{equation}
Building on this triple tradeoff relationship, the perceptual image compression methods \cite{agustsson2019generative, mentzer2020high,  muckley2023improving} typically optimize the model using the following loss function form:
\begin{equation}\label{RDPcost}
    \mathcal{L}_{RDP} = \lambda_r \cdot I\left(\boldsymbol{X};\hat{\boldsymbol{X}}\right) + \lambda_d \cdot \mathbb{E}_{\boldsymbol{x},\hat{\boldsymbol{x}} \sim p_{\boldsymbol{X},\hat{\boldsymbol{X}}}\left(\boldsymbol{x}, \hat{\boldsymbol{x}}\right)} \left[\Delta\left(\boldsymbol{x}, \hat{\boldsymbol{x}}\right)\right] + \lambda_p \cdot d\left(p_{\boldsymbol{X}}, p_{\hat{\boldsymbol{X}}}\right).
\end{equation}

It should be emphasized that, in existing analyses of the RDP tradeoff, distributional divergence is typically adopted as a measure of perceptual quality based on widely accepted modeling assumptions, rather than being derived from first principles. Establishing such a theoretical derivation is precisely the focus of this work.

\subsection{Synonymity-based Semantic Information Theory}\label{SectionII_3}

As mentioned earlier, distributional divergence can naturally emerge in the optimization objective and coding limits because the reconstruction objective has been reformulated: the decoder is only required to produce a perceptually valid sample within an admissible set associated with the source. Since this formulation originates from a synonymity-based semantic perspective \cite{niu2024mathematical, niu2025mathematical}, this section briefly introduces the corresponding viewpoint of semantic information theory.

The synonymity-based view of semantic information holds that a single meaning can correspond to multiple forms of expression, which together constitute a synonymous set (abbreviated as ``\textbf{Synset}''). Under this view, if signal samples satisfying a reasonable synonymity criterion (e.g., perceptual similarity) are grouped into the same synset, then they correspond to the same semantic information in the semantic space, characterized by the information shared by all samples in that synset. Accordingly, in semantic source coding or semantic communications, the destination does not need to reconstruct the original signal sample itself, but only any sample within the corresponding synonymous set, in order to achieve lossless coding or transmission of that semantic information \cite{zhang2026beyond}.

On this basis, \textbf{a semantic variable} $\tilde{X}$ in the semantic space can be defined, which \textbf{corresponds to a set of possible synsets} $\mathcal{X}_{i_s} = \left\{{  x_i} \mid i \in \mathcal{N}_{i_s}\right\}$ in the data space. Each sample $x_i$ is a possible realization of the syntactic variable $X$ and shares the same semantic meaning $\tilde{x}_{i_s}$ with all samples $\left\{x_i \mid i \in \mathcal{N}_{i_s}\right\}$ indexed by $\mathcal{N}_{i_s}$.  This induces a \textbf{synonymous mapping} from the semantic space to the data space, defined as a one-to-many mapping $f: \tilde{x}_{i_s} \xrightarrow{} \mathcal{X}_{i_s} = \left\{x_i \mid i \in \mathcal{N}_{i_s}\right\}$, together with its inverse \textbf{de-synonymous mapping} $f^{-1}: \mathcal{X}_{i_s} = \left\{x_i \mid i \in \mathcal{N}_{i_s}\right\} \xrightarrow{} \tilde{x}_{i_s}$. Building on this, the semantic entropy of $\tilde{X}$ is then defined as
\begin{equation}
    H_s\left(\tilde{X}\right) = - \sum_{i_s} p_{\tilde{X}}(\tilde{x}_{i_s}) \log p_{\tilde{X}}(\tilde{x}_{i_s}) = - \sum_{i_s} \sum_{i \in \mathcal{N}_{i_s}} p_X\left(x_i\right) \log \left(\sum_{i \in \mathcal{N}_{i_s}} p_X\left(x_i\right)\right),
\end{equation}
in which the probability of the semantic sample $\tilde{x}_{i_s}$ is equal to the probability of its corresponding synset $p_{\mathcal{X}}\left(\mathcal{X}_{i_s}\right)$, which is defined as the sum of the probabilities of all the samples $p_X\left(x_i\right)$ within it, i.e., $p_{\tilde{X}}(\tilde{x}_{i_s}) = p_{\mathcal{X}}(\mathcal{X}_{i_s}) = \sum_{i \in \mathcal{N}_{i_s}} p_X(x_i)$. This directly leads to the inequality between the semantic entropy and the classical Shannon entropy, i.e., $H_s(\tilde{X}) \leq H(X)$, being apparently valid, since the uncertainty of syntactic samples is no longer the focus. {  For continuous natural sources considered in this paper, the same semantic entropy is still interpreted as a discrete entropy rather than a differential entropy over continuous samples, which can be calculated by
\begin{equation}
        H_s(\tilde{X}) = -\sum_{i_s}p(\mathcal{X}_{i_s}) \log p(\mathcal{X}_{i_s}),
\end{equation}
in which $p(\mathcal{X}_{i_s}) = \int_{x\in \mathcal{X}_{i_s}} p(x) \, dx$.}

As the foundation of the synonymous variational inference proposed in this paper, a new form of KL divergence needs to be introduced from \cite{niu2024mathematical, niu2025mathematical}, referred to as \emph{partial semantic KL divergence} $D_{\text{KL},s}\left[q||p_s\right]$, which is defined as
\begin{equation}
    D_{\text{KL},s}\left[q||p_s\right] = \sum_{i_s}\sum_{i \in \mathcal{N}_{i_s}} q_{  X}\left({  x}_i\right) \log \frac{q_X\left(x_i\right)}{p_{\tilde{X}}\left(\tilde{x}_{i_s}\right)} = \sum_{i_s}\sum_{i \in \mathcal{N}_{i_s}} q_{  X}\left(x_i\right) \log \frac{q_X\left({ x}_i\right)}{p_{\mathcal{X}}\left(\mathcal{X}_{i_s}\right)},
\end{equation}
which represents a { divergence-like} quantity between a syntactic distribution $q$ and a semantic distribution $p_s$. \footnote{{  This terminology follows the synonymity-based semantic information theory in \cite{niu2024mathematical, niu2025mathematical}, where this quantity is also referred to as \emph{partial semantic relative entropy}. Unlike the standard KL divergence, the partial semantic KL divergence is not guaranteed to be non-negative in a general semantic distribution approximation problem. When used as an effective semantic divergence measure, its non-negative part is usually considered.}} Clearly, these two distributions emphasize different levels of information, i.e., the syntactic level and the semantic level.
However, this formulation reflects a core principle of the theory: \textbf{semantic information is invisible and cannot be directly manipulated, but can only be represented and processed through syntactic forms}.

In the following Section~\ref{SectionIII}, we present the problem formulation and synonymous modeling of perceptual compression under this theoretical framework and its basic processing principle. Sections~\ref{SectionIV} and ~\ref{SectionV} then provide detailed theoretical analyses to demonstrate why this semantic-information perspective can be theoretically consistent with perceptual compression.

\section{Problem Formulation and Synonymous Modeling}\label{SectionIII}

In the data space of natural signals, there generally exist many samples that are semantically consistent with the source signal while differing in specific details. For example, in the image modality, images exhibiting a certain degree of perceptual similarity to the source may, in a sense, be regarded as conveying the same semantic information, although they are not identical in texture, local structure, or fine details. However, semantic equivalence is inherently subjective or task-dependent, and different observers may reach different conclusions under different judgment criteria. For instance, in a street-scene image, a general observer may regard an image as semantically consistent with the source as long as it conveys key information such as the number of pedestrians or vehicles on the road. In contrast, for an observer engaged in urban scene analysis or behavior identification, pedestrian poses, vehicle types, and the specific content of traffic signs may also be essential semantic elements. Therefore, even when two images are highly similar in overall scene and primary objects, they may still be judged differently in terms of semantic consistency under different synonymity criteria. This example indicates that the construction of a synset necessarily depends on a predefined synonymity criterion. Under a given criterion, all samples within the resulting synset share the specific semantic information characterized by that synset, although they need not be identical in all fine-grained semantics or appearance details.

In this paper, the synonymity criterion of primary interest is perceptual similarity in the space of natural signals. Under this criterion, all admissible samples ${\boldsymbol{x}_i}$ are required to follow the natural-signal distribution $p_{\boldsymbol{X}}(\boldsymbol{x})$, and each sample can be continuously transformed in value into any other sample within the same synset. 
{  More specifically, for a given source sample \(\boldsymbol{x}\), the corresponding ideal synset is induced by the adopted synonymity criterion, which is specialized to perceptual similarity in this work and may be extended to task-dependent semantic requirements in more general semantic coding scenarios. The source distribution does not determine the semantic equivalence relation itself, but restricts the admissible samples to the natural-signal support and assigns probability mass to the resulting synsets. Therefore, the ideal synset serves as an ideal analytical object for formulating perceptual reconstruction, rather than an additional codec component that is explicitly fixed or jointly learned.} 
Based on this principle, the design objectives of perceptual codecs can be reformulated as follows:

\begin{itemize}
    \item \textbf{For the synonymous encoder (i.e., the perceptual encoder)}, only the information shared by all samples within the ideal synset $\boldsymbol{\mathcal{X}}=\left\{\boldsymbol{x}_i\right\}$ needs to be encoded, which means that only the semantic information characterized by this synset should be represented and compressed, so as to maximize compression efficiency.
    
    \item \textbf{For the synonymous decoder (i.e., the perceptual decoder)}, it is only required to reconstruct any sample $\boldsymbol{x}_i$ within the synset $\boldsymbol{\mathcal{X}}=\left\{\boldsymbol{x}_i\right\}$ that is perceptually similar to the source signal $\boldsymbol{x}$, rather than recovering the source sample $\boldsymbol{x}$ itself as accurately as possible.
\end{itemize}

\begin{figure*}[t]
\centering
\includegraphics[width=\textwidth]{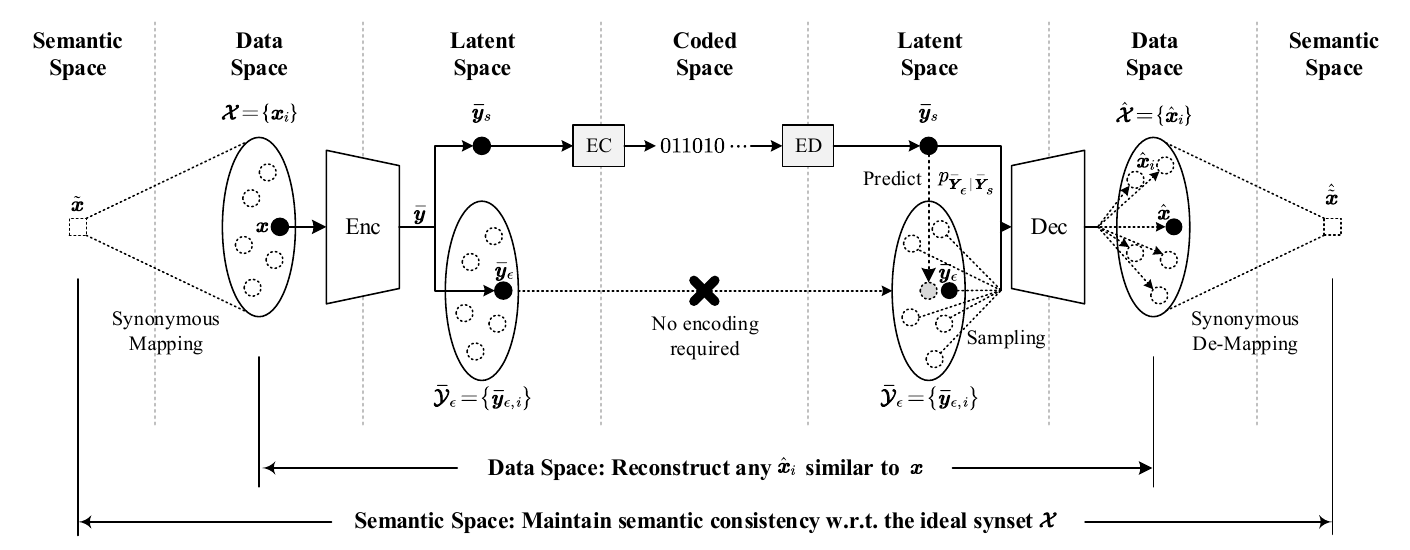}
\caption{synonymous source coding architecture for semantic-preserving signal compression and reconstruction.} 
\label{fig_1}
\end{figure*}

Figure \ref{fig_1} illustrates a synonymous source coding architecture that satisfies the above design objectives. Since semantic information is invisible and cannot be manipulated directly, it can only be represented through syntactic forms. Accordingly, the synonymous encoder first maps the input variable $\boldsymbol{X}$ into a latent quantized representation variable $\bar{\boldsymbol{Y}}$ to capture the semantic information contained in the signal. However, since any value of the latent variable remains a syntactic representation, different samples $\boldsymbol{x}_i$ within the same synset $\boldsymbol{\mathcal{X}}$ in the data space generally correspond to distinct latent representations $\bar{\boldsymbol{y}}_i$ in the latent space. In this case, the information shared by all samples in this synset can be captured by a synonymous representation $\bar{\boldsymbol{y}}_s$, whose semantic self-information is given by
\begin{equation}\label{implicitSynonymousCodingRate}
-\log p_{\bar{\boldsymbol{Y}}_s}(\bar{\boldsymbol{y}}_s) = - \log \sum_{\bar{\boldsymbol{y}}_i \in \bar{\boldsymbol{\mathcal{Y}}}} p_{\bar{\boldsymbol{Y}}}(\bar{\boldsymbol{y}}_i),
\end{equation}
where the latent synset $\bar{\boldsymbol{\mathcal{Y}}} = \left\{\bar{\boldsymbol{y}}_i\right\}$ composes all the possible latent representation samples determined by the synonymous encoder and any signal sample $\boldsymbol{x}_i \in \boldsymbol{\mathcal{X}}$.  
Meanwhile, the deviation of each $\bar{\boldsymbol{y}}_i$ from the synonymous representation $\bar{\boldsymbol{y}}_s$ can be represented by a detailed representation $\bar{\boldsymbol{y}}_{\epsilon,i}$ in the latent space, whose conditional self-information is given by $-\log p_{\bar{\boldsymbol{Y}}_{\epsilon}|\bar{\boldsymbol{Y}}_s}(\bar{\boldsymbol{y}}_{\epsilon,i} \,|\, \bar{\boldsymbol{y}}_s)$. Based on this rule, the representation $\bar{\boldsymbol{y}}$ produced by the synonymous encoder can be decomposed into a synonymous representation $\bar{\boldsymbol{y}}_s$ and a detailed representation $\bar{\boldsymbol{y}}_{\epsilon}$, in which all the possible $\bar{\boldsymbol{y}}_{\epsilon,j}$ constitute the detailed representation set $\boldsymbol{\mathcal{Y}}_{\epsilon}$. The above process can be expressed as follows:
\begin{equation}
    \{\bar{\boldsymbol{y}}_s, \, \bar{\boldsymbol{y}}_{\epsilon}\} = g_a\left(\boldsymbol{x}\,;\boldsymbol{\phi}^*\right),
\end{equation}
in which $g_a\left(\cdot\,;{ \boldsymbol{\phi}}\right)$ denotes the parametric synonymous encoder with the optimal parameters $\boldsymbol{\phi}^*$. These optimal parameters establish a synonymous mapping across the semantic, data, and latent spaces, denoted by 
\begin{equation}
    \Phi: \tilde{\boldsymbol{x}} \,\, \xrightarrow[]{f} \,\, \boldsymbol{\mathcal{X}} = \{\boldsymbol{x}_i\} \,\, \xrightarrow[]{\boldsymbol{\phi}^{*}} \,\, \bar{\boldsymbol{\mathcal{Y}}} = \{\bar{\boldsymbol{y}}_i\}, 
\end{equation}
where each $\bar{\boldsymbol{y}}_i$ can be decomposed into a synonymous representation $\bar{\boldsymbol{y}}_s$ and a detailed representation $\bar{\boldsymbol{y}}_{\epsilon,i}$.

According to the design principle at the encoder, only the synonymous representation $\bar{\boldsymbol{y}}_s$, which captures the semantic information, is allowed to be entropy-coded to produce the bitstream $\boldsymbol{b}$, thereby minimizing the coding rate, and is then transmitted to the decoder, i.e.,
\begin{equation}
    \boldsymbol{b} = \text{EC}(\bar{\boldsymbol{y}}_s),
\end{equation}
in which $\text{EC}(\cdot)$ denotes the entropy coding process. Since the detailed representation $\bar{\boldsymbol{y}}_{\epsilon}$ does not convey the semantic information characterized by the synset $\boldsymbol{\mathcal{X}}$, it does not need to be transmitted to the decoder and can therefore be discarded at the encoder.

For the synonymous decoder, after receiving the coded bitstream $\boldsymbol{b}$, the synonymous representation $\bar{\boldsymbol{y}}_s$ is first recovered through entropy decoding, i.e.,
\begin{equation}
\bar{\boldsymbol{y}}_s = \text{ED}(\boldsymbol{b}),
\end{equation}
where $\text{ED}(\cdot)$ denotes the entropy decoding process corresponding to $\text{EC}(\cdot)$. Subsequently, by exploiting the correlation between the synonymous representation $\bar{\boldsymbol{y}}_s$ and the detailed representation, the decoder predicts and samples the detailed representation to obtain an arbitrary reconstructed detailed representation $\bar{\boldsymbol{y}}_{\epsilon,j} \in \bar{\boldsymbol{\mathcal{Y}}}_{\epsilon}$. This detailed representation is then combined with $\bar{\boldsymbol{y}}_s$ to form a corresponding reconstructed representation $\hat{\boldsymbol{y}}_j$ belonging to the representation synset $\bar{\boldsymbol{\mathcal{Y}}}$. Finally, the synonymous decoder maps the reconstructed representation $\hat{\boldsymbol{y}}_j$ in the latent space into the corresponding reconstructed signal sample $\hat{\boldsymbol{x}}_j$ in the data space. At the same time, this space transformation also maps the representation synset $\bar{\boldsymbol{\mathcal{Y}}}$ into the reconstructed synset $\hat{\boldsymbol{\mathcal{X}}}$ in the data space, namely, the set composed of all reconstructed signal samples $\hat{\boldsymbol{x}}_j$. Therefore, the reconstruction process from the synonymous representation $\bar{\boldsymbol{y}}_s$ and the predicted-and-sampled detailed representation $\hat{\boldsymbol{y}}_{\epsilon,j}$ to the reconstructed signal sample $\hat{\boldsymbol{x}}_j$ can be expressed as
\begin{equation}
\hat{\boldsymbol{x}}_j = g_s(\bar{\boldsymbol{y}}_s, \hat{\boldsymbol{y}}_{\epsilon,j} \,; \boldsymbol{\theta}^*),
\end{equation}
in which $g_s\left(\cdot\,;\boldsymbol{\theta}\right)$ denotes the parametric synonymous decoder with the optimal parameters $\boldsymbol{\theta}^*$. These optimal parameters enable a synonymous de-mapping across latent space, data space, and semantic space, denoted by
\begin{equation}
    \Theta: \bar{\boldsymbol{\mathcal{Y}}} = \{\bar{\boldsymbol{y}}_i\} \,\, \xrightarrow[]{\boldsymbol{\theta}^{*}} \,\,  \boldsymbol{\mathcal{X}} = \{\boldsymbol{x}_i\} \,\, \xrightarrow[]{f^{-1}} \,\, \tilde{\boldsymbol{x}}, 
\end{equation}
which serves as the ideal inverse of the synonymous mapping $\Phi$.

For a well-designed synonymous codec, any reconstructed sample $\hat{\boldsymbol{x}}_j$ is required to lie within the ideal synset $\boldsymbol{\mathcal{X}}$. This implies that the reconstructed synset $\hat{\boldsymbol{\mathcal{X}}}$ at the receiver completely overlaps with the ideal synset $\boldsymbol{\mathcal{X}}$ at the transmitter, i.e., $\hat{\boldsymbol{\mathcal{X}}} = \boldsymbol{\mathcal{X}}$. In this case, any sampled reconstruction $\hat{\boldsymbol{x}}_j$ can be regarded as perceptually similar to the source signal sample $\boldsymbol{x}$ to a certain extent.

However, in practical semantic source coding scenarios for arbitrary input signals, the size of the corresponding synset in the data space is difficult to determine precisely, and the signal samples within that set cannot be exhaustively enumerated or fully covered. As a result, it is difficult to identify when the two synsets completely overlap. Under such circumstances, the ideal synset $\boldsymbol{\mathcal{X}}$ cannot be constructed directly and explicitly; correspondingly, it is also difficult to determine under what conditions the reconstructed synset $\hat{\boldsymbol{\mathcal{X}}}$ can fully overlap with the ideal synset $\boldsymbol{\mathcal{X}}$. This is precisely the core issue that the subsequent theoretical analysis aims to address.

\section{Synonymous Variational Inference Framework}\label{SectionIV}

To address how the reformulation of perceptual compression guides the optimization of a synonymous codec, new theoretical tools are required under the synonymous modeling framework. This section introduces a theoretical framework called \textbf{Synonymous Variational Inference} (SVI) to facilitate such analysis. We first define synonymous variational inference, specifying its fundamental objects and properties. Next, by drawing on the ELBO from classical variational inference, we introduce the synonymous variational lower bound (SVLBO) and determine its basic form through Bayesian analysis. Finally, based on the design objectives of perceptual compression, we establish the optimization constraints for synonymous variational inference.

\subsection{Definition of Synonymous Variational Inference}

As discussed above, a synonymous codec is expected to realize the ideal mapping $\Phi$ and its corresponding inverse $\Theta$, thereby enabling an optimal latent representation of the semantic information characterized by the ideal synset, as well as full overlap between the source synset $\boldsymbol{\mathcal{X}}$ and the reconstructed synset $\hat{\boldsymbol{\mathcal{X}}}$ at the receiver. In this way, the samples reconstructed by the synonymous decoder $\hat{\boldsymbol{x}}_j$ can lie within the ideal synset $\boldsymbol{\mathcal{X}}$ and thus satisfy perceptual similarity with the source sample $\boldsymbol{x}$. To obtain such an optimal representation, one possible approach is to follow the classical variational inference framework and use a parameterized variational distribution $q_{\boldsymbol{\phi}}(\breve{\boldsymbol{y}}|\boldsymbol{x})$ to approximate the true semantic posterior distribution $p_{\tilde{\boldsymbol{X}}|\boldsymbol{\mathcal{X}}}(\tilde{\boldsymbol{x}}|\boldsymbol{\mathcal{X}})$, which can be expressed as
\begin{equation}
    \min_{\boldsymbol{\phi}} \mathbb{E}_{\boldsymbol{x} \sim p_{\boldsymbol{X}}\left(\boldsymbol{x}\right)} D_{\text{KL},s}\left[q_{\boldsymbol{\phi}}||p_{\tilde{\boldsymbol{X}}|\boldsymbol{\mathcal{X}}}\right],
\end{equation}
in which the parameterized variational distribution $q_{\boldsymbol{\phi}}(\breve{\boldsymbol{y}}| \boldsymbol{x}) = \prod_i \mathcal{U}(\breve{y}_i \,|\, y_i - \frac{1}{2}, y_i + \frac{1}{2})$ \cite{balle2018variational} is conditioned on the source signal sample $\boldsymbol{x}$ because the corresponding synonymous encoder typically has access only to the input sample, whereas other perceptually similar signal samples are generally not directly available. By contrast, the true posterior distribution $p_{\tilde{\boldsymbol{X}}|\boldsymbol{\mathcal{X}}}(\tilde{\boldsymbol{x}}|\boldsymbol{\mathcal{X}})$ is conditioned on the ideal synset $\boldsymbol{\mathcal{X}}$, because the semantic information $\tilde{\boldsymbol{x}}$ can be characterized by the ideal synset $\boldsymbol{\mathcal{X}}$.

Since the semantic variable value cannot be operable, this semantic distribution can be replaced by an effective and operable true posterior $p_{\breve{\boldsymbol{Y}}_s|\boldsymbol{\mathcal{X}}}(\breve{\boldsymbol{y}}_s|\boldsymbol{\mathcal{X}})$, in which $\breve{\boldsymbol{y}}_s$ is a syntactic latent representation that captures the semantic information $\tilde{\boldsymbol{x}}$. This makes this minimum optimization criterion equivalent to
\begin{equation}\label{form0}
    \min_{\boldsymbol{\phi}} \mathbb{E}_{\boldsymbol{x} \sim p_{\boldsymbol{X}}\left(\boldsymbol{x}\right)} D_{\text{KL},s}\left[q_{\boldsymbol{\phi}}||p_{\breve{\boldsymbol{Y}}_s|\boldsymbol{\mathcal{X}}}\right].
\end{equation}

It should be noted that, under this optimization direction, the distribution-approximation process operates on the internal parameters $\boldsymbol{\phi}$ of the synonymous encoder. Once optimized, these parameters become the optimal parameters $\boldsymbol{\phi}^*$ that define the synonymous mapping $\Phi$ from the data space to the latent space. In this way, lossless representation of semantic information can be achieved through the optimal synonymous mapping. Accordingly, the optimization direction characterized by Eq. \eqref{form0} inherently corresponds to the optimization of the synonymous mapping.

\begin{figure*}[t]
\centering
\includegraphics[width=0.85\textwidth]{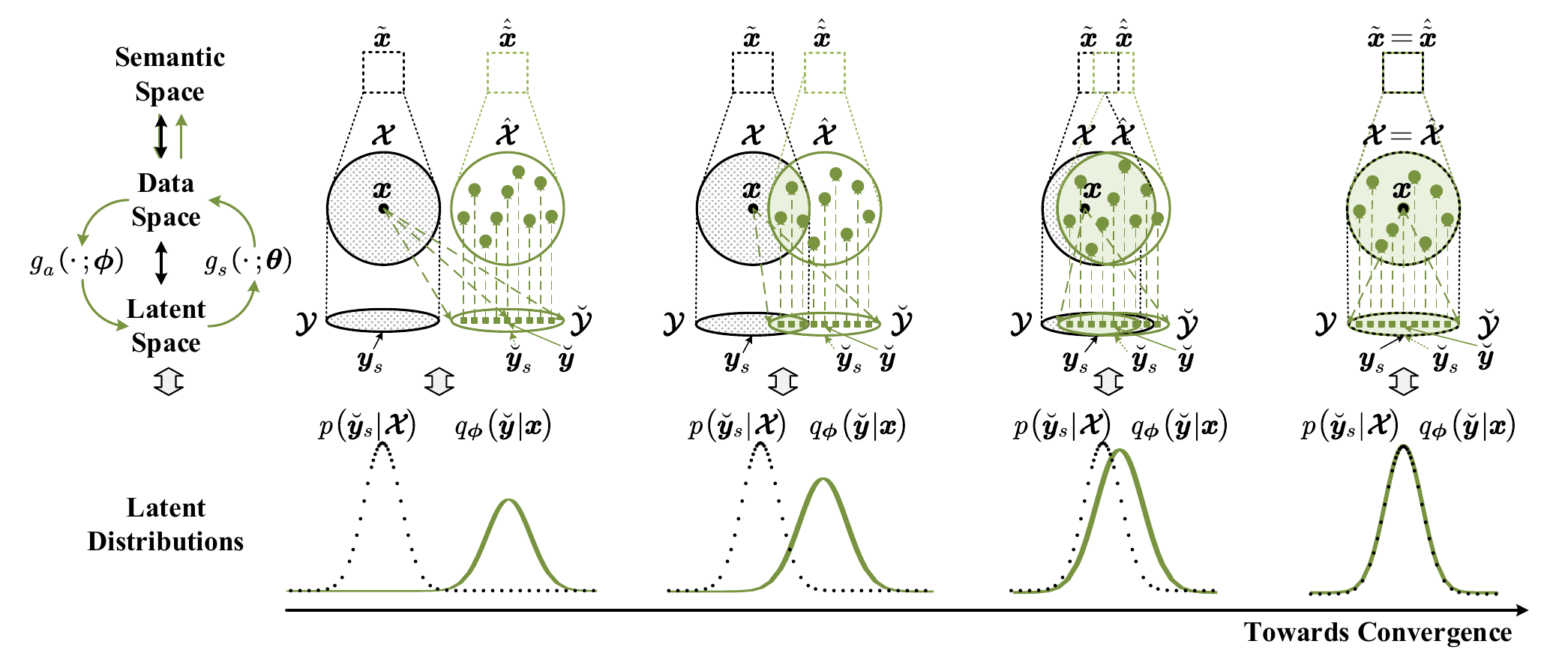}
\caption{An illustration of the optimization criterion of synonymous variational inference.} 
\label{fig_2}
\end{figure*}

Figure \ref{fig_2} illustrates the optimization direction of the synonymous codec parameters based on this distribution-approximating process. The ideal synset $\boldsymbol{\mathcal{X}}$ corresponds to a true synonymous posterior distribution $p_{\breve{\boldsymbol{Y}}_s|\boldsymbol{\mathcal{X}}}\left(\breve{\boldsymbol{y}}_s|\boldsymbol{\mathcal{X}}\right)$, while the distribution-approximating process aims to drive the parameterized variational distribution $q_{\boldsymbol{\phi}}(\breve{\boldsymbol{y}}|\boldsymbol{x})$ toward this true posterior. In the data space, this parameterized variational distribution corresponds to the reconstructed synset $\hat{\boldsymbol{\mathcal{X}}}$. As the two distributions become closer, the reconstructed synset $\hat{\boldsymbol{\mathcal{X}}}$ also gradually approaches the ideal synset. When the two synsets fully overlap, the design objective of the optimal synonymous mapping will be achieved.
{  At this ideal optimum, the learned generative decoder can produce admissible reconstructed samples within the ideal synset, rather than being restricted to reconstructing only the original source sample.}

\begin{remark}
The posterior distribution involved in the above distribution-approximation problem differs from that considered in classical variational inference. {  Here, \(\breve{\boldsymbol{Y}}_s\) denotes the continuous relaxed counterpart of the quantized synonymous representation \(\bar{\boldsymbol{Y}}_s\), which is used for entropy coding. Thus, the coding-rate interpretation is associated with \(\bar{\boldsymbol{Y}}_s\), while \(\breve{\boldsymbol{Y}}_s\) is used for differentiable variational analysis.} In classical variational inference, the posterior distribution $p_{\breve{\boldsymbol{Y}}|\boldsymbol{X}}(\breve{\boldsymbol{y}}|\boldsymbol{x})$ concerns the latent representation vector $\breve{\boldsymbol{y}}$ conditioned on the original data sample $\boldsymbol{x}$. In contrast, the synonymous posterior distribution $p_{\breve{\boldsymbol{Y}}_s|\boldsymbol{\mathcal{X}}}\left(\breve{\boldsymbol{y}}_s|\boldsymbol{\mathcal{X}}\right)$ focuses only on the synonymous representation component $\breve{\boldsymbol{y}}_s$ in the latent space that corresponds to the semantic information $\tilde{\boldsymbol{x}}$ characterized by the ideal synset $\boldsymbol{\mathcal{X}}$. The detailed representation $\breve{\boldsymbol{y}}_{\epsilon,i}$ corresponding to any sample $\boldsymbol{x}_i \in \boldsymbol{\mathcal{X}}$ is not taken into account. Therefore, the use of classical variational inference becomes limited when analyzing this optimization problem.
\end{remark}

To address this limitation, this section incorporates the synonymous constraint into classical variational inference and proposes a synonymous variational inference framework to analyze the proposed optimization problem. We provide the following definition:

\begin{definition}[\textbf{Synonymous Variational Inference  (SVI)}]
Synonymous variational inference is an approximate inference framework established under synonymous constraints. It employs a parameterized variational distribution to optimally approximate the true synonymous posterior that characterizes semantic information, thereby enabling the optimization of synonymous mapping and its inverse mapping that support lossless representation and identification of semantic information at the syntactic level.
\end{definition}


Based on the above definition, the synonymous distribution-approximation problem admits two alternative formulations, depending on whether the variational distribution is defined over the full representation or only over its synonymous component. The following theorem shows that these two formulations are equivalent with respect to the optimization of the synonymous representation.

\begin{theorem}[\textbf{Equivalent Forms of Synonymous Distribution Approximation}]
Consider the problem of approximating the true synonymous posterior 
$p_{\breve{\boldsymbol{Y}}_s|\boldsymbol{\mathcal{X}}}
(\breve{\boldsymbol{y}}_s|\boldsymbol{\mathcal{X}})$.
Two variational formulations can be defined as follows:

(i) A \textbf{partial semantic KL divergence} formulation based on the full representation $\breve{\boldsymbol{y}}$:
\begin{equation}\label{form1}
    \min_{\boldsymbol{\phi}} \mathbb{E}_{\boldsymbol{x} \sim p_{\boldsymbol{X}}(\boldsymbol{x})}
    D_{\text{KL},s}\!\left[
    q_{\boldsymbol{\phi}}(\breve{\boldsymbol{y}}|\boldsymbol{x})
    \,||\|\,
    p_{\breve{\boldsymbol{Y}}_s|\boldsymbol{\mathcal{X}}}
    (\breve{\boldsymbol{y}}_s|\boldsymbol{\mathcal{X}})
    \right].
\end{equation}

(ii) A \textbf{full semantic KL divergence} formulation based only on the synonymous representation $\breve{\boldsymbol{y}}_s$:
\begin{equation}\label{form2}
    \min_{\boldsymbol{\phi}} \mathbb{E}_{\boldsymbol{x} \sim p_{\boldsymbol{X}}(\boldsymbol{x})}
    D_{\text{KL}}\!\left[
    q_{\boldsymbol{\phi}}(\breve{\boldsymbol{y}}_s|\boldsymbol{x})
    \,||\,
    p_{\breve{\boldsymbol{Y}}_s|\boldsymbol{\mathcal{X}}}
    (\breve{\boldsymbol{y}}_s|\boldsymbol{\mathcal{X}})
    \right].
\end{equation}

Then, these two formulations are equivalent with respect to the optimization of the synonymous representation $\breve{\boldsymbol{y}}_s$.
\end{theorem}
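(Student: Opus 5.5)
The plan is to decompose the variational distribution over the full representation into its synonymous and detailed components and expand the partial semantic KL divergence accordingly. Write $\breve{\boldsymbol{y}} = \{\breve{\boldsymbol{y}}_s, \breve{\boldsymbol{y}}_{\epsilon}\}$, mirroring the decomposition $\{\bar{\boldsymbol{y}}_s, \bar{\boldsymbol{y}}_{\epsilon}\} = g_a(\boldsymbol{x};\boldsymbol{\phi})$. Then factor the density by the chain rule:
\begin{equation*}
q_{\boldsymbol{\phi}}(\breve{\boldsymbol{y}}|\boldsymbol{x}) = q_{\boldsymbol{\phi}}(\breve{\boldsymbol{y}}_s|\boldsymbol{x})\, q_{\boldsymbol{\phi}}(\breve{\boldsymbol{y}}_{\epsilon}|\breve{\boldsymbol{y}}_s,\boldsymbol{x}).
\end{equation*}
By the definition of the partial semantic KL divergence in Section~\ref{SectionII_3}, the outer sum over $i_s$ together with the inner sum over $i\in\mathcal{N}_{i_s}$ becomes, in the continuous relaxed latent space, an integral over $\breve{\boldsymbol{y}}_s$ and an inner integral over the detailed component $\breve{\boldsymbol{y}}_{\epsilon}$. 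The denominator $p_{\breve{\boldsymbol{Y}}_s|\boldsymbol{\mathcal{X}}}(\breve{\boldsymbol{y}}_s|\boldsymbol{\mathcal{X}})$ depends only on $\breve{\boldsymbol{y}}_s$, so it plays the role of $p_{\mathcal{X}}(\mathcal{X}_{i_s})$, the mass assigned to the whole latent synset.

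Substituting the factorization into \eqref{form1} and splitting the logarithm gives
\begin{equation*}
D_{\text{KL},s}\!\left[q_{\boldsymbol{\phi}}(\breve{\boldsymbol{y}}|\boldsymbol{x}) \,||\, p_{\breve{\boldsymbol{Y}}_s|\boldsymbol{\mathcal{X}}}\right] = D_{\text{KL}}\!\left[q_{\boldsymbol{\phi}}(\breve{\boldsymbol{y}}_s|\boldsymbol{x}) \,||\, p_{\breve{\boldsymbol{Y}}_s|\boldsymbol{\mathcal{X}}}(\breve{\boldsymbol{y}}_s|\boldsymbol{\mathcal{X}})\right] - h_{q}(\breve{\boldsymbol{Y}}_{\epsilon}|\breve{\boldsymbol{Y}}_s,\boldsymbol{x}).
\end{equation*}
Here $h_q$ denotes the conditional differential entropy of the detailed component under $q_{\boldsymbol{\phi}}$. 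To obtain the first term, integrate out $\breve{\boldsymbol{y}}_{\epsilon}$ in the part of the integrand depending only on $\breve{\boldsymbol{y}}_s$; this marginalization is what makes the synset-level denominator pair with the marginal $q_{\boldsymbol{\phi}}(\breve{\boldsymbol{y}}_s|\boldsymbol{x})$. Taking $\mathbb{E}_{\boldsymbol{x}\sim p_{\boldsymbol{X}}}$, the objective \eqref{form1} equals the objective \eqref{form2} minus an expected conditional entropy term.

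The key step is to show that this extra term does not affect the optimization of the synonymous representation. I would argue this in two ways.

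\emph{Structural argument.} Under the parameterization $q_{\boldsymbol{\phi}}(\breve{\boldsymbol{y}}|\boldsymbol{x}) = \prod_i \mathcal{U}(\breve{y}_i\,|\,y_i-\tfrac12,y_i+\tfrac12)$, every coordinate, including those of $\breve{\boldsymbol{y}}_{\epsilon}$, is uniform on a unit interval given the encoder output. Hence $h_q(\breve{\boldsymbol{Y}}_{\epsilon}|\breve{\boldsymbol{Y}}_s,\boldsymbol{x}) = 0$ for every $\boldsymbol{\phi}$. This is the same cancellation used in \eqref{classicalVI_factorized}, so the two objectives coincide identically.

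\emph{Block-parameter argument.} More generally, split $\boldsymbol{\phi}$ into the parts governing $\breve{\boldsymbol{y}}_s$ and $\breve{\boldsymbol{y}}_{\epsilon}$. The entropy term involves only the conditional law of the detailed component, so the partial minimizers over the $\breve{\boldsymbol{y}}_s$-relevant parameters, and hence the optimal marginal $q_{\boldsymbol{\phi}^*}(\breve{\boldsymbol{y}}_s|\boldsymbol{x})$, are the same for both problems.

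The main obstacle is making the second argument rigorous when the encoder is shared, so that the two blocks of $\boldsymbol{\phi}$ are not actually separable. I would first rely on the uniform-noise case, where the entropy term is a constant, and state equivalence ``with respect to $\breve{\boldsymbol{y}}_s$'' in that sense. I would then remark that in general the detailed component is discarded at the encoder and is not entropy-coded, so it imposes no constraint on the synonymous marginal.
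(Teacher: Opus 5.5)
Your proposal is correct and follows essentially the same route as the paper. Both split $\breve{\boldsymbol{y}}$ into $(\breve{\boldsymbol{y}}_s,\breve{\boldsymbol{y}}_{\epsilon})$, apply the chain rule to $q_{\boldsymbol{\phi}}$, and separate the logarithm into the full semantic KL divergence minus the conditional entropy of the detailed component. Both then argue that this entropy vanishes because of the unit-width uniform noise.

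Your block-parameter argument is unnecessary. As stated it would also not hold in general, because the expected conditional entropy is averaged over $q_{\boldsymbol{\phi}}(\breve{\boldsymbol{y}}_s|\boldsymbol{x})$ and therefore also depends on the synonymous parameters. Your structural argument alone is exactly what the paper uses.
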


\begin{proof}
We start from the partial semantic KL divergence formulation, i.e., Eq.~\eqref{form1}, which most closely aligns with the principles of semantic information processing, and verify its consistency with the full semantic KL divergence formulation, i.e., Eq.~\eqref{form2}, in terms of the optimization direction for the synonymous representation:
\begin{equation}
    \begin{aligned}
        & \mathbb{E}_{\boldsymbol{x} \sim p_{\boldsymbol{X}}(\boldsymbol{x})} D_{\text{KL},{s}}\left[q_{\boldsymbol{\phi}}\left(\breve{\boldsymbol{y}}|\boldsymbol{x}\right)\,||\,p_{\breve{\boldsymbol{Y}}_s|\boldsymbol{\mathcal{X}}}\left(\breve{\boldsymbol{y}}_s|\boldsymbol{\mathcal{X}}\right)\right] \\
        &\quad = \int p_{\boldsymbol{X}}(\boldsymbol{x}) \int q_{\boldsymbol{\phi}}\left(\breve{\boldsymbol{y}}|\boldsymbol{x}\right) \log \frac{q_{\boldsymbol{\phi}}\left(\breve{\boldsymbol{y}}|\boldsymbol{x}\right)}{p_{\breve{\boldsymbol{Y}}_s|\boldsymbol{\mathcal{X}}}\left(\breve{\boldsymbol{y}}_s|\boldsymbol{\mathcal{X}}\right)} d \breve{\boldsymbol{y}} \, d \boldsymbol{x} \\
        &\,\,\,\,\, \overset{(\mathrm{a})}{=} \int p_{\boldsymbol{X}}(\boldsymbol{x}) \iint q_{\boldsymbol{\phi}}\left(\breve{\boldsymbol{y}}_s, \breve{\boldsymbol{y}}_{\epsilon} | \boldsymbol{x}\right) \log \frac{q_{\boldsymbol{\phi}}\left(\breve{\boldsymbol{y}}_s, \breve{\boldsymbol{y}}_{\epsilon}|\boldsymbol{x}\right)}{p_{\breve{\boldsymbol{Y}}_s|\boldsymbol{\mathcal{X}}}\left(\breve{\boldsymbol{y}}_s|\boldsymbol{\mathcal{X}}\right)} d \breve{\boldsymbol{y}}_s \, d \breve{\boldsymbol{y}}_{\epsilon} \, d \boldsymbol{x} \\
        &\,\,\,\,\, \overset{(\mathrm{b})}{=} \int p_{\boldsymbol{X}}(\boldsymbol{x}) \int q_{\boldsymbol{\phi}}\left(\breve{\boldsymbol{y}}_s | \boldsymbol{x}\right) \int q_{\boldsymbol{\phi}}\left( \breve{\boldsymbol{y}}_{\epsilon} | \boldsymbol{x}, \breve{\boldsymbol{y}}_s\right) \log \frac{q_{\boldsymbol{\phi}}\left(\breve{\boldsymbol{y}}_s|\boldsymbol{x}\right) q_{\boldsymbol{\phi}}\left(\breve{\boldsymbol{y}}_{\epsilon}|\boldsymbol{x}, \breve{\boldsymbol{y}}_s\right)}{p_{\breve{\boldsymbol{Y}}_s|\boldsymbol{\mathcal{X}}}\left( \breve{\boldsymbol{y}}_s | \boldsymbol{\mathcal{X}} \right)} d \breve{\boldsymbol{y}}_s \, d \breve{\boldsymbol{y}}_{\epsilon} \, d \boldsymbol{x} \\
        &\quad = \int p_{\boldsymbol{X}}(\boldsymbol{x}) \int q_{\boldsymbol{\phi}}\left(\breve{\boldsymbol{y}}_s | \boldsymbol{x}\right) \log \frac{q_{\boldsymbol{\phi}}\left(\breve{\boldsymbol{y}}_s|\boldsymbol{x}\right)}{p_{\breve{\boldsymbol{Y}}_s|\boldsymbol{\mathcal{X}}}\left(\breve{\boldsymbol{y}}_s | \boldsymbol{\mathcal{X}} \right)} d \breve{\boldsymbol{y}}_s \, d \boldsymbol{x} \,\, + \\ 
        & \quad\quad\quad\quad\quad\quad \int p_{\boldsymbol{X}}(\boldsymbol{x}) \int q_{\boldsymbol{\phi}}\left(\breve{\boldsymbol{y}}_s | \boldsymbol{x}\right) \int q_{\boldsymbol{\phi}}\left( \breve{\boldsymbol{y}}_{\epsilon} | \boldsymbol{x}, \breve{\boldsymbol{y}}_s\right) \log q_{\boldsymbol{\phi}}\left( \breve{\boldsymbol{y}}_{\epsilon} | \boldsymbol{x}, \breve{\boldsymbol{y}}_s\right) d \breve{\boldsymbol{y}}_s \, d \breve{\boldsymbol{y}}_{\epsilon} \, d \boldsymbol{x} \\ 
        &\quad = \mathbb{E}_{\boldsymbol{x} \sim p_{\boldsymbol{X}}(\boldsymbol{x})} D_{\text{KL}}\left[q_{\boldsymbol{\phi}}\left(\breve{\boldsymbol{y}}_s | \boldsymbol{x}\right) \, || \, p\left( \breve{\boldsymbol{y}}_s | \boldsymbol{\mathcal{X}} \right) \right] - H_{\boldsymbol{\phi}}\left(\breve{\boldsymbol{Y}}_{\epsilon}|\boldsymbol{X}, \breve{\boldsymbol{Y}}_s\right)
    \end{aligned} 
    \end{equation}
    where $\mathrm{(a)}$ decomposes the latent representation $\breve{\boldsymbol{y}}$ into a shared synonymous component $\breve{\boldsymbol{y}}_s$ and a detailed component $\breve{\boldsymbol{y}}_{\epsilon}$ in the variational distribution $q_{\boldsymbol{\phi}}(\breve{\boldsymbol{y}}|\boldsymbol{x})$, and { \(\mathrm{(b)}\) follows from the chain rule of conditional probability.}

    Based on the above derivation, the partial semantic KL divergence formulation Eq. \eqref{form1} can be decomposed into a full semantic KL divergence formulation Eq. \eqref{form2} and a conditional entropy of the variational distribution over the detailed representation. In the learned compression setting considered in this paper, when the source sample $\boldsymbol{x}$ and the synonymous representation $\breve{\boldsymbol{y}}_s$ are given, the uncertainty in the detailed representation $\breve{\boldsymbol{y}}_{\epsilon}$ arises only from the unit-width uniform noise $\mathcal{U}\left(-\frac{1}{2}, \frac{1}{2}\right)$ added to $\boldsymbol{y}_{\epsilon}$. Therefore, the corresponding conditional differential entropy is zero and does not affect the optimization of the model parameters. Therefore,
    \begin{equation}
        \min_{\boldsymbol{\phi}} \mathbb{E}_{\boldsymbol{x} \sim p_{\boldsymbol{X}}(\boldsymbol{x})} D_{\text{KL},{s}}\left[q_{\boldsymbol{\phi}}\left(\breve{\boldsymbol{y}}|\boldsymbol{x}\right)\,||\,p_{\breve{\boldsymbol{Y}}_s|\boldsymbol{\mathcal{X}}}\left(\breve{\boldsymbol{y}}_s|\boldsymbol{\mathcal{X}}\right)\right]  \Longleftrightarrow \min_{\boldsymbol{\phi}} \mathbb{E}_{\boldsymbol{x} \sim p_{\boldsymbol{X}}(\boldsymbol{x})} D_{\text{KL}}\left[q_{\boldsymbol{\phi}}\left(\breve{\boldsymbol{y}}_s | \boldsymbol{x}\right) \, || \, p_{\breve{\boldsymbol{Y}}_s|\boldsymbol{\mathcal{X}}}\left( \breve{\boldsymbol{y}}_s | \boldsymbol{\mathcal{X}} \right) \right],
    \end{equation}
    thus we prove the theorem.
\end{proof}

Since the full semantic KL divergence more closely resembles the classical KL divergence, the following analysis focuses on this form. However, unlike the classical case, its conditioning is on a set of signal samples rather than a single sample. We next show that this full semantic KL divergence remains non-negative, thereby preserving a key property of the classical KL divergence.
{
 
\begin{theorem}[\textbf{Non-negativity of full semantic KL divergence}]\label{Non-negativity}
For each source sample \(\boldsymbol{x}\) and its corresponding ideal synset \(\boldsymbol{\mathcal X}\), assume that
\(q_{\boldsymbol{\phi}}(\breve{\boldsymbol{y}}_s|\boldsymbol{x})\) and
\(p_{\breve{\boldsymbol{Y}}_s|\boldsymbol{\mathcal X}}(\breve{\boldsymbol{y}}_s|\boldsymbol{\mathcal X})\)
are probability densities defined on the same space of \(\breve{\boldsymbol{Y}}_s\) with respect to the same reference measure \(d\breve{\boldsymbol{y}}_s\). Moreover, assume that
\[
q_{\boldsymbol{\phi}}(\cdot|\boldsymbol{x})
\ll
p_{\breve{\boldsymbol{Y}}_s|\boldsymbol{\mathcal X}}(\cdot|\boldsymbol{\mathcal X}),
\]
or equivalently,
\[
\operatorname{supp}\big(q_{\boldsymbol{\phi}}(\cdot|\boldsymbol{x})\big)
\subseteq
\operatorname{supp}\big(p_{\breve{\boldsymbol{Y}}_s|\boldsymbol{\mathcal X}}(\cdot|\boldsymbol{\mathcal X})\big).
\]
Then, the full semantic KL divergence for synonymous representation optimization satisfies non-negativity, i.e.,
\begin{equation}
    D_{\text{KL}} \left[q_{\boldsymbol{\phi}} \left(\breve{\boldsymbol{y}}_s | \boldsymbol{x}\right) \, || \, p_{\breve{\boldsymbol{Y}}_s|\boldsymbol{\mathcal{X}}}\left( \breve{\boldsymbol{y}}_s | \boldsymbol{\mathcal{X}} \right) \right] \ge 0.
\end{equation}
\end{theorem}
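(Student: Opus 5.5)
This is Gibbs' inequality for two densities on the same space; the only subtlety compared with the classical case is that the second argument is conditioned on the ideal synset $\boldsymbol{\mathcal X}$ rather than on a sample. So the first step is to fix $\boldsymbol{x}$ and $\boldsymbol{\mathcal X}$ and regard $q(\cdot):=q_{\boldsymbol{\phi}}(\cdot|\boldsymbol{x})$ and $p(\cdot):=p_{\breve{\boldsymbol{Y}}_s|\boldsymbol{\mathcal X}}(\cdot|\boldsymbol{\mathcal X})$ as two ordinary probability densities on the space of $\breve{\boldsymbol{Y}}_s$. Both are taken with respect to the common reference measure $d\breve{\boldsymbol{y}}_s$, as the hypothesis guarantees. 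Once the conditioning variables are frozen, $\boldsymbol{\mathcal X}$ plays no further role, and this is exactly why the full semantic form, unlike the partial semantic KL divergence, behaves like a classical divergence.

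Next I restrict the integral to $\mathcal S:=\operatorname{supp}(q)$. The absolute continuity assumption $q\ll p$ gives $p>0$ $q$-a.e. on $\mathcal S$, so the ratio $p/q$ is well defined and finite $q$-a.e. I then write
\[
-D_{\text{KL}}[q\,||\,p]=\int_{\mathcal S} q(\breve{\boldsymbol{y}}_s)\log\frac{p(\breve{\boldsymbol{y}}_s)}{q(\breve{\boldsymbol{y}}_s)}\,d\breve{\boldsymbol{y}}_s
\le \log\int_{\mathcal S} q(\breve{\boldsymbol{y}}_s)\frac{p(\breve{\boldsymbol{y}}_s)}{q(\breve{\boldsymbol{y}}_s)}\,d\breve{\boldsymbol{y}}_s
=\log\int_{\mathcal S} p(\breve{\boldsymbol{y}}_s)\,d\breve{\boldsymbol{y}}_s\le \log 1=0,
\]
where the first inequality is Jensen's inequality for the concave function $\log$ under the probability measure $q\,d\breve{\boldsymbol{y}}_s$. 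The last step uses that $p$ integrates to one over the whole space.

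Alternatively, to avoid the integrability caveats of Jensen, I can use the pointwise bound $\log t\le (t-1)/\ln 2$ with $t=p/q$. Integrating against $q$ gives $-D_{\text{KL}}\le \frac{1}{\ln 2}\big(\int_{\mathcal S}p-1\big)\le 0$ directly.

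The main obstacle is measure-theoretic bookkeeping rather than the inequality itself. If $D_{\text{KL}}=+\infty$ the claim is trivial. Otherwise I need the negative part of $q\log(q/p)$ to be integrable so that the integral is well defined. This follows because that negative part is bounded by $q\,(p/q-1)^+/\ln 2\le p/\ln 2$, which is integrable. Points where $q=0$ contribute nothing by the convention $0\log 0=0$. Finally, I will note that equality holds if and only if $q=p$ almost everywhere, which corresponds to full synset overlap $\hat{\boldsymbol{\mathcal X}}=\boldsymbol{\mathcal X}$. Taking the expectation over $\boldsymbol{x}\sim p_{\boldsymbol X}$ then preserves non-negativity of the objective in Eq.~\eqref{form2}.
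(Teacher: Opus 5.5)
Your proof is correct and matches the paper's argument: fix the conditioning, restrict to $\operatorname{supp}(q)$, apply Jensen's inequality to the concave logarithm, and bound $\int_{\operatorname{supp}(q)} p \le 1$. Your integrability check via $\log t \le (t-1)/\ln 2$ and your equality condition add measure-theoretic detail that the paper leaves implicit.
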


\begin{proof}
Although the two densities have different conditioning forms, i.e., one is conditioned on the source sample \(\boldsymbol{x}\) and the other is conditioned on the ideal synset \(\boldsymbol{\mathcal X}\), they are both densities over the same variable \(\breve{\boldsymbol{Y}}_s\). Under the above absolute-continuity assumption, the standard KL non-negativity argument can be applied as follows:
\begin{equation}\label{SVI_KL_non_negativity}
\begin{aligned}
    &D_{\text{KL}}\left[q_{\boldsymbol{\phi}}\left(\breve{\boldsymbol{y}}_s | \boldsymbol{x}\right) \, || \, p_{\breve{\boldsymbol{Y}}_s | \boldsymbol{\mathcal{X}}} \left( \breve{\boldsymbol{y}}_s | \boldsymbol{\mathcal{X}} \right) \right] \\
    &\quad =
    \int q_{\boldsymbol{\phi}}\left(\breve{\boldsymbol{y}}_s | \boldsymbol{x}\right)
    \log
    \frac{
    q_{\boldsymbol{\phi}}\left(\breve{\boldsymbol{y}}_s | \boldsymbol{x}\right)
    }{
    p_{\breve{\boldsymbol{Y}}_s | \boldsymbol{\mathcal{X}}}\left( \breve{\boldsymbol{y}}_s | \boldsymbol{\mathcal{X}} \right)
    }
    d \breve{\boldsymbol{y}}_s \\
    &\quad =
    -
    \int q_{\boldsymbol{\phi}}\left(\breve{\boldsymbol{y}}_s | \boldsymbol{x}\right)
    \log
    \frac{
    p_{\breve{\boldsymbol{Y}}_s | \boldsymbol{\mathcal{X}}}\left( \breve{\boldsymbol{y}}_s | \boldsymbol{\mathcal{X}} \right)
    }{
    q_{\boldsymbol{\phi}}\left(\breve{\boldsymbol{y}}_s | \boldsymbol{x}\right)
    }
    d \breve{\boldsymbol{y}}_s \\
    &\quad \geq
    -
    \log
    \int_{\operatorname{supp}(q_{\boldsymbol{\phi}})}
    q_{\boldsymbol{\phi}}\left(\breve{\boldsymbol{y}}_s | \boldsymbol{x}\right)
    \frac{
    p_{\breve{\boldsymbol{Y}}_s | \boldsymbol{\mathcal{X}}}\left( \breve{\boldsymbol{y}}_s | \boldsymbol{\mathcal{X}} \right)
    }{
    q_{\boldsymbol{\phi}}\left(\breve{\boldsymbol{y}}_s | \boldsymbol{x}\right)
    }
    d \breve{\boldsymbol{y}}_s \\
    &\quad =
    -
    \log
    \int_{\operatorname{supp}(q_{\boldsymbol{\phi}})}
    p_{\breve{\boldsymbol{Y}}_s | \boldsymbol{\mathcal{X}}}\left( \breve{\boldsymbol{y}}_s | \boldsymbol{\mathcal{X}} \right)
    d \breve{\boldsymbol{y}}_s \\
    &\quad \geq
    -\log 1
    =
    0.
\end{aligned}
\end{equation}
Thus, the full semantic KL divergence is non-negative. If the absolute-continuity condition is violated, the KL divergence is understood to be \(+\infty\), and the non-negativity still holds in the extended sense.
\end{proof}
}

When the full semantic KL divergence
$D_{\text{KL}}\left[q_{\boldsymbol{\phi}}\left(\breve{\boldsymbol{y}}_s | \boldsymbol{x}\right) \, || \, p_{\breve{\boldsymbol{Y}}_s | \boldsymbol{\mathcal{X}}} \left( \breve{\boldsymbol{y}}_s | \boldsymbol{\mathcal{X}} \right) \right]$ achieves its minimum value of 0, the parameterized variational distribution $q_{\boldsymbol{\phi}}(\breve{\boldsymbol{y}}_s | \boldsymbol{x})$ overlaps exactly with the true posterior $p_{\breve{\boldsymbol{Y}}_s | \boldsymbol{\mathcal{X}}}(\breve{\boldsymbol{y}}_s | \boldsymbol{\mathcal{X}})$. At this point, the synonymous mapping parameters $\boldsymbol{\phi}$ reach their optimal values $\boldsymbol{\phi}^*$, and the estimated synonymous representation $\breve{\boldsymbol{y}}_s$ obtained via the synonymous encoder $g_a(\cdot;\boldsymbol{\phi}^*)$ matches the true representation $\breve{\boldsymbol{y}}_s$. Consequently, the estimated synset in the representation space, $\breve{\boldsymbol{\mathcal{Y}}}$, fully overlaps with the true set $\boldsymbol{\mathcal{Y}}$. This demonstrates that minimizing the KL divergence
$D_{\text{KL}}\big[q_{\boldsymbol{\phi}}(\breve{\boldsymbol{y}}_s | \boldsymbol{x}) \, || \, p_{\tilde{\boldsymbol{Y}}_s | \boldsymbol{\mathcal{X}}}(\breve{\boldsymbol{y}}_s | \boldsymbol{\mathcal{X}})\big]$
effectively determines the optimal parameters $\boldsymbol{\phi}^*$ for the parameterized synonymous mapping $q_{\boldsymbol{\phi}}(\breve{\boldsymbol{y}}_s | \boldsymbol{x})$.

Furthermore, by introducing the synonymous de-mapping parameters $\boldsymbol{\theta}$ to parameterize the true posterior $p_{\breve{\boldsymbol{Y}}_s | \boldsymbol{\mathcal{X}}}(\breve{\boldsymbol{y}}_s | \boldsymbol{\mathcal{X}})$, the optimization direction in \eqref{form0}, originally defined only for the synonymous mapping parameters $\boldsymbol{\phi}$, can be extended to a joint optimization over both $\boldsymbol{\phi}$ and $\boldsymbol{\theta}$, that is,
\begin{equation}\label{bi-optimize}
        \boldsymbol{\phi}^*, \boldsymbol{\theta}^* = \arg \min_{\boldsymbol{\phi},\, \boldsymbol{\theta}}\, \mathbb{E}_{\boldsymbol{x} \sim p_{\boldsymbol{X}}(\boldsymbol{x})} D_{\text{KL}}\left[q_{\boldsymbol{\phi}}(\breve{\boldsymbol{y}}_s|\boldsymbol{x}) \,||\, p_{\boldsymbol{\theta}}(\breve{\boldsymbol{y}}_s|\boldsymbol{\mathcal{X}})\right].
\end{equation}

\subsection{Synonymous Variational Lower Bound}

Next, we present the basic analytical approach for synonymous variational inference. Following the classical variational inference approach and applying Bayes’ rule, the full semantic KL divergence
$\mathbb{E}_{\boldsymbol{x} \sim p_{\boldsymbol{X}}(\boldsymbol{x})} D_{\text{KL}}\left[q_{\boldsymbol{\phi}}(\breve{\boldsymbol{y}}_s|\boldsymbol{x}) \,||\, p_{\boldsymbol{\theta}}(\breve{\boldsymbol{y}}_s|\boldsymbol{\mathcal{X}})\right]$
can be decomposed as follows:
\begin{equation}
    \begin{aligned}
        D_{\text{KL}}\left[q_{\boldsymbol{\phi}}\left(\breve{\boldsymbol{y}}_s | \boldsymbol{x}\right) \, || \, p_{\boldsymbol{\theta}}\left( \breve{\boldsymbol{y}}_s | \boldsymbol{\mathcal{X}} \right) \right] &= \int q_{\boldsymbol{\phi}}\left(\breve{\boldsymbol{y}}_s | \boldsymbol{x}\right) \log \frac{q_{\boldsymbol{\phi}}\left(\breve{\boldsymbol{y}}_s | \boldsymbol{x}\right)}{p_{\boldsymbol{\theta}}\left( \breve{\boldsymbol{y}}_s | \boldsymbol{\mathcal{X}} \right)} d \breve{\boldsymbol{y}}_s \\
        &= \mathbb{E}_{\breve{\boldsymbol{y}}_s \sim q_{\boldsymbol{\phi}}\left(\breve{\boldsymbol{y}}_s | \boldsymbol{x}\right)} \Bigg[ \log \frac{q_{\boldsymbol{\phi}}\left(\breve{\boldsymbol{y}}_s | \boldsymbol{x}\right)}{p_{\boldsymbol{\theta}}\left( \boldsymbol{\mathcal{X}}, \breve{\boldsymbol{y}}_s \right) / p_{\boldsymbol{\mathcal{X}}}\left(\boldsymbol{\mathcal{X}}\right)} \Bigg] \\
        &= \mathbb{E}_{\breve{\boldsymbol{y}}_s \sim q_{\boldsymbol{\phi}}\left(\breve{\boldsymbol{y}}_s | \boldsymbol{x}\right)} \Bigg[ \log \frac{q_{\boldsymbol{\phi}}\left(\breve{\boldsymbol{y}}_s | \boldsymbol{x}\right)}{p_{\boldsymbol{\theta}}\left( \boldsymbol{\mathcal{X}}, \breve{\boldsymbol{y}}_s \right)} \Bigg] + \log p_{\boldsymbol{\mathcal{X}}}(\boldsymbol{\mathcal{X}}),
    \end{aligned} 
\end{equation}

By rearranging the derivation result, the above expression can be rewritten as the following identity:

\begin{equation}\label{SVI_equation}
    \log p_{\boldsymbol{\mathcal{X}}}(\boldsymbol{\mathcal{X}}) = \mathbb{E}_{\breve{\boldsymbol{y}}_s \sim q_{\boldsymbol{\phi}}(\breve{\boldsymbol{y}}_s |\boldsymbol{x})}\left[\log\frac{p_{\boldsymbol{\theta}} (\boldsymbol{\mathcal{X}},\breve{\boldsymbol{y}}_s )}{q_{\boldsymbol{\phi}}(\breve{\boldsymbol{y}}_s |\boldsymbol{x})}\right] + D_{\text{KL}}\left[q_{\boldsymbol{\phi}}(\breve{\boldsymbol{y}}_s |\boldsymbol{x}) \, || \, p_{\boldsymbol{\theta}}(\breve{\boldsymbol{y}}_s |\boldsymbol{\mathcal{X}})\right],
\end{equation}
in which the left-hand side represents the logarithmic marginal probability of the synset $\boldsymbol{\mathcal{X}}$ associated with the syntactic sample $\boldsymbol{x}$ in the data space, i.e., the logarithmic probability of the semantic information $\tilde{\boldsymbol{x}}$ in the semantic space, corresponding to the evidence term $\log p_{\boldsymbol{X}}(\boldsymbol{x})$ in classical variational inference. Since semantic information is not directly observable and the true distribution over samples within the synset is unknown, this term can be treated as an intractable constant. On the right-hand side, the first term is the logarithmic ratio between the joint distribution $p_{\boldsymbol{\theta}}(\boldsymbol{\mathcal{X}}, \breve{\boldsymbol{y}}_s)$ and the parameterized variational distribution $q_{\boldsymbol{\phi}}(\breve{\boldsymbol{y}}_s \mid \boldsymbol{x})$, while the second term is the full semantic KL divergence to be minimized.

From Theorem \ref{Non-negativity}, as the full semantic KL divergence decreases, the first term on the right-hand side of \eqref{SVI_equation} correspondingly increases; when the KL divergence reaches zero, this term equals the log marginal probability $\log p_{\boldsymbol{\mathcal{X}}}(\boldsymbol{\mathcal{X}})$ on the left-hand side. This implies that, when $D_{\text{KL}}\big[q_{\boldsymbol{\phi}}(\breve{\boldsymbol{y}}_s | \boldsymbol{x}) \, || \, p_{\boldsymbol{\theta}}(\breve{\boldsymbol{y}}_s | \boldsymbol{\mathcal{X}})\big] = 0$, the first term on the right-hand side numerically matches the log probability of the semantic information $\tilde{\boldsymbol{x}}$ characterized by the synset $\boldsymbol{\mathcal{X}}$. 

To this end, by analogy with the definition of the evidence lower bound (ELBO), the first term on the right-hand side of \eqref{SVI_equation} is defined as the \textbf{Synonymous Variational Lower Bound} (SVLBO) for synonymous mapping optimization, i.e.,
\begin{equation}
    \text{SVLBO}(\boldsymbol{\phi}, \boldsymbol{\theta};\boldsymbol{\mathcal{X}}) = \mathbb{E}_{\breve{\boldsymbol{y}}_s\sim q_{\boldsymbol{\phi}}(\breve{\boldsymbol{y}}_s|\boldsymbol{x})}\left[\log\frac{p_{\boldsymbol{\theta}}(\boldsymbol{\mathcal{X}},\breve{\boldsymbol{y}}_s)}{q_{\boldsymbol{\phi}}(\breve{\boldsymbol{y}}_s|\boldsymbol{x})}\right].
\end{equation}

Furthermore, maximizing the SVLBO is equivalent to minimizing a synonymous ``rate-distortion''-like tradeoff:
\begin{equation}\label{SVLBO_decompose}
\begin{aligned}
    \max \,\, \mathbb{E}_{\boldsymbol{x}\sim p_{\boldsymbol{X}(\boldsymbol{x})}} & \mathbb{E}_{\breve{\boldsymbol{y}}_s\sim q_{\boldsymbol{\phi}}(\breve{\boldsymbol{y}}_s|\boldsymbol{x})} \left[\log\frac{p_{\boldsymbol{\theta}}(\boldsymbol{\mathcal{X}},\breve{\boldsymbol{y}}_s)}{q_{\boldsymbol{\phi}}(\breve{\boldsymbol{y}}_s|\boldsymbol{x})}\right] \\
    & \Longleftrightarrow \min \,\, \mathbb{E}_{\boldsymbol{x}\sim p_{\boldsymbol{X}(\boldsymbol{x})}} \mathbb{E}_{\breve{\boldsymbol{y}}_s\sim q_{\boldsymbol{\phi}}(\breve{\boldsymbol{y}}_s|\boldsymbol{x})} \Bigg[\cancelto{0}{\log q_{\boldsymbol{\phi}}(\breve{\boldsymbol{y}}_s|\boldsymbol{x})} \,\, \underset{\substack{\text{Synonymous Likelihood} \\ \text{Term}}}{\underbrace{-\log p_{\boldsymbol{\theta}}(\boldsymbol{\mathcal{X}}|\breve{\boldsymbol{y}}_s)}} \,\, \underset{\substack{\text{Synonymous Coding} \\ \text{Rate}}}{\underbrace{- \log p_{\breve{\boldsymbol{Y}}_s}(\breve{\boldsymbol{y}}_s) }}\Bigg],
\end{aligned}
\end{equation}
i.e., a tradeoff between the synonymous likelihood term and the synonymous coding rate. Herein, the synonymous coding rate refers to the coding rate of the synonymous representation $\breve{\boldsymbol{y}}_s$, which can be estimated via an entropy model applied to the explicitly separated synonymous representation and obtained through entropy coding statistics. The synonymous likelihood term, on the other hand, involves estimation with respect to the ideal synset $\boldsymbol{\mathcal{X}}$, which will be established in Section~\ref{SectionV_I} in the form of a theorem.

\begin{figure}[t]
	\centering{\includegraphics[width=0.45\textwidth]{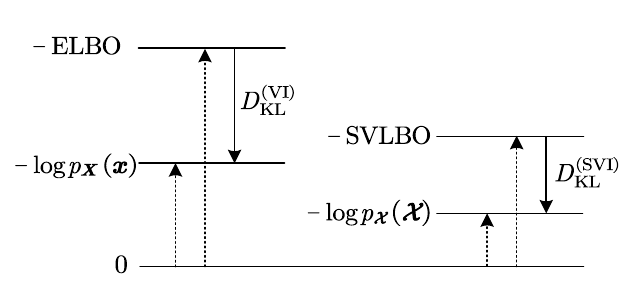}}
  \caption{Comparison of optimization in classical variational inference and synonymous variational inference.}
  \label{fig_3}
\end{figure}

Figure \ref{fig_3} compares the ELBO identity \eqref{VI_equation} in classical variational inference with the SVLBO identity \eqref{SVI_equation} in synonymous variational inference. Classical variational inference and synonymous variational inference approximate the evidence term $\log p_{\boldsymbol{X}}(\boldsymbol{x})$ and the synonymous evidence term $\log p_{\boldsymbol{\mathcal{X}}}(\boldsymbol{\mathcal{X}})$ by maximizing the ELBO and SVLBO, respectively. However, since both objectives are expressed in the form of negative information, their numerical values are not directly intuitive to compare. To facilitate comparison, Fig. \ref{fig_3} presents the negated forms of \eqref{VI_equation} and \eqref{SVI_equation}, respectively, i.e.,
\begin{equation}
    -\log p_{\boldsymbol{X}}(\boldsymbol{x}) = \underset{-\text{ELBO}}{\underbrace{-\mathbb{E}_{\breve{\boldsymbol{y}}\sim q_{\boldsymbol{\phi}}}\left[\log\frac{p_{\boldsymbol{X}, \breve{\boldsymbol{Y}}}(\boldsymbol{x},\breve{\boldsymbol{y}})}{q_{\boldsymbol{\phi}}(\breve{\boldsymbol{y}}|\boldsymbol{x})}\right]}} - \underset{D_{\text{KL}}^{(\text{VI})}}{\underbrace{D_{\text{KL}}\left[q_{\boldsymbol{\phi}}(\breve{\boldsymbol{y}}|\boldsymbol{x}) \, || \, p_{\breve{\boldsymbol{Y}}|\boldsymbol{X}}(\breve{\boldsymbol{y}}|\boldsymbol{x})\right]}},
\end{equation}
and
\begin{equation}
    -\log p_{\boldsymbol{\mathcal{X}}}(\boldsymbol{\mathcal{X}}) = \underset{-\text{SVLBO}}{\underbrace{-\mathbb{E}_{\breve{\boldsymbol{y}}_s \sim q_{\boldsymbol{\phi}}(\breve{\boldsymbol{y}}_s |\boldsymbol{x})}\left[\log\frac{p_{\boldsymbol{\theta}} (\boldsymbol{\mathcal{X}},\breve{\boldsymbol{y}}_s )}{q_{\boldsymbol{\phi}}(\breve{\boldsymbol{y}}_s |\boldsymbol{x})}\right]}} - \underset{D_{\text{KL}}^{\text{(SVI)}}}{\underbrace{D_{\text{KL}}\left[q_{\boldsymbol{\phi}}(\breve{\boldsymbol{y}}_s |\boldsymbol{x}) \, || \, p_{\boldsymbol{\theta}}(\breve{\boldsymbol{y}}_s |\boldsymbol{\mathcal{X}})\right]}},
\end{equation}

In semantic information representation, SVI takes the semantic information quantity $-\log p_{\boldsymbol{\mathcal{X}}}(\boldsymbol{\mathcal{X}})$ as its optimization target. Compared with $-\log p_{\boldsymbol{X}}(\boldsymbol{x})$ in classical variational inference, this target is generally smaller in magnitude. In addition, the optimization direction in synonymous variational inference, namely the KL divergence term $D_{\text{KL}}^{\text{(SVI)}}$, does not require fitting the distribution of the detailed representation, and is therefore typically smaller than its counterpart $D_{\text{KL}}^{\text{(VI)}}$ in classical variational inference. As a result, the gap between the SVLBO and its target $\log p_{\boldsymbol{\mathcal{X}}}(\boldsymbol{\mathcal{X}})$ is relatively smaller, making SVI more likely to approach the optimization target than classical variational inference.

\subsection{Necessary and Sufficient Conditions for Semantic Lossless Representation and Identification}\label{SectionIV_3}

In classical variational inference, the objective of distribution approximation is equivalent to a rate-distortion tradeoff, implying that exact reconstruction of the source sample $\boldsymbol{x}$ is not required; instead, optimization seeks an appropriate tradeoff point to achieve lossy source coding and reconstruction. In contrast, in synonymous variational inference, the objective is to achieve complete overlap between the reconstructed synset and the ideal synset. This requires the synonymous encoder to target semantic information $\tilde{\boldsymbol{x}}$, or the corresponding ideal synset $\boldsymbol{\mathcal{X}}$, for lossless representation and identification, even though reconstruction in the data space remains lossy. 

However, it should be emphasized that, even when the synonymous codec is optimized using the form in Eq. \eqref{bi-optimize}, the goal of lossless representation and identification of semantic information is generally difficult to achieve. This is mainly due to the non-ideal synonymous decoder parameters $\boldsymbol{\theta}$ and the imperfect tradeoff between the synonymous likelihood term and the synonymous coding rate. This implies that the optimization of a synonymous codec in fact requires additional constraints. Accordingly, this subsection establishes the optimization constraints by analyzing the necessary and sufficient conditions for semantic lossless representation and identification within the synonymous coding framework.

\begin{proposition} 
For the parameter optimization of the synonymous codec, the necessary and sufficient conditions for lossless semantic representation and identification are given by
\begin{equation}
        \begin{cases}
        \mathbb{E}_{\boldsymbol{x} \sim p_{\boldsymbol{X}}(\boldsymbol{x})} D_{\text{KL}}\left[q_{\boldsymbol{\phi}}(\breve{\boldsymbol{y}}_s|\boldsymbol{x})\,||\, p_{\boldsymbol{\theta}}(\breve{\boldsymbol{y}}_s|\boldsymbol{\mathcal{X}})\right] = 0, \\
        p_{\boldsymbol{\theta}}(\boldsymbol{\mathcal{X}}|\breve{\boldsymbol{y}}_s) = 1,
    \end{cases}
\end{equation}
{  in which $p_{\boldsymbol{\theta}}(\boldsymbol{\mathcal X}|\breve{\boldsymbol y}_s)$ denotes the synset-level synonymous likelihood, i.e.,
\(
p_{\boldsymbol{\theta}}(\boldsymbol{\mathcal X}|\breve{\boldsymbol y}_s)
=
\int_{\boldsymbol{x}_i\in\boldsymbol{\mathcal X}}
p_{\boldsymbol{\theta}}(\boldsymbol{x}_i|\breve{\boldsymbol y}_s)d\boldsymbol{x}_i,
\)
rather than a pointwise density value of the set, thus $p_{\boldsymbol{\theta}}(\boldsymbol{\mathcal X}|\breve{\boldsymbol y}_s)=1$ means that the synset-level likelihood of estimating admissible samples contained in the ideal synset integrates to $1$}.
\end{proposition}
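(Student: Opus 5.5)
The plan is to make ``lossless semantic representation and identification'' precise as two separate requirements, and then show that each is equivalent to one of the two conditions. \emph{Lossless representation} means that the synonymous encoder output carries exactly the semantic information of the ideal synset. Formally, the variational distribution $q_{\boldsymbol{\phi}}(\breve{\boldsymbol{y}}_s|\boldsymbol{x})$ coincides almost everywhere with the true synonymous posterior $p_{\boldsymbol{\theta}}(\breve{\boldsymbol{y}}_s|\boldsymbol{\mathcal{X}})$ for $p_{\boldsymbol{X}}$-almost every $\boldsymbol{x}$. \emph{Lossless identification} means that, given the synonymous representation, the decoder places all of its reconstruction mass inside the ideal synset, i.e., $\hat{\boldsymbol{\mathcal{X}}}=\boldsymbol{\mathcal{X}}$ up to null sets. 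With these definitions, the proposition splits into two ``iff'' statements, which I would prove in turn.

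\textbf{First condition.} Theorem~\ref{Non-negativity} gives $D_{\text{KL}}[q_{\boldsymbol{\phi}}\|p_{\boldsymbol{\theta}}]\ge 0$ pointwise in $\boldsymbol{x}$. Hence the expectation over $p_{\boldsymbol{X}}$ vanishes iff the divergence vanishes for almost every $\boldsymbol{x}$. I would then examine the equality cases in the chain \eqref{SVI_KL_non_negativity}:
\begin{itemize}
\item Jensen's inequality is tight only if $p_{\boldsymbol{\theta}}/q_{\boldsymbol{\phi}}$ is constant $q_{\boldsymbol{\phi}}$-a.e.;
\item the second inequality is tight only if $p_{\boldsymbol{\theta}}$ puts full mass on $\operatorname{supp}(q_{\boldsymbol{\phi}})$.
\end{itemize}
Together these force $q_{\boldsymbol{\phi}}=p_{\boldsymbol{\theta}}$ a.e., and the converse direction is trivial. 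By \eqref{SVI_equation}, this is the same as saying the SVLBO equals the synonymous evidence $\log p_{\boldsymbol{\mathcal{X}}}(\boldsymbol{\mathcal{X}})$. That is exactly lossless representation.

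\textbf{Second condition.} By definition, $p_{\boldsymbol{\theta}}(\boldsymbol{\mathcal{X}}|\breve{\boldsymbol{y}}_s)=\int_{\boldsymbol{\mathcal{X}}}p_{\boldsymbol{\theta}}(\boldsymbol{x}_i|\breve{\boldsymbol{y}}_s)\,d\boldsymbol{x}_i$. This equals $1$ iff the decoder's generative distribution assigns zero mass to the complement of $\boldsymbol{\mathcal{X}}$. Equivalently, almost every sampled $\hat{\boldsymbol{x}}_j$ lies in $\boldsymbol{\mathcal{X}}$, so the reconstructed synset $\hat{\boldsymbol{\mathcal{X}}}$ (the support of $p_{\boldsymbol{\theta}}(\cdot|\breve{\boldsymbol{y}}_s)$) is contained in $\boldsymbol{\mathcal{X}}$. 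I would also argue the reverse inclusion, namely that $\hat{\boldsymbol{\mathcal{X}}}$ covers $\boldsymbol{\mathcal{X}}$, as a consequence of the first condition. When the posterior matches for a.e. $\boldsymbol{x}_i\in\boldsymbol{\mathcal{X}}$, every admissible sample yields the same synonymous representation distribution. The de-mapping $\Theta$ then reaches the whole ideal synset, which gives full overlap.

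\textbf{Necessity.} Assume lossless representation and identification hold. If the first condition failed, a set of $\boldsymbol{x}$ of positive measure would have $q_{\boldsymbol{\phi}}\neq p_{\boldsymbol{\theta}}$, which contradicts representation. If the second condition failed, $p_{\boldsymbol{\theta}}(\boldsymbol{\mathcal{X}}^c|\breve{\boldsymbol{y}}_s)>0$, so reconstructions outside the synset would occur with positive probability, which contradicts identification. In the SVLBO decomposition \eqref{SVLBO_decompose}, the two conditions together make the synonymous likelihood term $-\log p_{\boldsymbol{\theta}}(\boldsymbol{\mathcal{X}}|\breve{\boldsymbol{y}}_s)$ vanish. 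The objective then reduces to the synonymous coding rate alone, which gives a consistency check.

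\textbf{Main obstacle.} The hardest step is to make ``identification'' rigorous, and in particular to show that both conditions are needed and neither implies the other. The first condition is a statement about the encoder relative to a $\boldsymbol{\theta}$-parameterized posterior. It can hold even when that posterior corresponds to a decoder whose likelihood leaks mass outside $\boldsymbol{\mathcal{X}}$, which is the ``non-ideal $\boldsymbol{\theta}$'' issue noted before the proposition. To establish independence, I would give a simple counterexample: a degenerate code where $q_{\boldsymbol{\phi}}=p_{\boldsymbol{\theta}}$ but $p_{\boldsymbol{\theta}}(\boldsymbol{x}|\breve{\boldsymbol{y}}_s)$ is spread beyond the synset.
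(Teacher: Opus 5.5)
\textbf{Gap 1: the coverage step fails.} Your reverse inclusion $\hat{\boldsymbol{\mathcal X}}\supseteq\boldsymbol{\mathcal X}$ does not follow. So, with identification defined as $\hat{\boldsymbol{\mathcal X}}=\boldsymbol{\mathcal X}$ up to null sets, the sufficiency direction fails. Both conditions see the decoder only through the synset mass $p_{\boldsymbol\theta}(\boldsymbol{\mathcal X}|\breve{\boldsymbol y}_s)=\int_{\boldsymbol{\mathcal X}}p_{\boldsymbol\theta}(\boldsymbol x_i|\breve{\boldsymbol y}_s)\,d\boldsymbol x_i$. In particular, the posterior $p_{\boldsymbol\theta}(\breve{\boldsymbol y}_s|\boldsymbol{\mathcal X})=p_{\boldsymbol\theta}(\boldsymbol{\mathcal X}|\breve{\boldsymbol y}_s)\,p_{\breve{\boldsymbol Y}_s}(\breve{\boldsymbol y}_s)/p_{\boldsymbol{\mathcal X}}(\boldsymbol{\mathcal X})$, to which the encoder is matched, is blind to how this mass is distributed inside $\boldsymbol{\mathcal X}$. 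Take a decoder whose output, for every $\breve{\boldsymbol y}_s$, stays in a proper subset of $\boldsymbol{\mathcal X}$ (a small ball, or a single point for a deterministic decoder). It satisfies both conditions while $\hat{\boldsymbol{\mathcal X}}\subsetneq\boldsymbol{\mathcal X}$. The fact you invoke, that all $\boldsymbol x_i\in\boldsymbol{\mathcal X}$ share one representation distribution, constrains the encoder and says nothing about what $\Theta$ reaches.

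The conditions give containment only. Containment is all that the semantic-level identity $\tilde{\boldsymbol I}_s(\tilde{\boldsymbol X};\hat{\tilde{\boldsymbol X}})=H_s(\tilde{\boldsymbol X})$, through which the paper expresses identification, actually requires: every reconstruction must de-synonymize to $\tilde{\boldsymbol x}$. In the paper, spreading reconstructions over the whole synset is the job of the $D_{\mathrm{KL}}[p_{\boldsymbol X}\|p_{\hat{\boldsymbol X}}]$ term in Lemma~\ref{ENLSL}, not of this proposition. So define identification by containment and drop the coverage claim.

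\textbf{Gap 2: the representation half is tautological.} By \emph{defining} lossless representation as $q_{\boldsymbol\phi}=p_{\boldsymbol\theta}(\cdot|\boldsymbol{\mathcal X})$ a.e., you make that direction trivial. Your one-to-one pairing (first condition with representation, second with identification) also bypasses the paper's key step. The paper substitutes \emph{both} conditions into the SVLBO identity \eqref{SVI_equation}, in the form \eqref{SVI_equation2}, to get $\log p_{\boldsymbol{\mathcal X}}(\boldsymbol{\mathcal X})=-D_{\mathrm{KL}}[p_{\boldsymbol\theta}(\breve{\boldsymbol y}_s|\boldsymbol{\mathcal X})\|p_{\breve{\boldsymbol Y}_s}(\breve{\boldsymbol y}_s)]$. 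It then averages over $\boldsymbol x$ to obtain $H_s(\tilde{\boldsymbol X})=I(\tilde{\boldsymbol X};\breve{\boldsymbol Y}_s)$. This identity is what lossless representation means there, and it is exactly the computation you set aside as a ``consistency check''.

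Running the same computation with the first condition alone gives $H_s(\tilde{\boldsymbol X})=I(\tilde{\boldsymbol X};\breve{\boldsymbol Y}_s)+\mathbb{E}[-\log p_{\boldsymbol\theta}(\boldsymbol{\mathcal X}|\breve{\boldsymbol y}_s)]$. So representation in this sense needs the second condition as well. Your own independence counterexample exposes the weakness of your definition. Any code with $q_{\boldsymbol\phi}=p_{\boldsymbol\theta}$ but $p_{\boldsymbol\theta}(\boldsymbol{\mathcal X}|\breve{\boldsymbol y}_s)<1$ on a set of positive probability has $I(\tilde{\boldsymbol X};\breve{\boldsymbol Y}_s)<H_s(\tilde{\boldsymbol X})$. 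In the extreme case, where all synsets are sent to one $\boldsymbol y_s$ and the decoder ignores $\breve{\boldsymbol y}_s$, the mutual information is $0$, yet you would call the code lossless.

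Your equality-case analysis of \eqref{SVI_KL_non_negativity} and your necessity arguments are fine. What needs to change is the definitions, with sufficiency routed through \eqref{SVI_equation2}.
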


\begin{proof}

(1) First, we establish the sufficiency of these conditions:

When the full semantic KL divergence equals to $0$, according to \eqref{SVI_equation} and \eqref{SVLBO_decompose}, the following equality holds:
\begin{equation}\label{SVI_equation2}
    \log p_{\boldsymbol{\mathcal{X}}}(\boldsymbol{\mathcal{X}}) = \underset{\text{Synonymous Likelihood Term}}{\underbrace{\mathbb{E}_{\breve{\boldsymbol{y}}_s\sim q_{\boldsymbol{\phi}^*}(\breve{\boldsymbol{y}}_s|\boldsymbol{x})}\left[\log p_{\boldsymbol{\theta}}(\boldsymbol{\mathcal{X}}|\breve{\boldsymbol{y}}_s)\right]}} - \underset{\text{Regularization Term}}{\underbrace{D_{\text{KL}}\left[q_{\boldsymbol{\phi}}\left(\breve{\boldsymbol{y}}_s|\boldsymbol{x}\right) \, ||\, p_{\boldsymbol{\theta}}\left(\breve{\boldsymbol{y}}_s\right)\right]}}.
\end{equation}
Under this condition, by substituting $p_{\boldsymbol{\theta}}(\boldsymbol{\mathcal{X}} \mid \breve{\boldsymbol{y}}_s)=1$ into \eqref{SVI_equation2}, the following equality can be obtained:
\begin{equation}
    \log p_{\boldsymbol{\mathcal{X}}}(\boldsymbol{\mathcal{X}}) = - D_{\text{KL}}\left[p_{\boldsymbol{\theta}^*}\left( \breve{\boldsymbol{y}}_s | \boldsymbol{\mathcal{X}} \right) \, ||\, p_{\breve{\boldsymbol{Y}}_s}\left(\breve{\boldsymbol{y}}_s\right)\right].
\end{equation}

By taking the expectation over the original syntactic sample $\boldsymbol{x}$ in the data space with respect to its distribution $p_{\boldsymbol{X}}(\boldsymbol{x})$, this equation can be expressed as the following relation:
\begin{equation}
    \mathbb{E}_{\boldsymbol{x} \sim p_{\boldsymbol{X}}(\boldsymbol{x})} \left[- \log p_{\boldsymbol{\mathcal{X}}}(\boldsymbol{\mathcal{X}})\right] = \mathbb{E}_{\boldsymbol{x} \sim p_{\boldsymbol{X}}(\boldsymbol{x})} D_{\text{KL}}\left[p_{\boldsymbol{\theta}^*}\left( \breve{\boldsymbol{y}}_s | \boldsymbol{\mathcal{X}} \right) \, ||\, p_{\breve{\boldsymbol{Y}}_s}\left(\boldsymbol{y}_s\right)\right].
\end{equation}
Based on the definitions of semantic entropy and mutual information, the above relation can be rewritten as the following equality:
\begin{equation}\label{mutualEquation}
    H_s\left(\tilde{\boldsymbol{X}}\right) = I\left(\tilde{\boldsymbol{X}};\breve{\boldsymbol{Y}}_s\right),
\end{equation}
i.e., the mutual information between the original semantic information $\tilde{\boldsymbol{x}}$ in the semantic space and the true synonymous representation $\boldsymbol{y}_s$ in the latent space equals the semantic entropy of $\tilde{\boldsymbol{x}}$. This implies that, when the optimal synonymous mapping parameters $\boldsymbol{\phi}^*$ and the synonymous likelihood constraint $p_{\boldsymbol{\mathcal{X}}|\breve{\boldsymbol{Y}}_s}(\boldsymbol{\mathcal{X}} | \breve{\boldsymbol{y}}_s)=1$ are both satisfied, the representation $\breve{\boldsymbol{y}}_s$ achieves lossless representation of the semantic information $\tilde{\boldsymbol{x}}$.

Furthermore, with the consideration of $p_{\boldsymbol{\mathcal{X}}|\breve{\boldsymbol{Y}}_s}(\boldsymbol{\mathcal{X}} | \breve{\boldsymbol{y}}_s)=1$, the synonymous de-mapping maps all samples $\breve{\boldsymbol{y}}_i \in \breve{\boldsymbol{\mathcal{Y}}}$ sharing the synonymous representation $\boldsymbol{y}_s$ in the latent space back to their corresponding reconstructed samples $\hat{\boldsymbol{x}}_i \in \hat{\boldsymbol{\mathcal{X}}}$ in the data space, thereby yielding the reconstructed synset $\hat{\boldsymbol{\mathcal{X}}}$. Accordingly, \eqref{mutualEquation} can be further extended as
\begin{equation}
    H_s\left(\tilde{\boldsymbol{X}}\right) = I\left(\tilde{\boldsymbol{X}};\breve{\boldsymbol{Y}}_s\right) = \tilde{\boldsymbol{I}}_s \left(\tilde{\boldsymbol{X}};\hat{\tilde{\boldsymbol{X}}}\right),
\end{equation}
i.e.,  the mutual information between the original semantic information $\tilde{\boldsymbol{X}}$ and the reconstructed semantic information $\hat{\tilde{\boldsymbol{X}}}$ equals the semantic entropy of $\tilde{\boldsymbol{X}}$. In this case, the reconstructed synset $\hat{\boldsymbol{\mathcal{X}}}$ in the data space fully overlaps with the original synset $\boldsymbol{\mathcal{X}}$, implying that the synonymous representation $\boldsymbol{y}_s$ enables lossless identification of the semantic information $\tilde{\boldsymbol{x}}$.

These results establish the sufficiency of the stated conditions.

(2) Next, we prove the necessity of these conditions:

When the synonymous encoder achieves lossless representation of semantic information via the synonymous representation, the full semantic KL divergence $\mathbb{E}_{\boldsymbol{x} \sim p_{\boldsymbol{X}}(\boldsymbol{x})} D_{\text{KL}}\left[q_{\boldsymbol{\phi}}(\breve{\boldsymbol{y}}_s|\boldsymbol{x})\,||\, p_{\boldsymbol{\theta}}(\breve{\boldsymbol{y}}_s|\boldsymbol{\mathcal{X}})\right]$ must equal to $0$. 

When the synonymous decoder achieves lossless identification of semantic information, the synonymous representation $\tilde{\boldsymbol{y}}_s$ necessarily enables accurate prediction of the ideal synset $\boldsymbol{\mathcal{X}}$. As a result, the parameterized synonymous de-mapping becomes an ideal deterministic mapping, eliminating uncertainty in semantic identification. This implies that
\begin{equation}
    \mathbb{E}_{\breve{\boldsymbol{y}}_s\sim q_{\boldsymbol{\phi}^*}(\breve{\boldsymbol{y}}_s|\boldsymbol{x})}\left[\log p_{\boldsymbol{\theta}}(\boldsymbol{\mathcal{X}}|\breve{\boldsymbol{y}}_s)\right] = 0 \,\,\, \Longrightarrow{} \,\,\, p_{\boldsymbol{\theta}}(\boldsymbol{\mathcal{X}}|\breve{\boldsymbol{y}}_s) = 1.
\end{equation}
thereby establishing the necessity of these two conditions for lossless semantic representation and identification.
\end{proof}

From the above derivation, the optimal distribution approximation under synonymous variational inference is inherently a constrained process. The first constraint condition corresponds to the full semantic KL divergence itself and is thus implicit in the optimization objective, rather than an additional constraint condition. In contrast, the second constraint condition is not fully equivalent to the original optimization direction and must be explicitly imposed. Accordingly, the optimization criterion of the synonymous codec can be summarized as
\begin{equation}
    \boldsymbol{\phi}^*, \boldsymbol{\theta}^* = \arg \max_{\boldsymbol{\phi}, \, \boldsymbol{\theta}} \,\, \text{SVLBO}(\boldsymbol{\phi}, \boldsymbol{\theta};\boldsymbol{\mathcal{X}}) \quad \text{s.t.} \quad p_{\boldsymbol{\theta}}(\boldsymbol{\mathcal{X}}|\breve{\boldsymbol{y}}_s)=1.
\end{equation}

By introducing the decomposition in \eqref{SVLBO_decompose}, the above optimization criterion can be further expressed as
\begin{equation}\label{SVLBO_optimize2}
\begin{aligned}
    \boldsymbol{\phi}^*, \boldsymbol{\theta}^* & = \arg \min_{\boldsymbol{\phi}, \, \boldsymbol{\theta}} \,\,   \mathbb{E}_{\boldsymbol{x} \sim p_{\boldsymbol{X}}(\boldsymbol{x})}\mathbb{E}_{\breve{\boldsymbol{y}}_s \sim q_{\boldsymbol{\phi}}(\breve{\boldsymbol{y}}_s|\boldsymbol{x})} \left[-\log p_{\boldsymbol{\theta}}(\boldsymbol{\mathcal{X}}|\breve{\boldsymbol{y}}_s) - \log p_{\breve{\boldsymbol{Y}}_s}(\breve{\boldsymbol{y}}_s)\right] \quad \\
    & \quad\quad\quad\quad  \text{s.t.} \quad \mathbb{E}_{\boldsymbol{x} \sim p_{\boldsymbol{X}}(\boldsymbol{x})}\mathbb{E}_{\breve{\boldsymbol{y}}_s \sim q_{\boldsymbol{\phi}}(\breve{\boldsymbol{y}}_s|\boldsymbol{x})} \left[\log p_{\boldsymbol{\theta}}(\boldsymbol{\mathcal{X}}|\breve{\boldsymbol{y}}_s)\right] = 0 \\
    & = \arg \min_{\boldsymbol{\phi}, \, \boldsymbol{\theta}} \,\,   \mathbb{E}_{\boldsymbol{x} \sim p_{\boldsymbol{X}}(\boldsymbol{x})}\mathbb{E}_{\breve{\boldsymbol{y}}_s \sim q_{\boldsymbol{\phi}}(\breve{\boldsymbol{y}}_s|\boldsymbol{x})} \Bigg[\underset{\substack{\text{Weighted Synonymous} \\ \text{Likelihood Term}}}{\underbrace{-(\lambda + 1) \cdot \log p_{\boldsymbol{\theta}}(\boldsymbol{\mathcal{X}}|\breve{\boldsymbol{y}}_s)}} \,\, \underset{\substack{\text{Synonymous Coding} \\ \text{Rate}}}{\underbrace{- \log p_{\breve{\boldsymbol{Y}}_s}(\breve{\boldsymbol{y}}_s)}}\Bigg],
\end{aligned}
\end{equation}
in which $\lambda$ denotes the Lagrange multiplier.

It should be noted that, although \eqref{SVLBO_optimize2} specifies the optimization direction of the synonymous codec, it cannot be directly applied to optimize neural synonymous compression. The reason is that the reconstruction target implied by the synonymous likelihood term $- \log p_{\boldsymbol{\theta}}(\boldsymbol{\mathcal{X}}|\breve{\boldsymbol{y}}_s)$ is not a single signal sample, but the entire ideal synset $\boldsymbol{\mathcal{X}}$. However, in general natural signal compression, the ideal synset centered on a source signal $\boldsymbol{x}$ can only be regarded as an idealized mathematical model: other samples within the ideal synset $\boldsymbol{x}_i \in \boldsymbol{\mathcal{X}}$ that are similar to the source $\boldsymbol{x}$ are usually difficult to obtain directly, and the size of the synset $|\boldsymbol{\mathcal{X}}|$ is also hard to determine. As a result, \eqref{SVLBO_optimize2} cannot be directly tractable. Nevertheless, by reformulating reconstruction with respect to the ideal synset as reconstruction of an arbitrary sample within that synset, the optimization of the synonymous likelihood term can be converted into an operable form. As will be shown, this operable form is directly connected to the distortion-perception tradeoff and will be the main focus of Section~\ref{SectionV}.

\section{Tight-Bound Characterization and Jensen-Limit Synonymous RDP Relaxation}\label{SectionV}

This section builds on the analysis in Section~\ref{SectionIV} and further studies the tractable optimization direction of the synonymous codec by considering the continuity of natural signal distributions, {  which is the source model mainly considered in this work. First, a synonymous likelihood lemma is introduced to establish a tight-to-relaxed characterization between the synonymous likelihood term in Eq. \eqref{SVLBO_decompose} and the distortion-divergence-related objective. The result is then incorporated into the optimization criterion of the synonymous codec to derive its corresponding optimization direction. Based on this, a tight-bound synonymous source coding rate characterization is established under the synonymous reconstruction constraint.} 


\subsection{Synonymity-Perception Consistency Principle}\label{SectionV_I}

As described in Section~\ref{SectionIV}, the synonymous likelihood $p_{\boldsymbol{\theta}}(\boldsymbol{\mathcal{X}} | \breve{\boldsymbol{y}}_s)$ cannot be computed directly, but it appears as a reconstruction term in the loss function for synonymous mapping optimization. Therefore, how to properly model, compute, and optimize the synonymous likelihood constitutes a key challenge in implementing the theory and methods of synonymous source coding.

However, when we further incorporate the theoretical assumption that perceptually similar signals can be continuously transformed into one another into the analysis of the synonymous likelihood term, we find that there exists a \textbf{synonymity-perception consistency principle} between the semantic and syntactic levels, which states that \textbf{the optimal identification of semantic information at the semantic level is theoretically consistent with perceptual optimization at the syntactic level}.

The validity of this synonymity-perception consistency principle provides the foundation for {  deriving the tight-bound synonymous source coding rate characterization} in this work. {  Before presenting the main lemma, we first introduce the following parameterized relaxation tool.

\begin{definition}[Lyapunov relaxation]\label{def:LyapunovRelaxation} 
By Lyapunov's inequality for moments \cite{shiryaev1996probability}, for any positive random variable \(Z\) and \(0<\rho\leq1\), we have
\begin{equation}\label{LyapunovRelaxationIneq} 
-\log \mathbb{E}[Z] \leq -\frac{1}{\rho}\log \mathbb{E}\left[Z^{\rho}\right]. 
\end{equation} 
We refer to the replacement of a logarithmic expectation term \(-\log \mathbb{E}[Z]\) by its parameterized upper bound \(-\frac{1}{\rho}\log \mathbb{E}[Z^{\rho}]\) as a \emph{Lyapunov relaxation}. The parameter \(\rho\) controls the relaxation level. When \(\rho=1\), the relaxation is tight and Eq.~\eqref{LyapunovRelaxationIneq} becomes an equality. When \(\rho\rightarrow0\), it reduces to the Jensen-limit form \begin{equation}\label{LyapunovJensenLimit} 
\lim_{\rho\rightarrow0} -\frac{1}{\rho}\log \mathbb{E}\left[Z^{\rho}\right] = \mathbb{E}\left[-\log Z\right]. \end{equation} 
When this relaxation is applied to an integral over a measurable synset \(\boldsymbol{\mathcal X}\), the integral is first normalized as an expectation with respect to the reference measure over \(\boldsymbol{\mathcal X}\), and the resulting synset-size term is treated as a constant once the ideal synset is fixed. 
\end{definition}

Before stating the synonymous likelihood lemma, we make explicit the ideal analytical assumptions used in its derivation: the adopted synonymity criterion induces a measurable and continuous ideal synset on the natural-signal support, and samples inside the same synset are regarded as semantically equivalent under this criterion. These assumptions are sufficient for the lemma and are reasonable for continuous-valued natural sources such as images, audio, and video, while the exact synset is not required to be explicitly constructed in practical codecs. We now present the following lemma:

}

\begin{lemma}[\textbf{Synonymous Likelihood Lemma}]\label{ENLSL}
Assume that an ideal synset $\boldsymbol{\mathcal{X}}$ exists at the source, in which natural signal samples with the same semantic meaning are continuous in the data space. When reconstructed samples at the decoder, generated from the synonymous representation, are placed in a reconstructed synset $\hat{\boldsymbol{\mathcal{X}}}$, {  the expected negative logarithmic synonymous likelihood constructed by the generative model $g_s(\cdot;\boldsymbol{\theta})$ admits a tight-to-relaxed distortion-divergence characterization as
    \begin{equation}\label{DPequivalent}
    \begin{aligned}
        & \min_{{\boldsymbol{\phi}, \boldsymbol{\theta}}} \,\, \mathbb{E}_{\boldsymbol{x}\sim p_{\boldsymbol{X}}} \mathbb{E}_{\breve{\boldsymbol{y}}_s\sim q_{\boldsymbol{\phi}}} \left[ -\log p_{\boldsymbol{\theta}} \left(\boldsymbol{\mathcal{X}}|\breve{\boldsymbol{y}}_s \right) \right] \\
        & \quad\quad\,\,\, \xRightarrow[\text{relaxation}]{\text{Lyapunov}} \quad\quad\, \min_{{\boldsymbol{\phi}, \boldsymbol{\theta}}} \,\, \mathbb{E}_{\boldsymbol{x}\sim p_{\boldsymbol{X}}} \mathbb{E}_{\breve{\boldsymbol{y}}_s\sim q_{\boldsymbol{\phi}}} \left[ d_{\rho,\Delta}\left(\boldsymbol{x},\hat{\boldsymbol{\mathcal{X}}}\right) \right] + D_{\mathrm{KL}} \left[p_{\boldsymbol{X}}||p_{\hat{\boldsymbol{X}}}\right] \\
        & \quad\quad\,\,\, \xRightarrow[\text{tight form}]{\rho=1} \quad\quad\, \min_{{\boldsymbol{\phi}, \boldsymbol{\theta}}} \,\, \mathbb{E}_{\boldsymbol{x}\sim p_{\boldsymbol{X}}} \mathbb{E}_{\breve{\boldsymbol{y}}_s\sim q_{\boldsymbol{\phi}}} \left[ d_{1,\Delta}\left(\boldsymbol{x},\hat{\boldsymbol{\mathcal{X}}}\right) \right] + D_{\mathrm{KL}} \left[p_{\boldsymbol{X}}||p_{\hat{\boldsymbol{X}}}\right] \\
        & \quad \xRightarrow[\text{Jensen-limit relaxation}]{\rho\rightarrow0} \min_{{\boldsymbol{\phi}, \boldsymbol{\theta}}} \,\,  \lambda_d \cdot \mathbb{E}_{\boldsymbol{x}\sim p_{\boldsymbol{X}}} \mathbb{E}_{\breve{\boldsymbol{y}}_s\sim q_{\boldsymbol{\phi}}} \mathbb{E}_{\hat{\boldsymbol{x}}_j \in \hat{\boldsymbol{\mathcal{X}}}|\breve{\boldsymbol{y}}_s} \left[\Delta \left(\boldsymbol{x},\hat{\boldsymbol{x}}_j\right) \right] + D_{\mathrm{KL}} \left[p_{\boldsymbol{X}}||p_{\hat{\boldsymbol{X}}}\right],
    \end{aligned}
    \end{equation}
in which $d_{\rho, \Delta}(\boldsymbol{x}, \hat{\boldsymbol{\mathcal{X}}}) \triangleq -\frac{1}{\rho}  \log \mathbb{E}_{\hat{\boldsymbol{x}}_j \in \hat{\boldsymbol{\mathcal{X}}}|\breve{\boldsymbol{y}}_s} \exp\left( -\rho \cdot \lambda_d \Delta\left(\boldsymbol{x}, \hat{\boldsymbol{x}}_j\right) \right)$, and $\lambda_d$ is the tradeoff coefficient for the expected distortion term (typically taken as the ``expected mean squared error'', E-MSE).
The tight form with $\rho=1$ preserves the original synonymous likelihood objective and can be decomposed into a soft distortion term and a KL-related distribution term, while the Jensen-limit form with $\rho\rightarrow0$ reduces to the tractable expected-distortion and KL-divergence tradeoff used in common practical optimization.}
\end{lemma}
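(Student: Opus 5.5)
The plan is to start from the synset-level definition of the synonymous likelihood given in the Proposition of Section~\ref{SectionIV_3}, namely $p_{\boldsymbol{\theta}}(\boldsymbol{\mathcal X}|\breve{\boldsymbol y}_s)=\int_{\boldsymbol{x}_i\in\boldsymbol{\mathcal X}} p_{\boldsymbol{\theta}}(\boldsymbol{x}_i|\breve{\boldsymbol y}_s)\,d\boldsymbol{x}_i$. I will then use the continuity assumption to normalize this integral into an expectation over the synset, as prescribed in Definition~\ref{def:LyapunovRelaxation}. This gives
\[
-\log p_{\boldsymbol{\theta}}(\boldsymbol{\mathcal X}|\breve{\boldsymbol y}_s) = -\log \mathbb{E}_{\boldsymbol{x}_i\sim\mathcal{U}(\boldsymbol{\mathcal X})}\big[p_{\boldsymbol{\theta}}(\boldsymbol{x}_i|\breve{\boldsymbol y}_s)\big]-\log|\boldsymbol{\mathcal X}|,
\]
where the last term is a constant once the ideal synset is fixed and drops out of the minimization.

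The key modelling step is to write the generative density of an admissible sample in a form that separates a sample-to-source similarity factor from a distribution factor. Specifically, for reconstructions $\hat{\boldsymbol x}_j\in\hat{\boldsymbol{\mathcal X}}$ generated by $g_s(\cdot;\boldsymbol\theta)$ from $\breve{\boldsymbol y}_s$, I take the decoder density evaluated at admissible samples to be proportional to $\exp(-\lambda_d\Delta(\boldsymbol x,\hat{\boldsymbol x}_j))\cdot p_{\hat{\boldsymbol X}}(\hat{\boldsymbol x}_j)/p_{\boldsymbol X}(\hat{\boldsymbol x}_j)$. This is a Gibbs-type distortion kernel, in the same spirit as the Gaussian likelihood giving MSE in \eqref{classicalVI_factorized}, multiplied by a density-ratio correction. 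The correction accounts for the fact that the ideal synset is populated according to $p_{\boldsymbol X}$ while the decoder's samples follow $p_{\hat{\boldsymbol X}}$. Substituting this form, and changing variables from $\boldsymbol x_i\in\boldsymbol{\mathcal X}$ to $\hat{\boldsymbol x}_j\in\hat{\boldsymbol{\mathcal X}}$ (valid when the synsets are identified), splits the logarithm into two pieces: a log-expectation of $\exp(-\lambda_d\Delta)$ and a log density-ratio term. After averaging over $\boldsymbol x\sim p_{\boldsymbol X}$, the density-ratio term yields $\mathbb{E}_{p_{\boldsymbol X}}[\log p_{\boldsymbol X}/p_{\hat{\boldsymbol X}}]=D_{\mathrm{KL}}[p_{\boldsymbol X}||p_{\hat{\boldsymbol X}}]$. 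This is the step where the divergence ``emerges,'' and it is the main obstacle. Making it rigorous requires justifying the factorization, pulling the ratio out of the inner expectation (either treating it as approximately constant across the small synset, or bounding it by one more Jensen step), and handling normalization constants, which must be absorbed into terms independent of $(\boldsymbol\phi,\boldsymbol\theta)$. I would first try to argue that the ratio varies negligibly within a synset by continuity, and fall back on stating the factorization explicitly as the ideal-synset modelling assumption if that fails.

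With the decomposition in hand, the rest follows from Definition~\ref{def:LyapunovRelaxation}. Apply the Lyapunov inequality \eqref{LyapunovRelaxationIneq} with $Z=\exp(-\lambda_d\Delta(\boldsymbol x,\hat{\boldsymbol x}_j))$, which is positive. This bounds the soft term by $d_{\rho,\Delta}(\boldsymbol x,\hat{\boldsymbol{\mathcal X}})$ and gives the first relaxed objective in \eqref{DPequivalent}. Setting $\rho=1$ makes the inequality an equality, so $d_{1,\Delta}$ reproduces the original synonymous likelihood exactly, up to constants; this is the tight form. Letting $\rho\to0$ and using \eqref{LyapunovJensenLimit} gives $\mathbb{E}[-\log Z]=\lambda_d\,\mathbb{E}_{\hat{\boldsymbol x}_j}[\Delta(\boldsymbol x,\hat{\boldsymbol x}_j)]$, which is the Jensen-limit expected-distortion plus KL form. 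Finally, since every step either preserves the objective up to parameter-independent constants or replaces it by a monotone upper bound, the chain of minimization problems in \eqref{DPequivalent} follows.
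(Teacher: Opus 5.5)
There is a genuine gap, and it sits at the step you yourself flag as the main obstacle: the density-ratio factor is assumed, not derived. When you declare the decoder density at admissible samples to be $\propto \exp(-\lambda_d\Delta(\boldsymbol x,\hat{\boldsymbol x}_j))\,p_{\hat{\boldsymbol X}}(\hat{\boldsymbol x}_j)/p_{\boldsymbol X}(\hat{\boldsymbol x}_j)$, you write $D_{\mathrm{KL}}[p_{\boldsymbol X}\|p_{\hat{\boldsymbol X}}]$ into the likelihood by hand. Its ``emergence'' after averaging over $\boldsymbol x$ is therefore circular. Your fallback, stating the factorization as a modelling assumption, concedes exactly what the lemma is meant to establish. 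The factor also never refers to $\boldsymbol{\mathcal X}$, so your route would produce the KL term just as well for a degenerate synset, whereas Proposition~\ref{NaturalKL} shows it cannot survive when $\boldsymbol{\mathcal X}=\{\boldsymbol x\}$.

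The surrounding steps fail as stated too. First, $p_{\boldsymbol\theta}(\cdot|\breve{\boldsymbol y}_s)$ depends only on $\breve{\boldsymbol y}_s$ and cannot carry a kernel in the source $\boldsymbol x$. Second, after normalization your average is under the uniform law on the fixed ideal synset. Once the ratio is pulled out, $\mathbb{E}_{\boldsymbol x_i\sim\mathcal U(\boldsymbol{\mathcal X})}[\exp(-\lambda_d\Delta(\boldsymbol x,\boldsymbol x_i))]$ is a codec-independent constant that never involves the reconstructions. Your ``change of variables'' to $\hat{\boldsymbol x}_j\in\hat{\boldsymbol{\mathcal X}}|\breve{\boldsymbol y}_s$ silently replaces that uniform law by the decoder's generative law, which you have already used once as the integrand. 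That replacement is legitimate only at the ideal optimum the minimization is supposed to reach. Third, continuity does not make a density ratio nearly constant over a synset of non-negligible measure. If you instead pull the ratio out with an extra Jensen step, the $\rho=1$ member is no longer tight.

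Here is what the paper does instead, and what your argument is missing. It keeps two independent expansions. The expectation over $\hat{\boldsymbol{\mathcal X}}|\breve{\boldsymbol y}_s$ comes from marginalizing the predicted detail $\hat{\boldsymbol y}_{\epsilon,j}\sim p_{\boldsymbol\theta_p}(\cdot|\breve{\boldsymbol y}_s)$ through the deterministic generator $\hat{\boldsymbol x}_j=g_s(\hat{\boldsymbol y}_j;\boldsymbol\theta)$. The synset is expanded separately as $\int_{\boldsymbol x_i\in\boldsymbol{\mathcal X}}p_{\boldsymbol X|\hat{\boldsymbol X}}(\boldsymbol x_i|\hat{\boldsymbol x}_j)\,d\boldsymbol x_i$, so no correspondence between $\boldsymbol x_i$ and $\hat{\boldsymbol x}_j$ is needed.

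The key idea is Bayes' rule combined with the ideal-synonymous-mapping identity $p_{\hat{\boldsymbol X}|\boldsymbol X}(\hat{\boldsymbol x}_j|\boldsymbol x_i)=p_{\hat{\boldsymbol X}|\boldsymbol X}(\hat{\boldsymbol x}_j|\boldsymbol x)$ for every $\boldsymbol x_i\in\boldsymbol{\mathcal X}$, which holds because all synset members share $\bar{\boldsymbol y}_s$. This is followed by reverse Bayes. Given that identity, the objective splits exactly into $-\log\mathbb E_{\hat{\boldsymbol x}_j}[p_{\boldsymbol X|\hat{\boldsymbol X}}(\boldsymbol x|\hat{\boldsymbol x}_j)]$ and the codec-independent synset term $-\log\int_{\boldsymbol{\mathcal X}}p_{\boldsymbol X}(\boldsymbol x_i)/p_{\boldsymbol X}(\boldsymbol x)\,d\boldsymbol x_i$. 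The kernel in $\boldsymbol x$ now appears legitimately, as the likelihood of the source given a reconstruction rather than as a decoder density.

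Lyapunov is then applied to both pieces, and the Gaussian $p_{\boldsymbol X|\hat{\boldsymbol X}}$ gives $d_{\rho,\Delta}$ with $\lambda_d=1/(2\sigma^2)$. The divergence enters only at the end, by inserting $p_{\hat{\boldsymbol X}}(\boldsymbol x)$ into the synset ratio term: $f(p_{\boldsymbol X},\boldsymbol{\mathcal X})=D_{\mathrm{KL}}[p_{\boldsymbol X}\|p_{\hat{\boldsymbol X}}]+\delta_p$, and likewise $f_\rho=D_{\mathrm{KL}}[p_{\boldsymbol X}\|p_{\hat{\boldsymbol X}}]+\delta_{p,\rho}$. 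That last step is an identity for a constant, plus an argument that minimizing the KL drives $\delta_p$ toward $f$; it is not a bound. It is, however, anchored to the ratio $p_{\boldsymbol X}(\boldsymbol x_i)/p_{\boldsymbol X}(\boldsymbol x)$, which the derivation produces exactly, and that is what ties the KL term to the synset.
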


\begin{proof}

According to the structure of the synonymous codec shown in Fig.~\ref{fig_1}, any sample $\boldsymbol{x}_i$ in the ideal synset $\boldsymbol{\mathcal{X}}$ should be generated from a synonymous representation $\bar{\boldsymbol{y}}_s$ together with a detailed representation sample $\boldsymbol{\hat{y}}_{\epsilon, i}$. This detailed representation is sampled from a specific prior distribution $p_{\boldsymbol{\hat{y}}_\epsilon | \bar{\boldsymbol{y}}_s}\big(\boldsymbol{\hat{y}}_\epsilon | \bar{\boldsymbol{y}}_s; \boldsymbol{\psi}, \boldsymbol{\theta}_p\big)$ conditioned on $\bar{\boldsymbol{y}}_s$, in which $\boldsymbol{\psi}$ and $ \boldsymbol{\theta}_p$ denote the learnable parameters of facorized/hyperprior entropy model and the predictive model from the synonymous representation to the detailed representation, respectively. 

Herein, it should be noted that the generated $\hat{\boldsymbol{y}}_{\epsilon,i}$ does not need to match the detailed representation $\boldsymbol{y}_{\epsilon}$ obtained by the inference model from the source signal, since  the detailed representation $\bar{\boldsymbol{y}}_\epsilon$ is not required to be encoded or transmitted to the decoder, but serves only as a latent variable to capture uncertainty in the generation of details.

Furthermore, according to the basic formulation for computing semantic information or synset probabilities in Section~\ref{SectionII_3}, the probability of a synset equals the sum of the probabilities of all samples within the set, or, in the continuous case, the integral of its probability density over the set. To ensure differentiability in the subsequent analysis, we replace the discrete $\bar{\boldsymbol{y}}_s$ with the continuous $\breve{\boldsymbol{y}}_s$, and the following analysis is thereby carried out in integral form.

Based on the above considerations, the left-hand side of Eq. \eqref{DPequivalent} can be expanded as follows:
\begin{equation}\label{ENLSL_step1}
    \begin{aligned}
        & \mathbb{E}_{\boldsymbol{x}\sim p_{\boldsymbol{X}}\left(\boldsymbol{x}\right)} \mathbb{E}_{\breve{\boldsymbol{y}}_s \sim q_{\boldsymbol{\phi}}\left(\breve{\boldsymbol{y}}_s|\boldsymbol{x}\right)} \left[-\log p_{\boldsymbol{\theta}}\left(\boldsymbol{\mathcal{X}}|\breve{\boldsymbol{y}}_s \right)\right] \\
        & \quad \overset{(\mathrm{a})}{=} \mathbb{E}_{\boldsymbol{x}\sim p_{\boldsymbol{X}}\left(\boldsymbol{x}\right)} \mathbb{E}_{\breve{\boldsymbol{y}}_s \sim q_{\boldsymbol{\phi}}\left(\breve{\boldsymbol{y}}_s|\boldsymbol{x}\right)} \left[ - \log \int_{\hat{\boldsymbol{y}}_{\epsilon, j}}p_{\boldsymbol{\theta}} \left(\boldsymbol{\mathcal{X}}|\breve{\boldsymbol{y}}_s, \hat{\boldsymbol{y}}_{\epsilon, j} \right) \cdot p_{\boldsymbol{\theta}_p} \left(\hat{\boldsymbol{y}}_{\epsilon, j} | \breve{\boldsymbol{y}}_s\right) d \hat{\boldsymbol{y}}_{\epsilon, j} \right] \\
        & \quad \overset{(\mathrm{b})}{=} \mathbb{E}_{\boldsymbol{x}\sim p_{\boldsymbol{X}}\left(\boldsymbol{x}\right)} \mathbb{E}_{\breve{\boldsymbol{y}}_s \sim q_{\boldsymbol{\phi}}\left(\breve{\boldsymbol{y}}_s|\boldsymbol{x}\right)} \left\{ -\log \mathbb{E}_{\hat{\boldsymbol{y}}_{\epsilon, j} | \breve{\boldsymbol{y}}_s \sim p_{\boldsymbol{\theta}_p}} \left[p_{\boldsymbol{\theta}} \left(\boldsymbol{\mathcal{X}}|\breve{\boldsymbol{y}}_s, \hat{\boldsymbol{y}}_{\epsilon, j} \right)\right] \right\} \\
        & \quad \overset{(\mathrm{c})}{=} \mathbb{E}_{\boldsymbol{x}\sim p_{\boldsymbol{X}}\left(\boldsymbol{x}\right)} \mathbb{E}_{\breve{\boldsymbol{y}}_s \sim q_{\boldsymbol{\phi}}\left(\breve{\boldsymbol{y}}_s|\boldsymbol{x}\right)}
        \left\{-\log \mathbb{E}_{\hat{\boldsymbol{y}}_{\epsilon, j} | \breve{\boldsymbol{y}}_s \sim p_{\boldsymbol{\theta}_p}} \left[\int_{\boldsymbol{x}_i \in \boldsymbol{\mathcal{X}}} p_{\boldsymbol{\theta}} \left(\boldsymbol{x}_i | \hat{\boldsymbol{y}}_{j} \right) d \boldsymbol{x}_i \right] \right\} \\
        & \quad \overset{(\mathrm{d})}{=} \mathbb{E}_{\boldsymbol{x}\sim p_{\boldsymbol{X}}\left(\boldsymbol{x}\right)} \mathbb{E}_{\breve{\boldsymbol{y}}_s \sim q_{\boldsymbol{\phi}}\left(\breve{\boldsymbol{y}}_s|\boldsymbol{x}\right)} \left\{ - \log \mathbb{E}_{\hat{\boldsymbol{x}}_j \in \hat{\boldsymbol{\mathcal{X}}}|\breve{\boldsymbol{y}}_s} \left[\int_{\boldsymbol{x}_i \in \boldsymbol{\mathcal{X}}} p_{\boldsymbol{X}|\hat{\boldsymbol{X}}} \left(\boldsymbol{x}_i| \hat{\boldsymbol{x}}_j\right) d \boldsymbol{x}_i \right] \right\},
    \end{aligned}
\end{equation}

Here, step $(\mathrm{a})$ is obtained by introducing the detailed representation sample $\boldsymbol{\hat{y}}_{\epsilon,j}$ into the likelihood and performing the corresponding integral; step $(\mathrm{b})$ follows from the definition of mathematical expectation; step $(\mathrm{c})$ is derived from the integral relationship between the probability of an ideal synset and the probabilities of its constituent samples, as given in Section~\ref{SectionII_3}; and step $(\mathrm{d})$ is based on a deterministic generative model $g_s(\cdot,;\boldsymbol{\theta})$, which uniquely maps the combined representation vector $\hat{\boldsymbol{y}}_j$, formed from the synonymous representation $\breve{\boldsymbol{y}}_s$ and an arbitrary detail sample $\hat{\boldsymbol{y}}_j$, to the output sample $\hat{\boldsymbol{x}}_j$, i.e., $\hat{\boldsymbol{x}}_j = g_s(\hat{\boldsymbol{y}}_j;\boldsymbol{\theta})$. By minimizing the expression in step $(\mathrm{d})$, the reconstructed synset $\hat{\boldsymbol{\mathcal{X}}}$ can progressively converge to the ideal synset $\boldsymbol{\mathcal{X}}$ under the constraint of the synonymous representation $\breve{\boldsymbol{y}}_s$.

It should be noted that in step $(\mathrm{d})$, the reconstructed sample $\hat{\boldsymbol{x}}_j$ is not required to match the source synset sample $\boldsymbol{x}$ at the syntactic level (e.g., pixel-wise accuracy for image samples), but only needs to belong to the ideal synset $\boldsymbol{\mathcal{X}}$. Moreover, as seen from the derivation above, the reconstructed sample $\hat{\boldsymbol{x}}_j$ obtained from the generative model does not need to correspond to any specific sample $\boldsymbol{x}_i$ in the source ideal synset $\boldsymbol{\mathcal{X}}$; in terms of indexing, they are independent. This property arises from the different ways the conditional and unconditional semantic-level variables in the likelihood probability are expanded into syntactic form: for the conditional synonymous representation $\breve{\boldsymbol{y}}_s$, the transformation from the set perspective to individual syntactic samples is achieved by introducing an expectation over the detailed representation $\boldsymbol{\hat{y}}_{\epsilon,j}$, as in step $(\mathrm{a})$; whereas for the target ideal synset $\boldsymbol{\mathcal{X}}$, its expansion into individual samples $\boldsymbol{x}_i$ is performed via integration over the set, as in step $(\mathrm{c})$. Since these two expansion mechanisms are independent, there is no need to establish a one-to-one correspondence between reconstructed and source samples.

Moreover, the derivation above involves only arbitrary samples $\boldsymbol{x}_i$ from the ideal synset and the reconstructed samples $\hat{\boldsymbol{x}}_j$ generated by the model, without explicitly including the source signal $\boldsymbol{x}$. However, in practice, any sample $\boldsymbol{x}_i$ from the ideal synset is typically difficult to obtain, and the available reference information is usually limited to the source signal $\boldsymbol{x}$. Based on this consideration, it is necessary to further introduce the source signal $\boldsymbol{x}$ to equivalently transform the analysis of Eq. \eqref{ENLSL_step1}, fully reflecting the reference role of the source signal in synonymous source coding. Specifically, for the expectation inside Eq. \eqref{ENLSL_step1}$(\mathrm{d})$, an equivalent transformation can be performed by incorporating the source signal $\boldsymbol{x}$ as follows:
\begin{equation}\label{ENLSL_step2}
    \begin{aligned}
        - \log & \,\, \mathbb{E}_{\hat{\boldsymbol{x}}_j \in \hat{\boldsymbol{\mathcal{X}}} |\breve{\boldsymbol{y}}_s} \left[\int_{\boldsymbol{x}_i \in \boldsymbol{\mathcal{X}}} p_{\boldsymbol{X}|\hat{\boldsymbol{X}}} \left(\boldsymbol{x}_i| \hat{\boldsymbol{x}}_j\right) d \boldsymbol{x}_i \right] \\
        & \overset{(\mathrm{a})}{=}  - \log \mathbb{E}_{\hat{\boldsymbol{x}}_j \in \hat{\boldsymbol{\mathcal{X}}} |\breve{\boldsymbol{y}}_s} \left[\int_{\boldsymbol{x}_i \in \boldsymbol{\mathcal{X}}}
        p_{\hat{\boldsymbol{X}}|\boldsymbol{X}} \left(\hat{\boldsymbol{x}}_j | \boldsymbol{x}_i \right)
        \cdot \frac{p_{\boldsymbol{X}}\left(\boldsymbol{x}_i\right)}{p_{\hat{\boldsymbol{X}}} \left(\hat{\boldsymbol{x}}_j\right)} d \boldsymbol{x}_i \right]  \\
        & \overset{(\mathrm{b})}{=}  - \log \mathbb{E}_{\hat{\boldsymbol{x}}_j \in \hat{\boldsymbol{\mathcal{X}}} |\breve{\boldsymbol{y}}_s} \left[\int_{\boldsymbol{x}_i \in \boldsymbol{\mathcal{X}}}
        p_{\hat{\boldsymbol{X}}|\boldsymbol{X}} \left(\hat{\boldsymbol{x}}_j | \boldsymbol{x}_i \right)
        \cdot \frac{p_{\boldsymbol{X}}\left(\boldsymbol{x}\right)}{p_{\hat{\boldsymbol{X}}} \left(\hat{\boldsymbol{x}}_j\right)}
        \cdot \frac{p_{\hat{\boldsymbol{X}}} \left(\hat{\boldsymbol{x}}_j\right)}{p_{\boldsymbol{X}} \left(\boldsymbol{x}\right)}
        \cdot \frac{p_{\boldsymbol{X}} \left(\boldsymbol{x}_i\right)}{p_{\hat{\boldsymbol{X}}} \left(\hat{\boldsymbol{x}}_j\right)} d \boldsymbol{x}_i \right] \\
        & = - \log \mathbb{E}_{\hat{\boldsymbol{x}}_j \in \hat{\boldsymbol{\mathcal{X}}} |\breve{\boldsymbol{y}}_s} \left[\int_{\boldsymbol{x}_i \in \boldsymbol{\mathcal{X}}}
        p_{\hat{\boldsymbol{X}}|\boldsymbol{X}} \left(\hat{\boldsymbol{x}}_j | \boldsymbol{x}_i \right)
        \cdot \frac{p_{\boldsymbol{X}}\left(\boldsymbol{x}\right)}{p_{\hat{\boldsymbol{X}}} \left(\hat{\boldsymbol{x}}_j\right)}
        \cdot \frac{p_{\boldsymbol{X}} \left(\boldsymbol{x}_i\right)}{p_{\boldsymbol{X}} \left(\boldsymbol{x}\right)} d \boldsymbol{x}_i \right],
    \end{aligned}
\end{equation}
in which step $(\mathrm{a})$ is obtained by applying Bayes’ theorem, while step $(\mathrm{b})$ is derived by introducing reciprocal terms $\dfrac{p_{\boldsymbol{X}}(\boldsymbol{x})}{p_{\hat{\boldsymbol{X}}}(\hat{\boldsymbol{x}}_j)}$ and $\dfrac{p_{\hat{\boldsymbol{X}}}(\hat{\boldsymbol{x}}_j)}{p_{\boldsymbol{X}}(\boldsymbol{x})}$.

Furthermore, under an ideal synonymous mapping, any sample $\boldsymbol{x}_i$ within the synset (including the source signal $\boldsymbol{x}$) corresponds to the same synonymous representation $\bar{\boldsymbol{y}}_s$. Accordingly, the posterior term $p_{\hat{\boldsymbol{X}}|\boldsymbol{X}}(\hat{\boldsymbol{x}}_j | \boldsymbol{x}_i)$ in Eq. \eqref{ENLSL_step2} satisfies the following equality:
\begin{equation}\label{replaceEquations}
        p_{\hat{\boldsymbol{X}}|\boldsymbol{X}}\left(\hat{\boldsymbol{x}}_j | \boldsymbol{x}_i \right)
        = p_{\hat{\boldsymbol{X}}|\bar{\boldsymbol{Y}}} \left(\hat{\boldsymbol{x}}_j | \bar{\boldsymbol{y}}_s \right)
        = p_{\hat{\boldsymbol{X}}|\boldsymbol{X}} \left(\hat{\boldsymbol{x}}_j | \boldsymbol{x} \right).
\end{equation}

Therefore, by substituting the above equivalence into Eq. \eqref{ENLSL_step2}, it can be further derived as follows:
\begin{equation}\label{ENLSL_step3_1}
    \begin{aligned}
        &  - \log \mathbb{E}_{\hat{\boldsymbol{x}}_j \in \hat{\boldsymbol{\mathcal{X}}} |\breve{\boldsymbol{y}}_s} \left[\int_{\boldsymbol{x}_i \in \boldsymbol{\mathcal{X}}}
        p_{\hat{\boldsymbol{X}}|\boldsymbol{X}}\left(\hat{\boldsymbol{x}}_j | \boldsymbol{x}_i \right)
        \cdot \frac{p_{\boldsymbol{X}}\left(\boldsymbol{x}\right)}{p_{\hat{\boldsymbol{X}}} \left(\hat{\boldsymbol{x}}_j\right)}
        \cdot \frac{p_{\boldsymbol{X}} \left(\boldsymbol{x}_i\right)}{p_{\boldsymbol{X}} \left(\boldsymbol{x}\right)} d \boldsymbol{x}_i \right] \\
        & \quad\quad \overset{(\mathrm{a})}{=}  - \log \mathbb{E}_{\hat{\boldsymbol{x}}_j \in \hat{\boldsymbol{\mathcal{X}}} |\breve{\boldsymbol{y}}_s} \left[\int_{\boldsymbol{x}_i \in \boldsymbol{\mathcal{X}}}
        \left(
            p_{\hat{\boldsymbol{X}}|\boldsymbol{X}} \left(\hat{\boldsymbol{x}}_j | \boldsymbol{x} \right)
            \cdot \frac{p_{\boldsymbol{X}} \left(\boldsymbol{x}\right)}{p_{\hat{\boldsymbol{X}}} \left(\hat{\boldsymbol{x}}_j\right)}
        \right)
        \cdot \frac{p_{\boldsymbol{X}} \left(\boldsymbol{x}_i\right)}{p_{\boldsymbol{X}} \left(\boldsymbol{x}\right)} d \boldsymbol{x}_i \right] \\
        & \quad\quad \overset{(\mathrm{b})}{=} - \log \mathbb{E}_{\hat{\boldsymbol{x}}_j \in \hat{\boldsymbol{\mathcal{X}}} |\breve{\boldsymbol{y}}_s} \left[\int_{\boldsymbol{x}_i \in \boldsymbol{\mathcal{X}}}
        p_{\hat{{\boldsymbol{X}}}|{\boldsymbol{X}}} \left(\boldsymbol{x} | \hat{\boldsymbol{x}}_j \right)
        \cdot \frac{p_{\boldsymbol{X}} \left(\boldsymbol{x}_i\right)}{p_{\boldsymbol{X}} \left(\boldsymbol{x}\right)} d \boldsymbol{x}_i \right] \\
        & \quad\quad =  - \log \mathbb{E}_{\hat{\boldsymbol{x}}_j \in \hat{\boldsymbol{\mathcal{X}}} |\breve{\boldsymbol{y}}_s}
        \left[ p_{{\boldsymbol{X}}|\hat{\boldsymbol{X}}} \left(\boldsymbol{x} | \hat{\boldsymbol{x}}_j \right) \right]
        \cdot \int_{\boldsymbol{x}_i \in \boldsymbol{\mathcal{X}}}
        \frac{p_{\boldsymbol{X}} \left(\boldsymbol{x}_i\right)}{p_{\boldsymbol{X}} \left(\boldsymbol{x}\right)} d \boldsymbol{x}_i \\
        & \quad\quad = -\log \mathbb{E}_{\hat{\boldsymbol{x}}_j \in \hat{\boldsymbol{\mathcal{X}}} |\breve{\boldsymbol{y}}_s}
        \left[ p_{{\boldsymbol{X}}|\hat{\boldsymbol{X}}} \left(\boldsymbol{x} | \hat{\boldsymbol{x}}_j \right) \right] - \log \int_{\boldsymbol{x}_i \in \boldsymbol{\mathcal{X}}}
        \frac{p_{\boldsymbol{X}} \left(\boldsymbol{x}_i\right)}{p_{\boldsymbol{X}} \left(\boldsymbol{x}\right)} d \boldsymbol{x}_i,
    \end{aligned}
\end{equation}
in which step $(\mathrm{a})$ replaces $p_{\hat{\boldsymbol{X}}|{\boldsymbol{X}}}(\hat{\boldsymbol{x}}_j | \boldsymbol{x}_i)$ with $p_{\hat{\boldsymbol{X}}|{\boldsymbol{X}}}(\hat{\boldsymbol{x}}_j | \boldsymbol{x})$ according to the equality in Eq. \eqref{replaceEquations}, and step $(\mathrm{b})$ is obtained by applying Bayes’ theorem in reverse.

After the above transformation, the likelihood term in the original analysis that was not directly associated with the source signal $\boldsymbol{x}$ can be decomposed into the sum of two terms directly related to $\boldsymbol{x}$, i.e.,
$-\log \mathbb{E}_{\hat{\boldsymbol{x}}_j \in \hat{\boldsymbol{\mathcal{X}}} \mid \breve{\boldsymbol{y}}s} \big[ p_{\hat{\boldsymbol{X}}|{\boldsymbol{X}}}(\boldsymbol{x} | \hat{\boldsymbol{x}}_j) \big]$ and $- \log \int_{\boldsymbol{x}_i \in \boldsymbol{\mathcal{X}}} \dfrac{p_{\boldsymbol{X}} \left(\boldsymbol{x}_i\right)}{p_{\boldsymbol{X}} \left(\boldsymbol{x}\right)} d \boldsymbol{x}_i$. For these two terms:

\begin{itemize}
    \item The first term takes the form of a typical reconstruction term. Inside the logarithm, it represents the estimation of the source signal $\boldsymbol{x}$ conditioned on the reconstructed signal sample $\hat{\boldsymbol{x}}_j$ obtained from the generative model $g_s(\cdot,;\boldsymbol{\theta})$, given the synonymous representation $\breve{\boldsymbol{y}}_s$, and the expectation is taken over the reconstructed synset $\hat{\boldsymbol{\mathcal{X}}}$.
    
    \item The second term takes the form of a distributional divergence. Inside the logarithm, it is given by the integral over the ideal synset $\boldsymbol{\mathcal{X}}$ of the ratio between the marginal distribution of each sample $\boldsymbol{x}_i \in \boldsymbol{\mathcal{X}}$ and the true distribution of the source signal $\boldsymbol{x}$.
\end{itemize}

However, note that the first term contains an expectation over reconstructed samples $\hat{\boldsymbol{x}}_j \in \hat{\boldsymbol{\mathcal{X}}}$ conditioned on $\breve{\boldsymbol{y}}_s$ inside the logarithm, while the second term contains an integral over signal samples in the ideal synset $\boldsymbol{\mathcal{X}}$ inside the logarithm. As a result, neither the reconstruction term within the logarithm nor the distribution-divergence-like term within the integral can be directly obtained from the analysis of \eqref{ENLSL_step3_1}.

{ 
To address this issue while avoiding a loose direct Jensen relaxation, we apply the Lyapunov relaxation in Definition~\ref{def:LyapunovRelaxation} to construct a tight-to-relaxed upper-bound family. Specifically, applying Eq.~\eqref{LyapunovRelaxationIneq} to both the reconstruction likelihood term and the normalized distribution-ratio term yields
\begin{equation}\label{ENLSL_step3_2}
\begin{aligned}
    & -\log \mathbb{E}_{\hat{\boldsymbol{x}}_j \in \hat{\boldsymbol{\mathcal{X}}} |\breve{\boldsymbol{y}}_s}
    \left[ p_{{\boldsymbol{X}}|\hat{\boldsymbol{X}}} \left(\boldsymbol{x} | \hat{\boldsymbol{x}}_j \right) \right]
    - \log \int_{\boldsymbol{x}_i \in \boldsymbol{\mathcal{X}}}
    \frac{p_{\boldsymbol{X}} \left(\boldsymbol{x}_i\right)}
    {p_{\boldsymbol{X}} \left(\boldsymbol{x}\right)}
    d \boldsymbol{x}_i \\
    & \quad =
    -\log \mathbb{E}_{\hat{\boldsymbol{x}}_j \in \hat{\boldsymbol{\mathcal{X}}} |\breve{\boldsymbol{y}}_s}
    \left[ p_{{\boldsymbol{X}}|\hat{\boldsymbol{X}}} \left(\boldsymbol{x} | \hat{\boldsymbol{x}}_j \right) \right]
    - \log \frac{1}{|\boldsymbol{\mathcal{X}}|}
    \int_{\boldsymbol{x}_i \in \boldsymbol{\mathcal{X}}}
    \frac{p_{\boldsymbol{X}} \left(\boldsymbol{x}_i\right)}
    {p_{\boldsymbol{X}} \left(\boldsymbol{x}\right)}
    d \boldsymbol{x}_i
    - \log |\boldsymbol{\mathcal{X}}| \\ 
    & \quad \leq
    -\frac{1}{\rho}\log
    \mathbb{E}_{\hat{\boldsymbol{x}}_j \in \hat{\boldsymbol{\mathcal{X}}} |\breve{\boldsymbol{y}}_s}
    \left[
    \left(p_{{\boldsymbol{X}}|\hat{\boldsymbol{X}}}
    \left(\boldsymbol{x} | \hat{\boldsymbol{x}}_j \right)\right)^{\rho}
    \right]
    -\frac{1}{\rho}\log
    \frac{1}{|\boldsymbol{\mathcal{X}}|}
    \int_{\boldsymbol{x}_i \in \boldsymbol{\mathcal{X}}}
    \left(
    \frac{p_{\boldsymbol{X}} \left(\boldsymbol{x}_i\right)}
    {p_{\boldsymbol{X}} \left(\boldsymbol{x}\right)}
    \right)^{\rho}
    d \boldsymbol{x}_i
    - \log |\boldsymbol{\mathcal{X}}|,
    \quad 0<\rho\leq1 .
\end{aligned}
\end{equation}
Here, \( |\boldsymbol{\mathcal X}|=\mu(\boldsymbol{\mathcal X}) \) denotes the measure of the measurable synset under the reference measure \(\mu\), such as the Lebesgue measure in Euclidean data spaces. According to Definition~\ref{def:LyapunovRelaxation}, the bound in Eq.~\eqref{ENLSL_step3_2} is tight when \(\rho=1\), while $\rho\rightarrow 0$, it becomes the Jensen-limit form
    \begin{equation}\label{ENLSL_step3_2_limit}
    \begin{aligned}
        & \lim_{\rho\rightarrow 0} -\frac{1}{\rho}\log \mathbb{E}_{\hat{\boldsymbol{x}}_j \in \hat{\boldsymbol{\mathcal{X}}} |\breve{\boldsymbol{y}}_s} \left[ \left(p_{{\boldsymbol{X}}|\hat{\boldsymbol{X}}} \left(\boldsymbol{x} | \hat{\boldsymbol{x}}_j \right)\right)^{\rho} \right] -\frac{1}{\rho}\log \frac{1}{|\boldsymbol{\mathcal{X}}|} \int_{\boldsymbol{x}_i \in \boldsymbol{\mathcal{X}}} \left( \frac{p_{\boldsymbol{X}} \left(\boldsymbol{x}_i\right)} {p_{\boldsymbol{X}} \left(\boldsymbol{x}\right)} \right)^{\rho} d \boldsymbol{x}_i - \log |\boldsymbol{\mathcal{X}}| \\
        & \quad\quad = \mathbb{E}_{\hat{\boldsymbol{x}}_j \in \hat{\boldsymbol{\mathcal{X}}} |\breve{\boldsymbol{y}}_s} \left[-\log p_{{\boldsymbol{X}}|\hat{\boldsymbol{X}}} \left(\boldsymbol{x} | \hat{\boldsymbol{x}}_j \right)\right] + \frac{1}{|\boldsymbol{\mathcal{X}}|} \int_{\boldsymbol{x}_i \in \boldsymbol{\mathcal{X}}} \log \frac{p_{\boldsymbol{X}} \left(\boldsymbol{x}\right)} {p_{\boldsymbol{X}} \left(\boldsymbol{x}_i\right)} d \boldsymbol{x}_i - \log |\boldsymbol{\mathcal{X}}|.
    \end{aligned}
    \end{equation}
Thus, the original Jensen-based result is not treated as a strict equivalent objective, but as the limiting relaxed form of a tight-to-relaxed bound family. The parameter $\rho$ controls the relaxation gap: $\rho=1$ preserves the original synonymous likelihood objective, while $\rho\rightarrow 0$ gives the tractable decomposition into the reconstruction term, the distribution-ratio term, and the synset-size term.}

\begin{figure}[t]
	\centering{\includegraphics[width=0.5\textwidth]{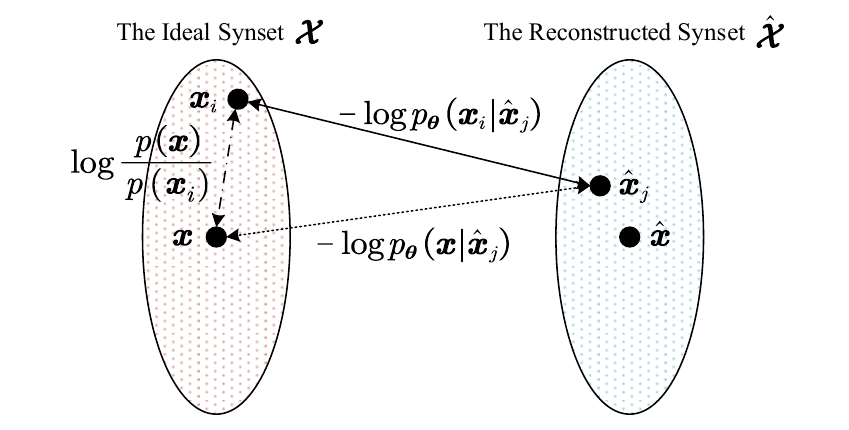}}
    \caption{Illustration of single-point likelihood analysis in the synonymous likelihood term.}
    \label{fig_4}
\end{figure}

For clarity, Fig.~\ref{fig_4} illustrates a special case of the synonymous likelihood derivation from Eq. \eqref{ENLSL_step2} to Eq. \eqref{ENLSL_step3_2}, namely the single-point likelihood analysis. In this case, the figure omits the expectation over reconstructed samples in the reconstructed synset $\hat{\boldsymbol{\mathcal{X}}}$ and the integral over samples in the ideal synset $\boldsymbol{\mathcal{X}}$ as in Eq. \eqref{ENLSL_step2}. Instead, it analyzes the syntactic likelihood estimation
$-\log p_{{\boldsymbol{X}}|\hat{\boldsymbol{X}}}(\boldsymbol{x}_i | \hat{\boldsymbol{x}}_j)$
for any sample $\boldsymbol{x}_i$ in the ideal synset $\boldsymbol{\mathcal{X}}$ conditioned on a single reconstructed sample $\hat{\boldsymbol{x}}_j$ (indicated by the double-arrow solid line in the figure).

Since all the samples $\boldsymbol{x}_i$ in the ideal synset $\boldsymbol{\mathcal{X}}$ are not directly accessible in practice, the computation of the syntactic likelihood
$-\log p_{{\boldsymbol{X}}|\hat{\boldsymbol{X}}}(\boldsymbol{x}_i | \hat{\boldsymbol{x}}_j)$
can be further decomposed into two parts: first, the syntactic likelihood estimation of the source signal $\boldsymbol{x}$, $-\log p_{{\boldsymbol{X}}|\hat{\boldsymbol{X}}}(\boldsymbol{x} | \hat{\boldsymbol{x}}_j)$ (represented by the double-arrow dashed line in the figure); second, the corresponding distribution ratio term, $\log \dfrac{p_{\boldsymbol{X}}(\boldsymbol{x})}{p_{\boldsymbol{X}}(\boldsymbol{x}_i)}$ (represented by the double-arrow dash-dotted line). This transformation maps the likelihood computation of the unobservable sample $\boldsymbol{x}_i$ into a form based on the observable source sample and the associated distributional relationship.

Building on the single-point syntactic likelihood analysis, the full derivation from Eq. \eqref{ENLSL_step2} to Eq. \eqref{ENLSL_step3_2} is essentially a natural extension achieved by introducing integration over the ideal synset $\boldsymbol{\mathcal{X}}$ and taking the expectation over the reconstructed synset $\hat{\boldsymbol{\mathcal{X}}}$. The result under this natural extension corresponds to the analysis presented in Eq. \eqref{ENLSL_step3_2}.



{  Next, each term in the parameterized bound in Eq. \eqref{ENLSL_step3_2} is further examined under the double expectation
$\mathbb{E}_{\boldsymbol{x} \sim p_{\boldsymbol{X}}(\boldsymbol{x})} \mathbb{E}_{\breve{\boldsymbol{y}}_s \sim q_{\boldsymbol{\phi}}(\breve{\boldsymbol{y}}_s | \boldsymbol{x})}$. In particular, we first analyze the tight form corresponding to $\rho=1$, and then show that its Jensen-limit form with $\rho\rightarrow0$ reduces to the tractable distortion-divergence tradeoff used in practical optimization.}

\begin{enumerate}
    \item \textbf{Analysis of the First Term:} { For the first term, before taking the Jensen-limit relaxation, the parameterized form is given by
    \begin{equation}
    \begin{aligned}
        \mathcal{D}_{\rho} \triangleq \mathbb{E}_{\boldsymbol{x}\sim p_{\boldsymbol{X}}} \mathbb{E}_{\breve{\boldsymbol{y}}_s\sim q_{\boldsymbol{\phi}}} \left[ -\frac{1}{\rho} \log \mathbb{E}_{\hat{\boldsymbol{x}}_j \in \hat{\boldsymbol{\mathcal{X}}}|\breve{\boldsymbol{y}}_s} \left[\left(p_{\boldsymbol{X}|\hat{\boldsymbol{X}}} \left(\boldsymbol{x}|\hat{\boldsymbol{x}}_j\right)\right)^{\rho} \right] \right], \quad 0<\rho\leq1.
    \end{aligned}
    \end{equation}
    When $\rho=1$, this term is exactly the tight form inherited from the original synonymous likelihood objective. If the conditional likelihood follows a typical high-dimensional Gaussian distribution $p_{\boldsymbol{X}|\hat{\boldsymbol{X}}}(\boldsymbol{x}|\hat{\boldsymbol{x}}_j) =\mathcal{N}(\boldsymbol{x}|\hat{\boldsymbol{x}}_j,\sigma^2\boldsymbol{I}_d)$, then the parameterized form can be written as
    \begin{equation}
    \begin{aligned}
        \mathcal{D}_{\rho} = - \frac{1}{\rho} \mathbb{E}_{\boldsymbol{x}\sim p_{\boldsymbol{X}}} \mathbb{E}_{\breve{\boldsymbol{y}}_s\sim q_{\boldsymbol{\phi}}} \left[ \log \mathbb{E}_{\hat{\boldsymbol{x}}_j \in \hat{\boldsymbol{\mathcal{X}}}|\breve{\boldsymbol{y}}_s} \exp\left( -\frac{\rho}{2\sigma^2} \left|\left|\boldsymbol{x}-\hat{\boldsymbol{x}}_j\right|\right|^2 \right) \right] + \frac{d}{2}\log(2\pi\sigma^2), \quad 0<\rho\leq1.
    \end{aligned}
    \end{equation}
    Thus, the tight form still has a distortion-related interpretation: it corresponds to a logarithmic moment, or soft aggregation, of the reconstruction distortion over the reconstructed synset. 
    
    As $\rho\rightarrow0$, this term continuously degenerates to the expected negative log-likelihood form, which further reduces to the expected MSE under the Gaussian likelihood assumption. As the Jensen-limit case of the above tight form, the first term becomes $
    \lim_{\rho \rightarrow 0} \mathcal{D}_{\rho} = \mathbb{E}_{\boldsymbol{x}\sim p_{\boldsymbol{X}}\left(\boldsymbol{x}\right)} \mathbb{E}_{\breve{\boldsymbol{y}}_s \sim q_{\boldsymbol{\phi}}\left(\breve{\boldsymbol{y}}_s|\boldsymbol{x}\right)} \mathbb{E}_{\hat{\boldsymbol{x}}_j \in \hat{\boldsymbol{\mathcal{X}}} |\breve{\boldsymbol{y}}_s}\left[- \log p_{\boldsymbol{X}|\hat{\boldsymbol{X}}}(\boldsymbol{x}|\hat{\boldsymbol{x}}_j)\right]$, which represents the reconstruction term averaged over the source distribution and the corresponding reconstructed synset samples.} 
    As a typical case, if the likelihood $p_{\boldsymbol{X}|\hat{\boldsymbol{X}}} (\boldsymbol{x} | \hat{\boldsymbol{x}}_j)$ follows a high-dimensional Gaussian distribution $\mathcal{N}\left(\boldsymbol{x}|\hat{\boldsymbol{x}}_j, \sigma^2\boldsymbol{I}_d\right)$ (where $d$ denotes the dimension of the source signal $\boldsymbol{x}$), this term can be further expressed as
    \begin{equation}
        \begin{aligned}
            & \mathbb{E}_{\boldsymbol{x}\sim p_{\boldsymbol{X}}\left(\boldsymbol{x}\right)} \mathbb{E}_{\breve{\boldsymbol{y}}_s \sim q_{\boldsymbol{\phi}}\left(\breve{\boldsymbol{y}}_s|\boldsymbol{x}\right) } \mathbb{E}_{\hat{\boldsymbol{x}}_j \in \hat{\boldsymbol{\mathcal{X}}} |\breve{\boldsymbol{y}}_s} \left[- \log p_{\boldsymbol{X}|\hat{\boldsymbol{X}}} \left(\boldsymbol{x} | \hat{\boldsymbol{x}}_j \right)\right] \\
            & \quad\quad = \frac{1}{2\sigma^2} \cdot \mathbb{E}_{\boldsymbol{x}\sim p_{\boldsymbol{X}}\left(\boldsymbol{x}\right)} \mathbb{E}_{\breve{\boldsymbol{y}}_s \sim q_{\boldsymbol{\phi}}\left(\breve{\boldsymbol{y}}_s|\boldsymbol{x}\right)} \mathbb{E}_{\hat{\boldsymbol{x}}_j \in \hat{\boldsymbol{\mathcal{X}}} |\breve{\boldsymbol{y}}_s} \left|\left|\boldsymbol{x} - \hat{\boldsymbol{x}}_j \right|\right|^2 + \frac{d}{2}\log\left(2\pi\sigma^2\right),
        \end{aligned}
    \end{equation}
    in which $\sigma^2$ denotes the variance parameter when modeling the distortion as a high-dimensional Gaussian distribution. Under this assumption, the first term in \eqref{ENLSL_step3_2} can be equivalently expressed as the sum of a weighted expected mean squared error (E-MSE) term and a constant term.

    In typical learned compression techniques \cite{balle2016end, balle2018variational}, the weighted coefficient $\frac{1}{2\sigma^2}$ is usually replaced by a hyperparameter, which serves as a tradeoff factor between the mean squared error (MSE) loss and the coding rate. However, within the synonymous source coding framework, when using the expected mean squared error (E-MSE) for all the reconstructed samples as the distortion loss, interpreting this hyperparameter solely as $\frac{1}{2\sigma^2}$ is insufficient. {  Within the above tight-to-relaxed bound family, the coefficient of the expected distortion term is affected by both the likelihood variance and the relaxation parameter $\rho$. In the tight case $\rho=1$, the bound coincides with the original synonymous likelihood objective; in the Jensen-limit case $\rho\rightarrow 0$, the reconstruction likelihood term reduces to the expected negative log-likelihood, which can be expressed as an E-MSE term under the Gaussian likelihood assumption. Therefore, the hyperparameter $\lambda_d$ used in practical optimization should be understood as an implementation-level tradeoff coefficient for the Jensen-limit relaxed objective, rather than as a proof of strict equivalence between the original likelihood and the relaxed distortion term.}
    {  Accordingly, the tight-to-relaxed analysis of the first term can be summarized as
    \begin{equation}
        \mathbb{E}_{\boldsymbol{x}\sim p_{\boldsymbol{X}}} \mathbb{E}_{\breve{\boldsymbol{y}}_s\sim q_{\boldsymbol{\phi}}} \left[ -\frac{1}{\rho}  \log \mathbb{E}_{\hat{\boldsymbol{x}}_j \in \hat{\boldsymbol{\mathcal{X}}}|\breve{\boldsymbol{y}}_s} \exp\left( -\rho \cdot \lambda_d \left|\left|\boldsymbol{x}-\hat{\boldsymbol{x}}_j\right|\right|^2 \right) \right] + \text{const}.
    \end{equation}
    We define the synonymous reconstructed distortion 
    \begin{equation}
        d_{\rho}(\boldsymbol{x}, \hat{\boldsymbol{\mathcal{X}}}) \triangleq -\frac{1}{\rho}  \log \mathbb{E}_{\hat{\boldsymbol{x}}_j \in \hat{\boldsymbol{\mathcal{X}}}|\breve{\boldsymbol{y}}_s} \exp\left( -\rho \cdot \lambda_d \left|\left|\boldsymbol{x}-\hat{\boldsymbol{x}}_j\right|\right|^2 \right).
    \end{equation}
    In this form, $\rho=1$ corresponds to the tight distortion-related term
    \begin{equation}
    \begin{aligned}
        & \mathbb{E}_{\boldsymbol{x}\sim p_{\boldsymbol{X}}} \mathbb{E}_{\breve{\boldsymbol{y}}_s\sim q_{\boldsymbol{\phi}}} \left[d_{1}(\boldsymbol{x}, \hat{\boldsymbol{\mathcal{X}}})\right] + \text{const}\\
        & \quad\quad\quad\quad\quad = \mathbb{E}_{\boldsymbol{x}\sim p_{\boldsymbol{X}}} \mathbb{E}_{\breve{\boldsymbol{y}}_s\sim q_{\boldsymbol{\phi}}} \left[ - \log \mathbb{E}_{\hat{\boldsymbol{x}}_j \in \hat{\boldsymbol{\mathcal{X}}}|\breve{\boldsymbol{y}}_s} \exp\left( -\lambda_d \cdot \left|\left|\boldsymbol{x}-\hat{\boldsymbol{x}}_j\right|\right|^2 \right) \right] + \text{const},
    \end{aligned}
    \end{equation}
    while $\rho\rightarrow0$ yields a Jensen-limit E-MSE term
    \begin{equation}
        \lambda_d \cdot \mathbb{E}_{\boldsymbol{x}\sim p_{\boldsymbol{X}}\left(\boldsymbol{x}\right)} \mathbb{E}_{\breve{\boldsymbol{y}}_s \sim q_{\boldsymbol{\phi}}\left(\breve{\boldsymbol{y}}_s|\boldsymbol{x}\right)} \mathbb{E}_{\hat{\boldsymbol{x}}_j \in \hat{\boldsymbol{\mathcal{X}}} |\breve{\boldsymbol{y}}_s} \left|\left|\boldsymbol{x} - \hat{\boldsymbol{x}}_j\right|\right|^2 + \text{const},
    \end{equation}
    in which the stated $\lambda_d = \dfrac{1}{2\sigma^2} > 0.$}
    

    When using alternative distortion metrics, such as expected MS-SSIM (E-MS-SSIM), in place of E-MSE, the analysis of the corresponding weighted coefficient follows a similar approach and is not repeated here. { Accordingly, a more general form of the analysis can be expressed as follows:
    \begin{equation}
        \mathbb{E}_{\boldsymbol{x}\sim p_{\boldsymbol{X}}} \mathbb{E}_{\breve{\boldsymbol{y}}_s\sim q_{\boldsymbol{\phi}}} \left[ -\frac{1}{\rho} \log \mathbb{E}_{\hat{\boldsymbol{x}}_j \in \hat{\boldsymbol{\mathcal{X}}}|\breve{\boldsymbol{y}}_s} \exp\left( -\rho \cdot \lambda_d \Delta\left(\boldsymbol{x}, \hat{\boldsymbol{x}}_j\right) \right) \right] + \text{const}.
    \end{equation}
    We can also define a more general synonymous reconstructed distortion 
    \begin{equation}
        d_{\rho, \Delta}(\boldsymbol{x}, \hat{\boldsymbol{\mathcal{X}}}) \triangleq -\frac{1}{\rho}  \log \mathbb{E}_{\hat{\boldsymbol{x}}_j \in \hat{\boldsymbol{\mathcal{X}}}|\breve{\boldsymbol{y}}_s} \exp\left( -\rho \cdot \lambda_d \Delta\left(\boldsymbol{x}, \hat{\boldsymbol{x}}_j\right) \right).
    \end{equation}
    in this form, $\rho=1$ corresponds to the tight distortion-related term
    \begin{equation}
    \begin{aligned}
        & \mathbb{E}_{\boldsymbol{x}\sim p_{\boldsymbol{X}}} \mathbb{E}_{\breve{\boldsymbol{y}}_s\sim q_{\boldsymbol{\phi}}} \left[d_{1, \Delta}(\boldsymbol{x}, \hat{\boldsymbol{\mathcal{X}}})\right] + \text{const}\\
        & \quad\quad\quad\quad = -\mathbb{E}_{\boldsymbol{x}\sim p_{\boldsymbol{X}}} \mathbb{E}_{\breve{\boldsymbol{y}}_s\sim q_{\boldsymbol{\phi}}} \left[ \log \mathbb{E}_{\hat{\boldsymbol{x}}_j \in \hat{\boldsymbol{\mathcal{X}}}|\breve{\boldsymbol{y}}_s} \exp\left( -\lambda_d \cdot\Delta\left(\boldsymbol{x}, \hat{\boldsymbol{x}}_j\right) \right) \right] + \text{const},
    \end{aligned}
    \end{equation}
    while $\rho\rightarrow0$ yields a Jensen-limit expected distortion term
    \begin{equation}
        \min_{\boldsymbol{\phi}, \, \boldsymbol{\theta}} \,\, \lambda_d \cdot \mathbb{E}_{\boldsymbol{x}\sim p_{\boldsymbol{X}}\left(\boldsymbol{x}\right)} \mathbb{E}_{\breve{\boldsymbol{y}}_s\sim q_{\boldsymbol{\phi}}(\breve{\boldsymbol{y}}_s|\boldsymbol{x})}
        \mathbb{E}_{\hat{\boldsymbol{x}}_j \in \hat{\boldsymbol{\mathcal{X}}} |\breve{\boldsymbol{y}}_s} \left[\Delta\left(\boldsymbol{x}, \hat{\boldsymbol{x}}_j\right)\right] + \text{const},
    \end{equation}}
    in which the stated $\Delta(\cdot)$ denotes an arbitrary distortion measure used to characterize the difference between the source signal $\boldsymbol{x}$ and a given reconstructed signal sample $\hat{\boldsymbol{x}}_j$.
    
    \item \textbf{Analysis of the Second Term:} Under the double expectation $\mathbb{E}_{\boldsymbol{x} \sim p_{\boldsymbol{X}}(\boldsymbol{x})} \mathbb{E}_{\breve{\boldsymbol{y}}_s \sim q_{\boldsymbol{\phi}}(\breve{\boldsymbol{y}}_s \mid \boldsymbol{x})}$, before taking the Jensen-limit relaxation, the {  parameterized} second term can be expressed as
    { 
    \begin{equation}\label{DivergenceTerm_original_parameterized}
        \begin{aligned}
            f_{\rho}(p_{\boldsymbol{X}}, \boldsymbol{\mathcal{X}}) & \triangleq \mathbb{E}_{\boldsymbol{x}\sim p_{\boldsymbol{X}}\left(\boldsymbol{x}\right)}\mathbb{E}_{\breve{\boldsymbol{y}}_s \sim q_{\boldsymbol{\phi}}\left(\breve{\boldsymbol{y}}_s|\boldsymbol{x}\right)} \left[-\frac{1}{\rho}\log \frac{1}{|\boldsymbol{\mathcal{X}}|} \int_{\boldsymbol{x}_i \in \boldsymbol{\mathcal{X}}} \left( \frac{p_{\boldsymbol{X}} \left(\boldsymbol{x}_i\right)} {p_{\boldsymbol{X}} \left(\boldsymbol{x}\right)} \right)^{\rho} d \boldsymbol{x}_i\right] \\
            & = \frac{1}{\rho} \mathbb{E}_{\boldsymbol{x}\sim p_{\boldsymbol{X}}\left(\boldsymbol{x}\right)} \left[\log \frac{1}{|\boldsymbol{\mathcal{X}}|} \int_{\boldsymbol{x}_i \in \boldsymbol{\mathcal{X}}} \left( \frac{p_{\boldsymbol{X}} \left(\boldsymbol{x}_i\right)} {p_{\boldsymbol{X}} \left(\boldsymbol{x}\right)} \right)^{-\rho} d \boldsymbol{x}_i\right] \\
            & = \frac{1}{\rho} \mathbb{E}_{\boldsymbol{x}\sim p_{\boldsymbol{X}}\left(\boldsymbol{x}\right)} \left[\log \frac{1}{|\boldsymbol{\mathcal{X}}|} \int_{\boldsymbol{x}_i \in \boldsymbol{\mathcal{X}}} \left( \frac{p_{\boldsymbol{X}} \left(\boldsymbol{x}\right)} {p_{\boldsymbol{X}} \left(\boldsymbol{x}_i\right)} \right)^{\rho} d \boldsymbol{x}_i\right].
        \end{aligned}
    \end{equation}

    Since this term does not involve the reconstructed signal sample $\hat{\boldsymbol{x}}_j$ or the reconstruction process, the second expectation is omitted in the above analysis. $\rho = 1$ corresponds to the tight second term, while $\rho \rightarrow 0$ yields a Jensen limit form
    \begin{equation}\label{DivergenceTerm_original}
        \lim_{\rho \rightarrow 0} f_{\rho}(p_{\boldsymbol{X}}, \boldsymbol{\mathcal{X}}) = \frac{1}{\left|\boldsymbol{\mathcal{X}}\right|} \int_{\boldsymbol{x}_i \in \boldsymbol{\mathcal{X}}}
            \mathbb{E}_{\boldsymbol{x}\sim p_{\boldsymbol{X}}\left(\boldsymbol{x}\right)}
            \left[\log \frac{p_{\boldsymbol{X}} \left(\boldsymbol{x}\right)}{p_{\boldsymbol{X}} \left(\boldsymbol{x}_i\right)}\right] d \boldsymbol{x}_i.
    \end{equation}

    In this form, the expression inside the integral takes a clearer KL-divergence-like structure, with the numerator given by the true marginal distribution of the source signal $\boldsymbol{x}$ and the denominator given by the marginal distribution of any sample $\boldsymbol{x}_i$ in the ideal synset $\boldsymbol{\mathcal{X}}$.
    }
    


    As this analysis does not involve the parameters $\boldsymbol{\theta}$ of the generative model $g_s(\cdot,;\boldsymbol{\theta})$ and depends only on the source distribution $p_{\boldsymbol{X}}(\boldsymbol{x})$ and the corresponding ideal synset $\boldsymbol{\mathcal{X}}$, the result can be denoted as
    \begin{equation}
        f(p_{\boldsymbol{X}}, \boldsymbol{\mathcal{X}}) \triangleq  \frac{1}{\left|\boldsymbol{\mathcal{X}}\right|} \int_{\boldsymbol{x}_i \in \boldsymbol{\mathcal{X}}} \mathbb{E}_{\boldsymbol{x}\sim p_{\boldsymbol{X}}\left(\boldsymbol{x}\right)}     \left[\log \frac{p_{\boldsymbol{X}} \left(\boldsymbol{x}\right)}{p_{\boldsymbol{X}} \left(\boldsymbol{x}_i\right)}\right] d \boldsymbol{x}_i.
    \end{equation}
    {  For a given source distribution \(p_{\boldsymbol{X}}\) and a fixed ideal synset partition \(\boldsymbol{\mathcal{X}}\), this result is independent of the codec optimization variables \(\boldsymbol{\phi}\) and \(\boldsymbol{\theta}\), and therefore can be treated as a constant value during model optimization.} As a special case, if the ideal synset $\boldsymbol{\mathcal{X}}$ contains only a single sample (i.e., the source signal $\boldsymbol{x}$), or if all samples $\boldsymbol{x}_i$ in $\boldsymbol{\mathcal{X}}$ have a probability density equal to $p_{\boldsymbol{X}}(\boldsymbol{x})$ (i.e., $p_{\boldsymbol{X}}(\boldsymbol{x}_i) = p_{\boldsymbol{X}}(\boldsymbol{x}), \forall \boldsymbol{x}_i \in \boldsymbol{\mathcal{X}}$), then $f(p_{\boldsymbol{X}}, \boldsymbol{\mathcal{X}}) = 0$.

    However, precisely because this analysis does not involve the generative model parameters $\boldsymbol{\theta}$, it cannot provide effective guidance for optimizing the generative model. Consequently, if this result is treated as a parameter-independent constant during optimization, the synonymous likelihood term degenerates into a syntactic likelihood term, relying solely on the reconstruction error and no longer reflecting the structural constraints of the ideal synset.

    To overcome this issue, the marginal distribution of the source signal $\boldsymbol{x}$ generated by the model $g_s(\cdot,;\boldsymbol{\theta})$, denoted $p_{\hat{\boldsymbol{X}}}(\boldsymbol{x})$, can be introduced into Eq. \ref{DivergenceTerm_original}. This ensures that the optimization objective effectively depends on the parameters $\boldsymbol{\theta}$ and enables guiding the generative model toward the limiting value characterized by the constant $f(p_{\boldsymbol{X}}, \boldsymbol{\mathcal{X}})$, i.e.,
    \begin{equation}\label{KL_separation}
        \begin{aligned}
            & \frac{1}{\left|\boldsymbol{\mathcal{X}}\right|} \int_{\boldsymbol{x}_i \in \boldsymbol{\mathcal{X}}}
            \mathbb{E}_{\boldsymbol{x}\sim p_{\boldsymbol{X}}\left(\boldsymbol{x}\right)}
            \left[\log \frac{p_{\boldsymbol{X}} \left(\boldsymbol{x}\right)}{p_{\boldsymbol{X}} \left(\boldsymbol{x}_i\right)}\right] d \boldsymbol{x}_i \\
            & = \frac{1}{\left|\boldsymbol{\mathcal{X}}\right|} \int_{\boldsymbol{x}_i \in \boldsymbol{\mathcal{X}}}
            \mathbb{E}_{\boldsymbol{x}\sim p_{\boldsymbol{X}}\left(\boldsymbol{x}\right)}
            \left[\log \left(\frac{p_{\boldsymbol{X}} \left(\boldsymbol{x}\right)}{p_{\hat{\boldsymbol{X}}} \left(\boldsymbol{x}\right)} \cdot \frac{p_{\hat{\boldsymbol{X}}} \left(\boldsymbol{x}\right)}{p_{\boldsymbol{X}} \left(\boldsymbol{x}_i\right)} \right) \right] d \boldsymbol{x}_i \\
            & = \frac{1}{\left|\boldsymbol{\mathcal{X}}\right|} \int_{\boldsymbol{x}_i \in \boldsymbol{\mathcal{X}}}
            \mathbb{E}_{\boldsymbol{x}\sim p_{\boldsymbol{X}}\left(\boldsymbol{x}\right)}
            \left[\log \frac{p_{\boldsymbol{X}} \left(\boldsymbol{x}\right)}{p_{\hat{\boldsymbol{X}}} \left(\boldsymbol{x}\right)}\right] d \boldsymbol{x}_i + \frac{1}{\left|\boldsymbol{\mathcal{X}}\right|} \int_{\boldsymbol{x}_i \in \boldsymbol{\mathcal{X}}}
            \mathbb{E}_{\boldsymbol{x}\sim p_{\boldsymbol{X}}\left(\boldsymbol{x}\right)}
            \left[\log \frac{p_{\hat{\boldsymbol{X}}} \left(\boldsymbol{x}\right)}{p_{\boldsymbol{X}} \left(\boldsymbol{x}_i\right)}\right] d \boldsymbol{x}_i \\ 
            & = \mathbb{E}_{\boldsymbol{x}\sim p_{\boldsymbol{X}}\left(\boldsymbol{x}\right)}
            \left[\log \frac{p_{\boldsymbol{X}} \left(\boldsymbol{x}\right)}{p_{\hat{\boldsymbol{X}}} \left(\boldsymbol{x}\right)}\right] + \frac{1}{\left|\boldsymbol{\mathcal{X}}\right|} \int_{\boldsymbol{x}_i \in \boldsymbol{\mathcal{X}}}
            \mathbb{E}_{\boldsymbol{x}\sim p_{\boldsymbol{X}}\left(\boldsymbol{x}\right)}
            \left[\log \frac{p_{\hat{\boldsymbol{X}}} \left(\boldsymbol{x}\right)}{p_{\boldsymbol{X}} \left(\boldsymbol{x}_i\right)}\right] d \boldsymbol{x}_i \\
            & = D_{\text{KL}}\left[p_{\boldsymbol{x}}||p_{\hat{\boldsymbol{x}}}\right] + \frac{1}{\left|\boldsymbol{\mathcal{X}}\right|} \int_{\boldsymbol{x}_i \in \boldsymbol{\mathcal{X}}}
            \mathbb{E}_{\boldsymbol{x}\sim p_{\boldsymbol{X}}\left(\boldsymbol{x}\right)}
            \left[\log \frac{p_{\hat{\boldsymbol{X}}} \left(\boldsymbol{x}\right)}{p_{\boldsymbol{X}} \left(\boldsymbol{x}_i\right)}\right] d \boldsymbol{x}_i.
        \end{aligned}
    \end{equation}

    The above analysis shows that by introducing $p_{\hat{\boldsymbol{X}}}(\boldsymbol{x})$, the intractable term in Eq. \eqref{DivergenceTerm_original} can be equivalently expressed as the sum of the KL divergence $D_{\text{KL}}\big[p_{\boldsymbol{x}} \,||\, p_{\hat{\boldsymbol{x}}}\big]$ between the source distribution $p_{\boldsymbol{X}}(\boldsymbol{x})$ (denoted $p_{\boldsymbol{X}}$) and the distribution constructed by the generative model $p_{\hat{\boldsymbol{X}}}(\boldsymbol{x})$ (denoted $p_{\hat{\boldsymbol{X}}}$), together with another intractable arithmetic mean term (denoted as $\delta_p \triangleq  \frac{1}{\left|\boldsymbol{\mathcal{X}}\right|} \int_{\boldsymbol{x}_i \in \boldsymbol{\mathcal{X}}} \mathbb{E}_{\boldsymbol{x}\sim p_{\boldsymbol{X}}\left(\boldsymbol{x}\right)} \left[\log \frac{p_{\hat{\boldsymbol{X}}} \left(\boldsymbol{x}\right)}{p_{\boldsymbol{X}} \left(\boldsymbol{x}_i\right)}\right] d \boldsymbol{x}_i$). Accordingly, the analysis result can be compactly written as
    \begin{equation}\label{SimplifiedDivergenceEquation}
        f(p_{\boldsymbol{X}}, \boldsymbol{\mathcal{X}}) = D_{\text{KL}}[p_{\boldsymbol{X}}||p_{\hat{\boldsymbol{X}}}] + \delta_p.
    \end{equation}

    Since the KL divergence term $D_{\text{KL}}\big[p_{\boldsymbol{x}} \,||\, p_{\hat{\boldsymbol{x}}}\big] \ge 0$ always holds, introducing this inequality into Eq. \eqref{KL_separation} yields $f(p_{\boldsymbol{X}}, \boldsymbol{\mathcal{X}}) \ge \delta_p$, i.e.,
    \begin{equation}
        \frac{1}{\left|\boldsymbol{\mathcal{X}}\right|} \int_{\boldsymbol{x}_i \in \boldsymbol{\mathcal{X}}} \mathbb{E}_{\boldsymbol{x}\sim p_{\boldsymbol{X}}\left(\boldsymbol{x}\right)} \left[\log \frac{p_{\boldsymbol{X}} \left(\boldsymbol{x}\right)}{p_{\boldsymbol{X}} \left(\boldsymbol{x}_i\right)}\right] d \boldsymbol{x}_i \ge \frac{1}{\left|\boldsymbol{\mathcal{X}}\right|} \int_{\boldsymbol{x}_i \in \boldsymbol{\mathcal{X}}} \mathbb{E}_{\boldsymbol{x}\sim p_{\boldsymbol{X}}\left(\boldsymbol{x}\right)} \left[\log \frac{p_{\hat{\boldsymbol{X}}} \left(\boldsymbol{x}\right)}{p_{\boldsymbol{X}} \left(\boldsymbol{x}_i\right)}\right] d \boldsymbol{x}_i.
    \end{equation}
    The equality holds when $D_{\text{KL}}[p_{\boldsymbol{x}} \,||\, p_{\hat{\boldsymbol{x}}}] = 0$. In this situation, the distribution $p_{\hat{\boldsymbol{X}}}(\boldsymbol{x})$ constructed by the generative model $g_s(\cdot\,;\boldsymbol{\theta})$ exactly matches the source distribution $p_{\boldsymbol{X}}(\boldsymbol{x})$. Consequently, the generative model $g_s(\cdot\,;\boldsymbol{\theta}^*)$ can generate any signal sample following the source distribution $p_{\boldsymbol{X}}(\boldsymbol{x})$, and thus can produce any signal sample $\boldsymbol{x}_i$ within the ideal synset $\boldsymbol{\mathcal{X}}$, i.e., $p_{\hat{\boldsymbol{X}}}(\boldsymbol{x}_i) = p_{\boldsymbol{X}}(\boldsymbol{x}_i)$.

    {  More generally, when we scale the parameter $\rho \rightarrow 1$, following the same idea, the following equation can be derived:
    \begin{equation}\label{KL_separation_tight}
        f_{\rho}(p_{\boldsymbol{X}}, \boldsymbol{\mathcal{X}}) = D_{\text{KL}}[p_{\boldsymbol{X}}||p_{\hat{\boldsymbol{X}}}] + \delta_{p, \rho},
    \end{equation}
    in which
    \begin{equation}
        \delta_{p, \rho} \triangleq \frac{1}{\rho} \mathbb{E}_{\boldsymbol{x}\sim p_{\boldsymbol{X}}\left(\boldsymbol{x}\right)} \left[\log \frac{1}{|\boldsymbol{\mathcal{X}}|} \int_{\boldsymbol{x}_i \in \boldsymbol{\mathcal{X}}} \left( \frac{p_{\hat{\boldsymbol{X}}} \left(\boldsymbol{x}\right)} {p_{\boldsymbol{X}} \left(\boldsymbol{x}_i\right)} \right)^{\rho} d \boldsymbol{x}_i\right]
    \end{equation}
    Since the KL divergence term $D_{\text{KL}}\big[p_{\boldsymbol{x}} \,||\, p_{\hat{\boldsymbol{x}}}\big] \ge 0$ always holds, introducing this inequality into Eq. \eqref{KL_separation_tight} yields $f_{\rho}(p_{\boldsymbol{X}}, \boldsymbol{\mathcal{X}}) \ge \delta_{p,\rho}$.
    }

    {  Next, we use the Jensen-limit form with $\rho\rightarrow0$ to illustrate the effect indirectly achieved by minimizing the KL divergence, while the results for other values of $\rho$ within $0<\rho\leq 1$ can be obtained by a straightforward generalization.} It should be noted that although both $f(p_{\boldsymbol{X}}, \boldsymbol{\mathcal{X}})$ and $\delta_p$ resemble KL divergence forms, their internal variables in the numerator and denominator differ, so non-negativity cannot be established via Jensen’s inequality. 
    {  Nevertheless, since \(f(p_{\boldsymbol{X}}, \boldsymbol{\mathcal{X}})\) is independent of the codec optimization variables once the source distribution and the ideal synset partition are given, and satisfies the equality in Eq. \eqref{SimplifiedDivergenceEquation} together with the KL divergence term \(D_{\text{KL}}[p_{\boldsymbol{X}} \,||\, p_{\hat{\boldsymbol{X}}}]\) and the arithmetic mean term \(\delta_p\), minimizing \(D_{\text{KL}}[p_{\boldsymbol{X}} \,||\, p_{\hat{\boldsymbol{X}}}]\) toward zero during optimization of the generative model parameters \(\boldsymbol{\theta}\) will cause \(\delta_p\) to converge to the constant value \(f(p_{\boldsymbol{X}}, \boldsymbol{\mathcal{X}})\).}
    This, in turn, guides the generative model $g_s(\cdot | \boldsymbol{\theta})$ to achieve an optimal estimate of the distributions of the source signal $\boldsymbol{x}$ and the synonymous samples $\boldsymbol{x}_i$.


    \begin{figure}[t]
	    \centering{\includegraphics[width=0.75\textwidth]{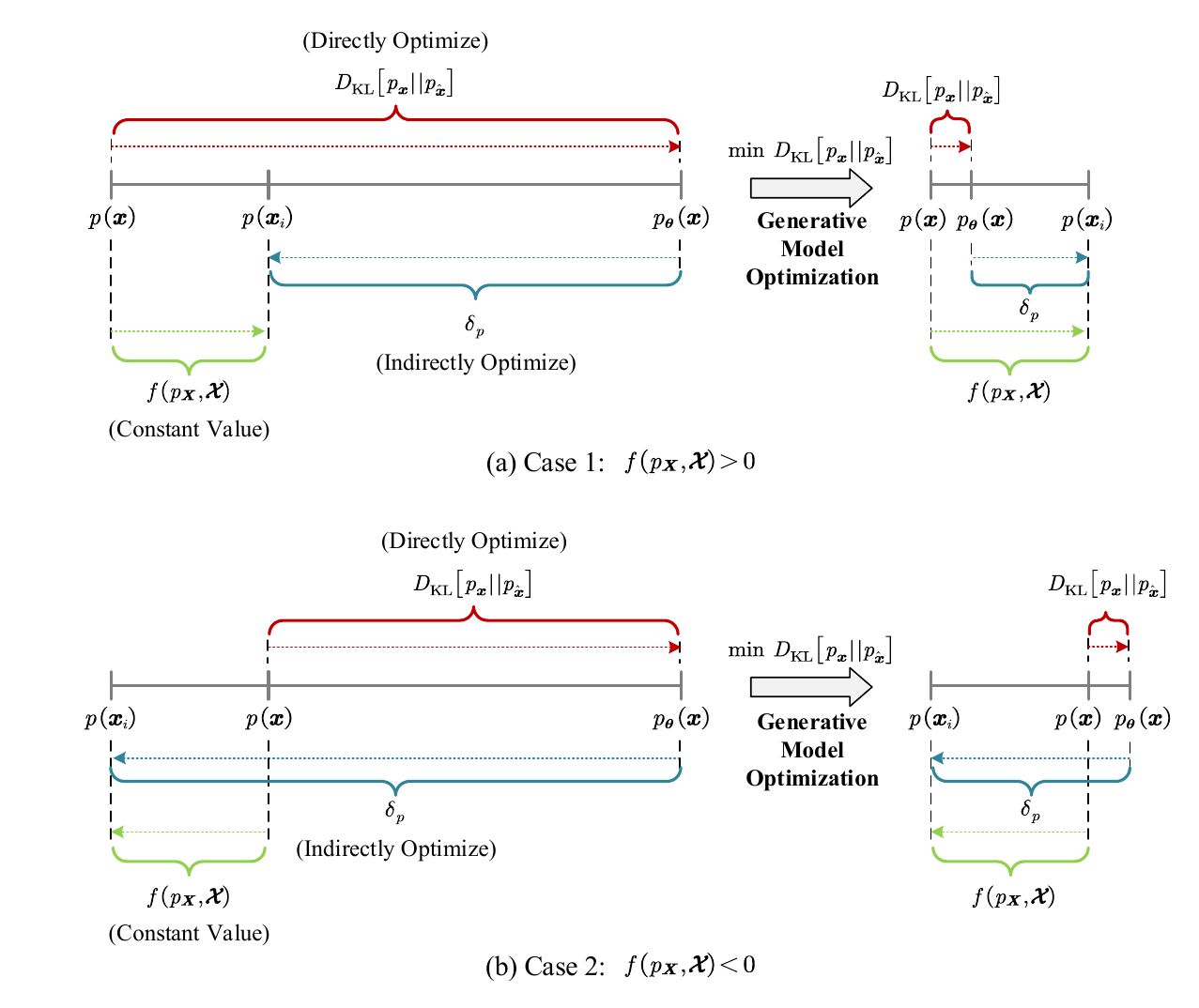}}
        \caption{Illustration of the equivalence for minimizing the distribution ratio term in synonymous likelihood optimization.}
        \label{fig_5}
    \end{figure}

    For clarity, Fig.~\ref{fig_5} provides a schematic illustration of the equality in Eq. \eqref{SimplifiedDivergenceEquation}. Subfigures (a) and (b) depict example cases where the constant value satisfies $f(p_{\boldsymbol{X}}, \boldsymbol{\mathcal{X}}) > 0$ and $f(p_{\boldsymbol{X}}, \boldsymbol{\mathcal{X}}) < 0$, respectively. In the figure, the arrow directions indicate the sign of the corresponding terms: arrows pointing right represent positive values, while arrows pointing left represent negative values.

    In subfigure (a), the constant value $f(p_{\boldsymbol{X}}, \boldsymbol{\mathcal{X}})$ is positive. Before optimization, the generative model $g_s(\cdot,;\boldsymbol{\theta})$ assigns a low probability to the source signal $\boldsymbol{x}$, so the KL divergence between the source distribution and the model distribution, $D_{\text{KL}}[p_{\boldsymbol{X}} \,||\, p_{\hat{\boldsymbol{X}}}]$, is much larger than $f(p_{\boldsymbol{X}}, \boldsymbol{\mathcal{X}})$, and the arithmetic mean term $\delta_p$ is negative. As the KL divergence is minimized to optimize the generative model, the model distribution gradually approaches the source distribution, causing $\delta_p$ to increase from negative to positive and converge toward the constant value $f(p_{\boldsymbol{X}}, \boldsymbol{\mathcal{X}})$. When the KL divergence reaches its minimum of 0, the term $\delta_p = f(p_{\boldsymbol{X}}, \boldsymbol{\mathcal{X}})$, which enables the generative model to produce any signal sample within the ideal synset $\boldsymbol{\mathcal{X}}$.

    In subfigure (b), the constant value $f(p_{\boldsymbol{X}}, \boldsymbol{\mathcal{X}})$ is negative. Before optimization, the KL divergence between the source distribution and the model distribution, $D_{\text{KL}}[p_{\boldsymbol{X}} \,||\, p_{\hat{\boldsymbol{X}}}]$, is in the opposite direction to the constant value $f(p_{\boldsymbol{X}}, \boldsymbol{\mathcal{X}})$ determined by the samples within the synset. As a result, the negative arithmetic mean term $\delta_p$ satisfies $|\delta_p| = |f(p_{\boldsymbol{X}}, \boldsymbol{\mathcal{X}})| + D_{\text{KL}}[p_{\boldsymbol{x}}||p_{\hat{\boldsymbol{x}}}]$. As the KL divergence is minimized during generative model optimization, the model distribution gradually approaches the source distribution, causing the magnitude of $\delta_p$ to decrease and converge toward the constant value $f(p_{\boldsymbol{X}}, \boldsymbol{\mathcal{X}})$. When the KL divergence reaches its minimum of 0, the term $\delta_p = f(p_{\boldsymbol{X}}, \boldsymbol{\mathcal{X}})$, which enables the generative model to generate any signal sample within the ideal synset $\boldsymbol{\mathcal{X}}$.

    It can be anticipated that, in the special case where the probability density of any sample $\boldsymbol{x}_i$ in the ideal synset equals that of the source signal (i.e., $p_{\boldsymbol{X}}(\boldsymbol{x}_i) = p_{\boldsymbol{X}}(\boldsymbol{x})$), we have $f(p_{\boldsymbol{X}}, \boldsymbol{\mathcal{X}}) = 0$. In this scenario, the probability of the source signal $p_{\boldsymbol{X}}(\boldsymbol{x})$ coincides with that of any sample in the ideal synset $p_{\boldsymbol{X}}(\boldsymbol{x}_i)$. Minimizing the KL divergence to optimize the generative model causes the arithmetic mean term $\delta_p$ to increase from a negative value toward zero, until $p_{\hat{\boldsymbol{X}}}(\boldsymbol{x})$ also aligns with $p_{\boldsymbol{X}}(\boldsymbol{x}_i)$. As a result, the generative model can produce any signal sample within the ideal synset $\boldsymbol{\mathcal{X}}$.

    In summary, for the second term in \eqref{ENLSL_step3_2}, { both the tight form with $\rho=1$ and the Jensen-limit form with $\rho\rightarrow0$ admit a KL-related decomposition. Minimizing the KL divergence $D_{\text{KL}}[p_{\boldsymbol{X}} ,||, p_{\hat{\boldsymbol{X}}}]$ enables the generative model to approach the source distribution, thereby guiding the remaining term $\delta_{p,\rho}$, or its Jensen-limit case $\delta_p$, toward the corresponding synset-dependent distribution-ratio term.}
    This provides guidance for optimizing the generative model with respect to {  the parameterized second term $f_{\rho}(p_{\boldsymbol{X}}, \boldsymbol{\mathcal{X}})$ and its Jensen limits constant value $f(p_{\boldsymbol{X}}, \boldsymbol{\mathcal{X}})$}, effectively transforming the second term into { an unified} minimization of the typical perceptual distribution divergence $D_{\text{KL}}[p_{\boldsymbol{X}} \,||\, p_{\hat{\boldsymbol{X}}}]$, i.e.,
    \begin{equation}
       \min_{\boldsymbol{\phi}, \, \boldsymbol{\theta}} \,\, D_{\text{KL}}\left[p_{\boldsymbol{X}}||p_{\hat{\boldsymbol{X}}}\right].
    \end{equation}
    
    \item \textbf{Analysis of the Third Term:} Once the ideal synset $\boldsymbol{\mathcal{X}}$ corresponding to the source signal $\boldsymbol{x}$ is determined, the third term $-\log |\boldsymbol{\mathcal{X}}|$ becomes a constant value and thus does not participate in the optimization process.
\end{enumerate}


In summary, {  the synonymous likelihood term admits a tight-to-relaxed upper-bound characterization. The tight member with $\rho=1$ is exactly equivalent to the original synonymous likelihood objective, while the Jensen-limit member with $\rho\rightarrow 0$ yields a tractable weighted distortion-KL divergence tradeoff, i.e.,
    \begin{equation}\label{repeat_ENLSL}
    \begin{aligned}
        & \min_{{\boldsymbol{\phi}, \boldsymbol{\theta}}} \,\, \mathbb{E}_{\boldsymbol{x}\sim p_{\boldsymbol{X}}} \mathbb{E}_{\breve{\boldsymbol{y}}_s\sim q_{\boldsymbol{\phi}}} \left[ -\log p_{\boldsymbol{\theta}} \left(\boldsymbol{\mathcal{X}}|\breve{\boldsymbol{y}}_s \right) \right] \\
        & \quad\quad\,\,\, \xRightarrow[\text{relaxation}]{\text{Lyapunov}} \quad\quad\, \min_{{\boldsymbol{\phi}, \boldsymbol{\theta}}} \,\, \mathbb{E}_{\boldsymbol{x}\sim p_{\boldsymbol{X}}} \mathbb{E}_{\breve{\boldsymbol{y}}_s\sim q_{\boldsymbol{\phi}}} \left[ d_{\rho,\Delta}\left(\boldsymbol{x},\hat{\boldsymbol{\mathcal{X}}}\right) \right] + D_{\mathrm{KL}} \left[p_{\boldsymbol{X}}||p_{\hat{\boldsymbol{X}}}\right] \\
        & \quad\quad\,\,\, \xRightarrow[\text{tight form}]{\rho=1} \quad\quad\, \min_{{\boldsymbol{\phi}, \boldsymbol{\theta}}} \,\, \mathbb{E}_{\boldsymbol{x}\sim p_{\boldsymbol{X}}} \mathbb{E}_{\breve{\boldsymbol{y}}_s\sim q_{\boldsymbol{\phi}}} \left[ d_{1,\Delta}\left(\boldsymbol{x},\hat{\boldsymbol{\mathcal{X}}}\right) \right] + D_{\mathrm{KL}} \left[p_{\boldsymbol{X}}||p_{\hat{\boldsymbol{X}}}\right] \\
        & \quad \xRightarrow[\text{Jensen-limit relaxation}]{\rho\rightarrow0} \min_{{\boldsymbol{\phi}, \boldsymbol{\theta}}} \,\,  \lambda_d \cdot \mathbb{E}_{\boldsymbol{x}\sim p_{\boldsymbol{X}}} \mathbb{E}_{\breve{\boldsymbol{y}}_s\sim q_{\boldsymbol{\phi}}} \mathbb{E}_{\hat{\boldsymbol{x}}_j \in \hat{\boldsymbol{\mathcal{X}}}|\breve{\boldsymbol{y}}_s} \left[\Delta \left(\boldsymbol{x},\hat{\boldsymbol{x}}_j\right) \right] + D_{\mathrm{KL}} \left[p_{\boldsymbol{X}}||p_{\hat{\boldsymbol{X}}}\right],
    \end{aligned}
    \end{equation}
in which $d_{\rho, \Delta}(\boldsymbol{x}, \hat{\boldsymbol{\mathcal{X}}}) \triangleq -\frac{1}{\rho}  \log \mathbb{E}_{\hat{\boldsymbol{x}}_j \in \hat{\boldsymbol{\mathcal{X}}}|\breve{\boldsymbol{y}}_s} \exp\left( -\rho \cdot \lambda_d \Delta\left(\boldsymbol{x}, \hat{\boldsymbol{x}}_j\right) \right)$, and $\lambda_d$ is the tradeoff coefficient for the expected distortion term.
}
{  This indicates that the general “distortion–KL” tradeoff form (i.e. the $\rho\rightarrow0$ relaxation) is not the tight limit of variational likelihood optimization; instead, it corresponds to a loose upper bound derived from Jensen’s inequality, whereas a tight limiting form can be obtained at $\rho=1$ based on Lyapunov’s inequality.} Thus, we complete the proof of the Synonymous Likelihood Lemma.
\end{proof}

From the Synonymous Likelihood Lemma and its proof, we can acknowledge that although the synonymous likelihood term $\mathbb{E}_{\boldsymbol{x}\sim p_{\boldsymbol{X}}\left(\boldsymbol{x}\right)} \mathbb{E}_{\breve{\boldsymbol{y}}_s\sim q_{\boldsymbol{\phi}}(\breve{\boldsymbol{y}}_s|\boldsymbol{x})} \left[-\log p_{\boldsymbol{\theta}}\left(\boldsymbol{\mathcal{X}}|\breve{\boldsymbol{y}}_s \right)\right]$ may not be directly computable in practice, its minimization at the semantic level can be equivalently transformed, under the lemma’s conditions, into the minimization of a weighted distortion-KL divergence tradeoff at the syntactic level, as in the equivalence given by Eq. \eqref{DPequivalent}. When the synonymous likelihood term reaches its minimum value of 0, the ideal approximate condition $p_{\boldsymbol{\theta}}(\boldsymbol{\mathcal{X}} | \breve{\boldsymbol{y}}_s) = 1$ holds for the synonymous mapping optimization, and the weighted distortion-KL divergence tradeoff in Eq. \eqref{DPequivalent} attains its minimum. This completes the verification of the synonymity-perception consistency principle, i.e., \textbf{the optimal identification of semantic information at the semantic level is theoretically consistent with perceptual optimization at the syntactic level}.

\subsection{Tight-Bound Synonymous Source Coding Rate Characterization}\label{SectionV_II}

Based on the Synonymous Likelihood Lemma and its proof in Section~\ref{SectionV_I}, we have established the consistency between synonymous likelihood optimization from the synonymity perspective and the distortion-perception tradeoff. This section incorporates that result into the optimization objective of the synonymous codec to derive {  a tight-bound characterization of the synonymous rate objective.}
{  Before presenting this characterization, we clarify that the following result is formulated at the information-theoretic level, where the optimization is taken over all admissible stochastic encoding-decoding kernels rather than over a specific neural parameterization. We assume that the achievable rate-synonymous-constraint region is closed and convex, where convexity follows from the standard time-sharing argument. Moreover, for non-degenerate constraint levels \(S\), we assume the existence of an admissible kernel satisfying the tight synonymous constraint strictly, i.e., \(\mathcal S_{1,\Delta}<S\), which gives the Slater-type regularity condition required for the Lagrangian characterization.}

{  We next present the tight-bound synonymous rate characterization under a synonymous reconstruction criterion, together with its proof.}

{ 
\begin{theorem}[\textbf{Tight-Bound Synonymous Source Coding Rate Characterization}]\label{Theorem_SIC}
    Under the standard regularity assumptions stated above, the tight-bound synonymous source coding rate can be characterized by
    \begin{equation}
        R(\boldsymbol{\mathcal{X}}) = \inf_{p_{\hat{\tilde{\boldsymbol X}}|\boldsymbol X}} I\left(\boldsymbol X;\hat{\tilde{\boldsymbol X}}\right)  \quad \mathrm{s.t.} \quad 
        \mathcal S_{1, \Delta} \leq S,
    \end{equation}
    where \(\mathcal S_{1, \Delta}\) denotes the tight synonymous reconstruction constraint induced by the \(\rho=1\) form of the Synonymous Likelihood Lemma, i.e.,
    \begin{equation}
        \mathcal S_{1, \Delta} = \mathbb{E}_{\boldsymbol{x}\sim p_{\boldsymbol{X}}} \mathbb{E}_{\breve{\boldsymbol{y}}_s\sim q_{\boldsymbol{\phi}}} \left[ d_{1,\Delta}\left(\boldsymbol{x},\hat{\boldsymbol{\mathcal{X}}}\right) \right] + D_{\mathrm{KL}} \left[ p_{\boldsymbol X} \, ||\, p_{\hat{\boldsymbol X}} \right],
    \end{equation}
    in which $d_{1, \Delta}(\boldsymbol{x}, \hat{\boldsymbol{\mathcal{X}}}) \triangleq - \log \mathbb{E}_{\hat{\boldsymbol{x}}_j \in \hat{\boldsymbol{\mathcal{X}}}|\breve{\boldsymbol{y}}_s} \exp\left( -\lambda_d \cdot \Delta\left(\boldsymbol{x}, \hat{\boldsymbol{x}}_j\right) \right).$ This constraint contains the optimizable part of the tight reconstruction-related term and the KL-related distribution term, while synset-dependent terms independent of the codec optimization variables are absorbed into the constraint level. Equivalently, each supported boundary point of the closed convex achievable rate-synonymous-constraint region can be characterized by the Lagrangian form
    \begin{equation}
        \inf_{p_{\hat{\tilde{\boldsymbol X}}|\boldsymbol X}} I\left(\boldsymbol X;\hat{\tilde{\boldsymbol X}}\right) + \lambda_s \mathcal S_{1, \Delta},\quad\lambda_s\geq0.
    \end{equation}
\end{theorem}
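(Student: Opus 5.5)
The plan has two parts. First, justify that the operational synonymous rate equals the stated constrained mutual-information infimum. Second, obtain the Lagrangian form from convex duality, using the regularity assumptions stated just before Theorem~\ref{Theorem_SIC}.

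\textbf{Step 1: a converse bound on the rate.} Start from the optimization criterion \eqref{SVLBO_optimize2}. There the synonymous coding rate $-\log p_{\breve{\boldsymbol Y}_s}(\breve{\boldsymbol y}_s)$, averaged over $\boldsymbol x$ and $q_{\boldsymbol\phi}$, is traded against the synonymous likelihood term. Replace the likelihood term by its tight $\rho=1$ member from the Synonymous Likelihood Lemma (Lemma~\ref{ENLSL}). At $\rho=1$, Eq.~\eqref{ENLSL_step3_2} holds with equality, and the synset-size term $-\log|\boldsymbol{\mathcal X}|$ is constant once the synset partition is fixed. The constant parts of $f_1$ and $\delta_{p,1}$ from \eqref{KL_separation_tight} are likewise fixed. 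All of these constants are absorbed into the level $S$, so the synonymous reconstruction requirement becomes exactly $\mathcal S_{1,\Delta}\le S$.

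For the rate itself, the expected code length of $\bar{\boldsymbol Y}_s$ is at least $H(\bar{\boldsymbol Y}_s)$. Since $\bar{\boldsymbol Y}_s$ is a function of $\boldsymbol X$ and the encoder's randomness, $H(\bar{\boldsymbol Y}_s)\ge I(\boldsymbol X;\bar{\boldsymbol Y}_s)$. The reconstructed semantic variable $\hat{\tilde{\boldsymbol X}}$ is obtained through the chain $\boldsymbol X\to\bar{\boldsymbol Y}_s\to\hat{\boldsymbol{\mathcal X}}\to\hat{\tilde{\boldsymbol X}}$, using the de-mapping $\Theta$. Applying the data-processing inequality along this chain gives
\[
\text{rate}\;\ge\; I(\boldsymbol X;\hat{\tilde{\boldsymbol X}}).
\]
This holds for every admissible kernel, so the rate is at least the infimum $R(\boldsymbol{\mathcal X})$.

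\textbf{Step 2: achievability.} Take any kernel $p_{\hat{\tilde{\boldsymbol X}}|\boldsymbol X}$ that meets the constraint. Use the standard random-coding or stochastic variable-length construction, in the style of Theis--Wagner and Chen et al.\ cited in Section~\ref{SectionI}, to simulate the kernel. The rate is $I(\boldsymbol X;\hat{\tilde{\boldsymbol X}})$ plus a logarithmic overhead that vanishes per dimension. The decoder then samples $\hat{\boldsymbol x}_j$ from the synset indexed by $\hat{\tilde{\boldsymbol x}}$, which reproduces the same $d_{1,\Delta}$ and $p_{\hat{\boldsymbol X}}$, and hence the same $\mathcal S_{1,\Delta}$.

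\textbf{Step 3: the Lagrangian form.} Define the boundary function $R(S)$ as the infimum for each level $S$. Time-sharing between two kernels mixes both the mutual information and $\mathcal S_{1,\Delta}$ affinely on the extended space, so $R(S)$ is convex and nonincreasing in $S$. The achievable region is closed by assumption, and Slater's condition holds because some kernel has $\mathcal S_{1,\Delta}<S$. Strong duality for convex programs then gives
\[
R(S)=\sup_{\lambda_s\ge0}\Big[\inf_{p}\big(I+\lambda_s\mathcal S_{1,\Delta}\big)-\lambda_s S\Big].
\]
Every supported boundary point is therefore attained by a minimizer of $I+\lambda_s\mathcal S_{1,\Delta}$, with $\lambda_s$ given by a subgradient of $R(\cdot)$ at $S$.

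\textbf{Main obstacle.} The hardest part is verifying convexity of $\mathcal S_{1,\Delta}$ under mixtures, and the difficulty is the $-\log\mathbb E\exp$ structure of $d_{1,\Delta}$. The first thing to try is to note that $-\log\mathbb E[\cdot]$ is convex in the mixing weight applied inside the expectation, by concavity of $\log$. The KL term is jointly convex in $p_{\hat{\boldsymbol X}}$, and $p_{\hat{\boldsymbol X}}$ is affine in the kernel. If pointwise convexity of $d_{1,\Delta}$ fails, fall back on the time-sharing argument, which is an operational mixture over blocks rather than a mixture of kernels, and rely on the closed-convex-region assumption the paper already grants.
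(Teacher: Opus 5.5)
Your Step 3 is the paper's own argument: time-sharing for convexity, assumed closedness, and Slater for the Lagrangian. Steps 1--2, however, reach the rate identity by a different route.

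\textbf{Comparison of routes.} The paper proves no converse or achievability. It equates the synonymous coding rate with $I(\boldsymbol X;\hat{\tilde{\boldsymbol X}})$ through the chain $H_s(\hat{\tilde{\boldsymbol X}})=H_s(\hat{\tilde{\boldsymbol X}})-H(\hat{\tilde{\boldsymbol X}}|\boldsymbol X)=I(\boldsymbol X;\hat{\tilde{\boldsymbol X}})$, which requires a deterministic encoder and decoder.

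Your data-processing converse holds for every code. In particular it covers the stochastic kernels over which the theorem actually takes the infimum, where $H(\hat{\tilde{\boldsymbol X}}|\boldsymbol X)>0$ and the paper's chain breaks. So your route gives that infimum genuine operational meaning.

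The cost lies in achievability:
\begin{itemize}
\item Channel simulation needs common randomness shared by encoder and decoder.
\item The $\log(I+1)+O(1)$ overhead vanishes only after a per-dimension normalization, which the one-shot, un-normalized $R(\boldsymbol{\mathcal X})$ does not carry. You therefore obtain $R(\boldsymbol{\mathcal X})$ only up to a logarithmic term.
\item Reproducing $\mathcal S_{1,\Delta}$ requires the intra-synset sampling to depend only on the decoded synset, i.e.\ the Markov chain $\boldsymbol X - \hat{\tilde{\boldsymbol X}} - \hat{\boldsymbol X}$. The paper's architecture supplies this.
\end{itemize}

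Your duality step also covers all $\lambda_s\ge0$, as in the statement. The paper's proof instead ends with the multiplier $\lambda+1\ge1$ inherited from the SVLBO Lagrangian.

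Your ``main obstacle'' is unnecessary. For $\lambda_s\ge0$, the identity $\inf_p[I+\lambda_s\mathcal S_{1,\Delta}]=\inf_{S'}[R(S')+\lambda_s S']$ holds without any convexity. The supporting-line characterization therefore needs only that $R(\cdot)$ be convex and closed. Time-sharing with the index revealed to the decoder gives this directly: the $d_{1,\Delta}$ term becomes affine in the mixing weight and the KL term convex.

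\textbf{Error in Step 1.} Your claim that $\delta_{p,1}$ is a constant is wrong. By \eqref{KL_separation_tight}, $\delta_{p,1}=f_1-D_{\mathrm{KL}}[p_{\boldsymbol X}\|p_{\hat{\boldsymbol X}}]$ with $f_1$ codec-independent, so $\delta_{p,1}$ varies with $\boldsymbol{\theta}$. Absorbing it into $S$ would cancel the KL term and leave a constraint on $\mathbb{E}[d_{1,\Delta}]$ alone.

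The tight likelihood requirement therefore does not become \emph{exactly} $\mathcal S_{1,\Delta}\le S$. The KL term enters only through the lemma's surrogate step: the fixed $f_1$ is replaced by $D_{\mathrm{KL}}$, on the grounds that driving the divergence to zero drives $\delta_{p,1}$ to $f_1$. The paper's proof simply imports the lemma's $\rho=1$ form at this point without claiming an identity. You should do the same, and treat $\mathcal S_{1,\Delta}$ as the constraint that defines $R(\boldsymbol{\mathcal X})$.
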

}
\begin{proof}
    As discussed in Section~\ref{SectionIV_3}, the optimization of the internal parameters of the synonymous likelihood codec, as a concrete instance of the synonymous mapping and de-mapping process in coding design, can be formulated as a constrained distribution approximation, i.e.,
    \begin{equation}\label{SVLBO_optimize2_repeat}
    \begin{aligned}
        \boldsymbol{\phi}^*, \boldsymbol{\theta}^* & = \arg \min_{\boldsymbol{\phi},\, \boldsymbol{\theta}}\, \mathbb{E}_{\boldsymbol{x}\sim p\left(\boldsymbol{x}\right)} [- \text{SVLBO}(\boldsymbol{\phi}, \boldsymbol{\theta}; \boldsymbol{\mathcal{X}})] \quad \text{s.t.} \quad p_{\boldsymbol{\theta}}(\boldsymbol{\mathcal{X}}|\breve{\boldsymbol{y}}_s) = 1\\
        & = \arg \min_{\boldsymbol{\phi}, \, \boldsymbol{\theta}} \,\,   \mathbb{E}_{\boldsymbol{x} \sim p_{\boldsymbol{X}}(\boldsymbol{x})}\mathbb{E}_{\breve{\boldsymbol{y}}_s \sim q_{\boldsymbol{\phi}}(\breve{\boldsymbol{y}}_s|\boldsymbol{x})} \left[-\log p_{\boldsymbol{\theta}}(\boldsymbol{\mathcal{X}}|\breve{\boldsymbol{y}}_s) - \log p_{\breve{\boldsymbol{Y}}_s}(\breve{\boldsymbol{y}}_s)\right] \quad \\
        & \quad\quad\quad\quad  \text{s.t.} \quad \mathbb{E}_{\boldsymbol{x} \sim p_{\boldsymbol{X}}(\boldsymbol{x})}\mathbb{E}_{\breve{\boldsymbol{y}}_s \sim q_{\boldsymbol{\phi}}(\breve{\boldsymbol{y}}_s|\boldsymbol{x})} \left[\log p_{\boldsymbol{\theta}}(\boldsymbol{\mathcal{X}}|\breve{\boldsymbol{y}}_s)\right] = 0 \\
        & = \arg \min_{\boldsymbol{\phi}, \, \boldsymbol{\theta}} \,\,   \mathbb{E}_{\boldsymbol{x} \sim p_{\boldsymbol{X}}(\boldsymbol{x})}\mathbb{E}_{\breve{\boldsymbol{y}}_s \sim q_{\boldsymbol{\phi}}(\breve{\boldsymbol{y}}_s|\boldsymbol{x})} \Bigg[\underset{\substack{\text{Weighted Synonymous} \\ \text{Likelihood Term}}}{\underbrace{-(\lambda + 1) \cdot \log p_{\boldsymbol{\theta}}(\boldsymbol{\mathcal{X}}|\breve{\boldsymbol{y}}_s)}} \,\, \underset{\substack{\text{Synonymous Coding} \\ \text{Rate}}}{\underbrace{- \log p_{\breve{\boldsymbol{Y}}_s}(\breve{\boldsymbol{y}}_s)}}\Bigg],
    \end{aligned}
    \end{equation}
    in which $\lambda$ denotes the Lagrange multiplier.

    By incorporating the analysis result of synonymous likelihood optimization from Lemma~\ref{ENLSL}, the above optimization objective can be further expressed as
    \begin{equation}
    \begin{aligned}
        \boldsymbol{\phi}^*, \boldsymbol{\theta}^* &= \arg \min_{\boldsymbol{\phi},\, \boldsymbol{\theta}}\, {  (\lambda + 1) \cdot \underset{\text{Tight Synonymous Reconstruction Constraint}}{\underbrace{\mathbb{E}_{\boldsymbol{x}\sim p_{\boldsymbol{X}}} \mathbb{E}_{\breve{\boldsymbol{y}}_s\sim q_{\boldsymbol{\phi}}} \left[ d_{1,\Delta}\left(\boldsymbol{x},\hat{\boldsymbol{\mathcal{X}}}\right) \right] + D_{\mathrm{KL}} \left[ p_{\boldsymbol X} \, ||\, p_{\hat{\boldsymbol X}} \right]}}} \\
        & \quad\quad\quad\quad\quad\quad+ \, \underset{\text{Synonymous Coding Rate}}{\underbrace{\mathbb{E}_{\boldsymbol{x} \sim p_{\boldsymbol{X}}(\boldsymbol{x})} \mathbb{E}_{\breve{\boldsymbol{y}}_s \sim q_{\boldsymbol{\phi}}(\breve{\boldsymbol{y}}_s | \boldsymbol{x})}[-\log p_{\breve{\boldsymbol{Y}}_s}(\breve{\boldsymbol{y}}_s)]}}.
    \end{aligned}
    \end{equation}
    {  In this form, although the reconstruction-related term and the KL-related term are included in the same synonymous constraint and do not admit independent linear tradeoff multipliers, their internal balance is not absent: under the Gaussian conditional likelihood, the reconstruction-related component is controlled by the likelihood scale \(1/(2\sigma^2)\) inside the exponential weight of the logarithmic-moment term \(d_{1,\Delta}\). Therefore, changing \(\sigma^2\), when treated as a modeling hyperparameter, changes the nonlinear internal balance between reconstruction fidelity and distributional consistency.}
    To this end, the parameter optimization of the synonymous mapping and de-mapping in the synonymous source codec can be reformulated as the minimization of a tradeoff {  between the tight synonymous reconstruction constraint and the synonymous coding rates}.

    Furthermore, the synonymous coding rate term can be analyzed as follows:
    \begin{equation}
        \mathbb{E}_{\boldsymbol{x} \sim p_{\boldsymbol{X}}(\boldsymbol{x})} \mathbb{E}_{\breve{\boldsymbol{y}}_s \sim q_{\boldsymbol{\phi}}(\breve{\boldsymbol{y}}_s | \boldsymbol{x})}[-\log p_{\breve{\boldsymbol{Y}}_s}(\breve{\boldsymbol{y}}_s)] \, = \, H_s(\tilde{\boldsymbol{Y}}_s)  
        \, \overset{(\mathrm{a})}{=} \, H_s(\hat{\tilde{\boldsymbol{X}}}) \, \overset{(\mathrm{b})}{=} \, H_s(\hat{\tilde{\boldsymbol{X}}}) - H(\hat{\tilde{\boldsymbol{X}}} | \boldsymbol{X}) \, \overset{(\mathrm{c})}{=} \, I(\boldsymbol{X};\hat{\tilde{\boldsymbol{X}}}),
    \end{equation}
    in which step $(\mathrm{a})$ is realized by a deterministic synonymous source decoder (i.e., the generative model), which maps the latent representation synset $\tilde{\boldsymbol{\mathcal{Y}}}$ to the reconstructed synset $\tilde{\boldsymbol{\mathcal{X}}}$; step $(\mathrm{b})$ is realized by a deterministic synonymous source codec (i.e., the inference and generative models), establishing a deterministic mapping from any input $\boldsymbol{x}$ to the reconstructed synset $\hat{\boldsymbol{\mathcal{X}}}$; and step $(\mathrm{c})$ follows directly from the definition of the single-side semantic mutual information in the theorem. When the synonymous source codec is optimized to the ideal approximation, semantic lossless source coding is achieved, and $I(\boldsymbol{X}; \hat{\tilde{\boldsymbol{X}}}) = H_s(\tilde{\boldsymbol{X}})$. 


    { 
    According to the \(\rho=1\) tight form of the Synonymous Likelihood Lemma, the synonymous likelihood term induces the single tight synonymous constraint \(\mathcal S_{1,\Delta}\), rather than two independently weighted linear penalty terms. Therefore, after incorporating the synonymous coding rate term, the corresponding information-theoretic constrained problem can be written as
    \begin{equation}\label{SVIgoal}
        R(\boldsymbol{\mathcal{X}}) = \inf_{p_{\hat{\tilde{\boldsymbol X}}|\boldsymbol X}} I\left(\boldsymbol X;\hat{\tilde{\boldsymbol X}}\right) \quad \mathrm{s.t.} \quad \mathcal S_{1,\Delta} \leq S .
    \end{equation}
    The convexity of the achievable rate-synonymous-constraint region follows from the standard time-sharing argument. Specifically, for any two admissible stochastic kernels achieving \((R_1,S_1)\) and \((R_2,S_2)\), introducing an independent time-sharing variable yields an admissible mixed kernel whose achievable pair is no larger than the corresponding convex combination. Under the assumed closedness of the achievable region and the strict feasibility condition \(\mathcal S_{1,\Delta}<S\), the Slater-type regularity condition holds. Therefore, each supported boundary point of the constrained problem can be characterized by the Lagrangian form
    \begin{equation}
        \inf_{p_{\hat{\tilde{\boldsymbol X}}|\boldsymbol X}}I\left(\boldsymbol X;\hat{\tilde{\boldsymbol X}}\right) + (\lambda + 1) \cdot \mathcal S_{1,\Delta}, \quad \lambda\geq0,
    \end{equation}
    without a duality gap for non-degenerate constraint levels. Boundary cases can be obtained by taking limits from strictly feasible constraint levels. Thus, we complete the proof of the tight-bound synonymous source coding rate characterization.
}    
\end{proof}

{ 
\begin{remark}[The tight synonymous rate-distortion-perception tradeoff]
The tight-bound characterization above should be distinguished from its Jensen-limit relaxed form. In the tight form with \(\rho=1\), the reconstruction-related term and the KL-related distribution term are included in the same tight synonymous reconstruction constraint \(\mathcal S_{1,\Delta}\), and therefore they should not be interpreted as two independently weighted linear penalty terms. Nevertheless, their internal balance is not absent. Under the Gaussian conditional likelihood, the reconstruction-related component is controlled by the likelihood scale \(1/(2\sigma^2)\) inside the exponential weight of the logarithmic-moment term. Thus, changing \(\sigma^2\), when treated as a modeling hyperparameter, changes the nonlinear internal balance between reconstruction distortion and distributional consistency. Therefore, the tight-bound synonymous source coding rate characterization also reveals a nonlinear and indirect synonymous rate-distortion-perception tradeoff relationship, i.e.,
\begin{equation}
    \inf_{p_{\hat{\tilde{\boldsymbol X}}|\boldsymbol X}} \underset{\text{Synonymous Rate}}{\underbrace{I\left(\boldsymbol X;\hat{\tilde{\boldsymbol X}}\right)}} + (\lambda + 1) \cdot \Big[\mathbb{E}_{\boldsymbol{x}\sim p_{\boldsymbol{X}}} \mathbb{E}_{\breve{\boldsymbol{y}}_s\sim q_{\boldsymbol{\phi}}} \bigg[ -\log \mathbb{E}_{\hat{\boldsymbol{x}}_j \in \hat{\boldsymbol{\mathcal{X}}}|\breve{\boldsymbol{y}}_s} \exp\bigg( -\lambda_d \cdot \underset{\text{Distortion}}{\underbrace{\Delta\left(\boldsymbol{x}, \hat{\boldsymbol{x}}_j\right)}} \bigg) \bigg] + \underset{\text{Perception}}{\underbrace{D_{\mathrm{KL}} \left[ p_{\boldsymbol X} \, ||\, p_{\hat{\boldsymbol X}} \right]}}\Big], \quad \lambda\geq0, \lambda_d>0.
\end{equation}
\end{remark}

\begin{remark}[Jensen-limit synonymous Rate-distortion-perception tradeoff relaxation]
When taking the Jensen-limit relaxation \(\rho\rightarrow0\), the tight logarithmic-moment term reduces to an expected distortion term, and the tight synonymous constraint becomes a tractable distortion-KL divergence form. Accordingly, for practical optimization, a synonymous source codec can be trained using the following synonymous rate-distortion-perception loss:
    \begin{equation}\label{loss_SIC}
    \begin{aligned}
        \mathcal{L} &= \lambda_d^* \cdot \mathbb{E}_{\boldsymbol{x} \sim p_{\boldsymbol{X}}(\boldsymbol{x})} \mathbb{E}_{\breve{\boldsymbol{y}}_s\sim q_{\boldsymbol{\phi}}(\breve{\boldsymbol{y}}_s|\boldsymbol{x})} \mathbb{E}_{\hat{\boldsymbol{x}}_i \in \hat{\boldsymbol{\mathcal{X}}} |\breve{\boldsymbol{y}}_s} \left[ d\left(\boldsymbol{x}, \hat{\boldsymbol{x}}_i\right) \right] + \lambda_p^* \cdot D_{\mathrm{KL}} \left[ p_{\boldsymbol{X}} || p_{\hat{\boldsymbol{X}}} \right] \\
        &\quad + \mathbb{E}_{\boldsymbol{x} \sim p_{\boldsymbol{X}}(\boldsymbol{x})} \mathbb{E}_{\breve{\boldsymbol{y}}_s \sim q_{\boldsymbol{\phi}}(\breve{\boldsymbol{y}}_s | \boldsymbol{x})} \left[ -\log p_{\breve{\boldsymbol{Y}}_s} \left( \breve{\boldsymbol{y}}_s \right) \right].
    \end{aligned}
    \end{equation}
    Here, \(\lambda_d^*\) and \(\lambda_p^*\) are practical tradeoff coefficients for the Jensen-limit relaxed objective, in which $\lambda_d^* = (\lambda + 1) \cdot\lambda_d, \lambda_p^* = \lambda + 1$. This relaxed loss can be regarded as an upper-bound surrogate of the strict tight-bound characterization in Theorem~\ref{Theorem_SIC}, which connects the proposed synonymous source coding analysis with the conventional RDP optimization form. Concrete visual examples of such learned synonymous reconstructions under different synonymous levels and coding settings have been provided in our prior ICML version~\cite{liang2025synonymous}, where the proposed parametric codec was experimentally implemented and evaluated.
\end{remark}
}

In addition, it can be observed that, since the size of the ideal synset $\boldsymbol{\mathcal{X}}$ corresponding to any input signal in practical coding scenarios is not well-defined, the ideal synset $\boldsymbol{\mathcal{X}}$ estimated by the synonymous source codec through optimization of the loss function Eq. \eqref{loss_SIC} is effectively determined by the weighting coefficients within the loss. In other words, the estimated ideal synset is not uniquely defined but is closely linked to the tradeoff mechanism in the optimization objective. By treating these weighting coefficients as hyperparameters and varying their values, the model learns different synonymous representations, and the size of the approximated ideal synset $\boldsymbol{\mathcal{X}}$ changes accordingly. Therefore, even without knowing the size of the ideal synset, the parameter optimization of the synonymous encoder with respect to the ideal synset remains unaffected.

In summary, based on the above proofs and discussions, {  we have provided a theoretical derivation of a tight-bound synonymous source coding rate characterization under perceptual optimization from the perspective of synonymous variational inference,}
{  and further shown how its Jensen-limit relaxation connects to the practical RDP-style optimization of perceptual codecs}.

\section{Relevant Discussions}\label{SectionVI}


\subsection{Discussion on the Natural Emergence of the Divergence Term}\label{SectionVI_1}

In this subsection, we analyze and discuss the fundamental reasons for the introduction of distributional divergence in perceptual compression. According to the proof of Lemma \ref{ENLSL}, the emergence of the distributional divergence term stems from the numerical approximation of the constant $f(p_{\boldsymbol{X}}, \boldsymbol{\mathcal{X}})$. { This term is fixed with respect to codec optimization once the source distribution and the ideal synset partition are given, though its exact value is unknown.} By approximating the source distribution using the generative model, the constant is effectively estimated, which naturally introduces the distributional divergence term. 

However, \textbf{the fundamental reason for the existence of this constant value lies in the fact that the reconstructed samples} $\hat{\boldsymbol{x}}_j$ \textbf{are not meant to estimate the original source signal} $\boldsymbol{x}$ \textbf{itself, but rather to approximate other samples within the ideal synset that are perceptually similar to the source, i.e., } $\boldsymbol{x}_i \in \boldsymbol{\mathcal{X}}$. Based on this discussion, we state the following proposition:

\begin{proposition}\label{NaturalKL}
    {  Within the proposed synonymous likelihood analysis, if the ideal synset is not explicitly modeled and degenerates to a singleton set, (i.e., \(\boldsymbol{\mathcal X}=\{\boldsymbol x_i\}\)), the distributional divergence term \(D_{\mathrm{KL}}[p_{\boldsymbol X}||p_{\hat{\boldsymbol X}}]\) cannot naturally arise from the proposed synset-oriented derivation.}
\end{proposition}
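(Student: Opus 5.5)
The plan is to rerun the derivation in the proof of Lemma~\ref{ENLSL} with the ideal synset taken to be the singleton $\boldsymbol{\mathcal X}=\{\boldsymbol x\}$, and to show that the term from which $D_{\mathrm{KL}}[p_{\boldsymbol X}||p_{\hat{\boldsymbol X}}]$ was extracted collapses identically.

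\textbf{Step 1: the synset-level likelihood in the singleton case.} With $\boldsymbol{\mathcal X}=\{\boldsymbol x\}$, the integral over $\boldsymbol x_i\in\boldsymbol{\mathcal X}$ in Eq.~\eqref{ENLSL_step1}(c)--(d) reduces to the pointwise likelihood. For a continuous source I would read this as the limit of a shrinking cell around $\boldsymbol x$, so that $\int_{\boldsymbol{\mathcal X}}(\cdot)\,d\boldsymbol x_i \approx |\boldsymbol{\mathcal X}|\,(\cdot)|_{\boldsymbol x_i=\boldsymbol x}$. The synonymous likelihood then becomes the syntactic likelihood $-\log \mathbb{E}_{\hat{\boldsymbol x}_j}[p_{\boldsymbol X|\hat{\boldsymbol X}}(\boldsymbol x|\hat{\boldsymbol x}_j)]$, up to the constant $-\log|\boldsymbol{\mathcal X}|$.

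\textbf{Step 2: the ratio term vanishes.} Equivalently, in Eq.~\eqref{ENLSL_step3_1}, the replacement \eqref{replaceEquations} becomes trivial. The distribution-ratio term is
\[
-\log \tfrac{1}{|\boldsymbol{\mathcal X}|}\int_{\boldsymbol{\mathcal X}} \tfrac{p_{\boldsymbol X}(\boldsymbol x_i)}{p_{\boldsymbol X}(\boldsymbol x)}\,d\boldsymbol x_i .
\]
Its integrand equals $1$ for every $\boldsymbol x_i=\boldsymbol x$, so the term is $0$ for every $\rho\in(0,1]$. This gives $f_\rho(p_{\boldsymbol X},\boldsymbol{\mathcal X})=0$ and $f(p_{\boldsymbol X},\boldsymbol{\mathcal X})=0$, which is the special case already noted after Eq.~\eqref{DivergenceTerm_original}.

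\textbf{Step 3: the decomposition is vacuous.} The only place $p_{\hat{\boldsymbol X}}$ entered was the insertion $\frac{p_{\boldsymbol X}(\boldsymbol x)}{p_{\hat{\boldsymbol X}}(\boldsymbol x)}\cdot\frac{p_{\hat{\boldsymbol X}}(\boldsymbol x)}{p_{\boldsymbol X}(\boldsymbol x_i)}$ into this ratio term, as in Eq.~\eqref{KL_separation} and Eq.~\eqref{KL_separation_tight}. With $\boldsymbol x_i=\boldsymbol x$, the decomposition reads
\[
0=D_{\mathrm{KL}}[p_{\boldsymbol X}||p_{\hat{\boldsymbol X}}]+\delta_p,\qquad \delta_p=-D_{\mathrm{KL}}[p_{\boldsymbol X}||p_{\hat{\boldsymbol X}}].
\]
So $D_{\mathrm{KL}}+\delta_p$ is identically zero and independent of $\boldsymbol\theta$. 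The KL term cancels exactly against $\delta_p$ and contributes no gradient or constraint. The objective therefore reduces to the syntactic likelihood, which under the Gaussian assumption is just weighted (E-)MSE plus constants, i.e., the classical rate-distortion form of Eq.~\eqref{classicalVI_factorized}. No distributional divergence survives, which establishes the proposition.

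\textbf{Main obstacle.} The delicate point is Step 1: making the degenerate measure-zero singleton rigorous for continuous sources. A literal singleton gives $|\boldsymbol{\mathcal X}|=0$ and $p_{\boldsymbol\theta}(\boldsymbol{\mathcal X}|\breve{\boldsymbol y}_s)=0$. I would handle this by taking a sequence of cells $\boldsymbol{\mathcal X}_n\downarrow\{\boldsymbol x\}$ and arguing by continuity of $p_{\boldsymbol X}$ that the normalized ratio term tends to $0$. The divergent $-\log|\boldsymbol{\mathcal X}_n|$ is codec-independent and can be absorbed as a constant. A secondary point is to argue that the KL term can only be added back artificially and does not arise from the derivation, since it enters and cancels in the same identity.
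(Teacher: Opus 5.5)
Your proposal is correct and follows essentially the paper's own argument. The paper reruns the Bayes and reciprocal-insertion steps of Lemma~\ref{ENLSL} on the degenerate likelihood $-\log p_{\boldsymbol\theta}(\boldsymbol x|\breve{\boldsymbol y})$, inserts $p_{\hat{\boldsymbol X}}(\boldsymbol x)$, and observes that the resulting $D_{\mathrm{KL}}[p_{\boldsymbol X}||p_{\hat{\boldsymbol X}}]$ is cancelled by an opposite-sign copy, which is exactly your identity $0=f=D_{\mathrm{KL}}[p_{\boldsymbol X}||p_{\hat{\boldsymbol X}}]+\delta_p$ with $\delta_p=-D_{\mathrm{KL}}[p_{\boldsymbol X}||p_{\hat{\boldsymbol X}}]$. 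Your shrinking-cell treatment of the measure-zero singleton is extra care the paper omits, since it simply asserts the reduction; note, though, that the residual is plain weighted MSE rather than the log-moment distortion $d_{1,\Delta}$ only once $\hat{\boldsymbol{\mathcal X}}$ also collapses to a singleton (as the paper assumes) or in the $\rho\rightarrow0$ limit.
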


\begin{proof}
    When the existence of the ideal synset $\boldsymbol{\mathcal{X}}$ at the source is not considered, which is equivalent to the condition that only the source signal $\boldsymbol{x}$ exists in the ideal synset $\boldsymbol{\mathcal{X}} = \{\boldsymbol{x}\}$, the mathematical model of the synonymous encoder degenerates to that of a classical rate-distortion source coding model. In this case, the synonymous likelihood term in Eq.~\eqref{DPequivalent} will reduce to the classical likelihood term (i.e., weighted distortion form) corresponding to Eq.~\eqref{classicalVI_factorized}, i.e.,
    \begin{equation}
        \mathbb{E}_{\boldsymbol{x}\sim p_{\boldsymbol{X}}\left(\boldsymbol{x}\right)} \mathbb{E}_{\breve{\boldsymbol{y}}_s\sim q_{\boldsymbol{\phi}}(\breve{\boldsymbol{y}}_s|\boldsymbol{x})} \left[-\log p_{\boldsymbol{\theta}}\left(\boldsymbol{\mathcal{X}}|\breve{\boldsymbol{y}}_s \right)\right]  \,\, \xRightarrow{\boldsymbol{\mathcal{X}}=\left\{ \boldsymbol{x} \right\}} \,\, \mathbb{E}_{\boldsymbol{x}\sim p_{\boldsymbol{X}}(\boldsymbol{x})}\mathbb{E}_{\breve{\boldsymbol{y}}\sim q_{\boldsymbol{\phi}}} [-\log p_{\boldsymbol{\theta}}(\boldsymbol{x}|\boldsymbol{y})].
    \end{equation}

    In this case, we attempt to strictly apply the same analysis methods from Eqs.~\eqref{ENLSL_step1}, ~\eqref{ENLSL_step2}, ~\eqref{ENLSL_step3_1} and \eqref{KL_separation} by using the Bayes' theorem and introducing the generative model’s estimated source distribution $p_{\hat{\boldsymbol{X}}}(\boldsymbol{x})$, i.e.,
    \begin{equation}
    \begin{aligned}
        \mathbb{E}_{\boldsymbol{x}\sim p_{\boldsymbol{X}}(\boldsymbol{x})}\mathbb{E}_{\breve{\boldsymbol{y}}\sim q_{\boldsymbol{\phi}}} [-\log p_{\boldsymbol{\theta}}(\boldsymbol{x}|\breve{\boldsymbol{y}})] & \overset{(\mathrm{a})}{=} \mathbb{E}_{\boldsymbol{x}\sim p_{\boldsymbol{X}}(\boldsymbol{x})}\mathbb{E}_{\breve{\boldsymbol{y}}\sim q_{\boldsymbol{\phi}}} \left[-\log p_{\boldsymbol{X} | \hat{\boldsymbol{X}}}(\boldsymbol{x}|\hat{\boldsymbol{x}})\right]  \\
        & \overset{(\mathrm{b})}{=} \mathbb{E}_{\boldsymbol{x}\sim p_{\boldsymbol{X}}(\boldsymbol{x})}\mathbb{E}_{\breve{\boldsymbol{y}}\sim q_{\boldsymbol{\phi}}} \left[-\log \left( p_{\hat{\boldsymbol{X}} | \boldsymbol{X}}(\hat{\boldsymbol{x}}|\boldsymbol{x}) \cdot \frac{p_{\boldsymbol{X}}(\boldsymbol{x})}{p_{\hat{\boldsymbol{X}}}(\hat{\boldsymbol{x}})}\right)\right] \\
        & \overset{(\mathrm{c})}{=} \mathbb{E}_{\boldsymbol{x}\sim p_{\boldsymbol{X}}(\boldsymbol{x})}\mathbb{E}_{\breve{\boldsymbol{y}}\sim q_{\boldsymbol{\phi}}} \left[-\log \left( p_{\hat{\boldsymbol{X}} | \boldsymbol{X}}(\hat{\boldsymbol{x}}|\boldsymbol{x}) \cdot \frac{p_{\boldsymbol{X}}(\boldsymbol{x})}{p_{\hat{\boldsymbol{X}}}(\hat{\boldsymbol{x}})} \cdot \frac{p_{\hat{\boldsymbol{X}}}(\hat{\boldsymbol{x}})}{p_{\boldsymbol{X}}(\boldsymbol{x})} \cdot \frac{p_{\boldsymbol{X}}(\boldsymbol{x})}{p_{\hat{\boldsymbol{X}}}(\hat{\boldsymbol{x}})} \right)\right] \\
        & \overset{(\mathrm{d})}{=} \mathbb{E}_{\boldsymbol{x}\sim p_{\boldsymbol{X}}(\boldsymbol{x})}\mathbb{E}_{\breve{\boldsymbol{y}}\sim q_{\boldsymbol{\phi}}} \left[-\log \left( p_{ \boldsymbol{X} | \hat{\boldsymbol{X}}}(\boldsymbol{x} | \hat{\boldsymbol{x}}) \cdot \frac{p_{\boldsymbol{X}}(\boldsymbol{x})}{p_{\boldsymbol{X}}(\boldsymbol{x})}  \right)\right] \\
        & \overset{(\mathrm{e})}{=} \mathbb{E}_{\boldsymbol{x}\sim p_{\boldsymbol{X}}(\boldsymbol{x})}\mathbb{E}_{\breve{\boldsymbol{y}}\sim q_{\boldsymbol{\phi}}} \left[-\log \left( p_{ \boldsymbol{X} | \hat{\boldsymbol{X}}}(\boldsymbol{x} | \hat{\boldsymbol{x}}) \cdot \frac{p_{\boldsymbol{X}}(\boldsymbol{x})}{p_{\hat{\boldsymbol{X}}}(\boldsymbol{x})} \cdot \frac{p_{\hat{\boldsymbol{X}}}(\boldsymbol{x})}{p_{\boldsymbol{X}}(\boldsymbol{x})}  \right)\right] \\
        & \overset{(\mathrm{f})}{=}  \mathbb{E}_{\boldsymbol{x}\sim p_{\boldsymbol{X}}(\boldsymbol{x})}\mathbb{E}_{\breve{\boldsymbol{y}}\sim q_{\boldsymbol{\phi}}} \left[-\log p_{\boldsymbol{X} | \hat{\boldsymbol{X}}}(\boldsymbol{x}|\hat{\boldsymbol{x}})\right] - D_{\text{KL}}[p_{\boldsymbol{X}} || p_{\hat{\boldsymbol{X}}}] +  D_{\text{KL}}[p_{\boldsymbol{X}} || p_{\hat{\boldsymbol{X}}}] \\
        & \overset{(\mathrm{g})}{=} \mathbb{E}_{\boldsymbol{x}\sim p_{\boldsymbol{X}}(\boldsymbol{x})}\mathbb{E}_{\breve{\boldsymbol{y}}\sim q_{\boldsymbol{\phi}}} \left[-\log p_{\boldsymbol{X} | \hat{\boldsymbol{X}}}(\boldsymbol{x}|\hat{\boldsymbol{x}})\right],
    \end{aligned}
    \end{equation}
    in which step (a) corresponds to step (d) in Eq.~\eqref{ENLSL_step1}, where the generative model $g_s(\cdot\,;\boldsymbol{\theta})$ maps $\breve{\boldsymbol{y}}$ to $\hat{\boldsymbol{x}}$; step (b) corresponds to step (a) in Eq.~\eqref{ENLSL_step2}, applying Bayes’ theorem; step (c) corresponds to step (b) in Eq.~\eqref{ENLSL_step2}, introducing the reciprocal terms $\dfrac{p_{\boldsymbol{X}}(\boldsymbol{x})}{p_{\hat{\boldsymbol{X}}}(\hat{\boldsymbol{x}})}$ and $\dfrac{p_{\hat{\boldsymbol{X}}}(\hat{\boldsymbol{x}})}{p_{\boldsymbol{X}}(\boldsymbol{x})}$; step (d) corresponds to step (b) in Eq.~\eqref{ENLSL_step3_1}, performing the inverse of Bayes’ formula; step (e) corresponds to the second equality in Eq.~\eqref{KL_separation}, introducing the generative model distribution$p_{\hat{\boldsymbol{X}}}(\boldsymbol{x})$; step (f) moves each term outside the logarithm to obtain the final result, and the final step (g) eliminate the two KL divergence terms with opposing signs.

    It can be observed that even if we strictly follow the proof procedure of the Synonymous Likelihood Lemma to derive the classical likelihood term, most of the operations are redundant: every time a new term is introduced, there always exists an opposing term that directly cancels it. The same holds for the analysis result: although a KL divergence term $D_{\text{KL}}[p_{\boldsymbol{X}} || p_{\hat{\boldsymbol{X}}}]$ appears in the analysis result, it is simultaneously eliminated by its corresponding negative KL term $-D_{\text{KL}}[p_{\boldsymbol{X}} || p_{\hat{\boldsymbol{X}}}]$, and recover the original likelihood term. 
    
    These analysis results imply that when the ideal synset at the source is not considered, the KL divergence term loses its structural role corresponding to the ideal synset {  and cannot be retained as an independent term in this analytical route}, 
    thereby proving the proposition.
\end{proof}

\begin{remark}
    {  Proposition \ref{NaturalKL} shows that, within the proposed synset-oriented analytical route, the emergence of the distributional divergence term relies on modeling an ideal synset composed of samples perceptually similar to the source signal, rather than treating the reconstruction target as a singleton set. Importantly, this result does not exclude other possible derivations of perceptual divergence terms; rather, it highlights that the proposed synonymous formulation provides one principled route for explaining the emergence of such a term from a synset-oriented reconstruction objective.}
\end{remark}

It should be noted that this proposition actually illustrates the degeneration relationship between existing signal compression methods guided by the RDP tradeoff and classical RD-based methods. As indicated by the dashed line in Fig.~\ref{fig_6}, when the source does not consider the existence of an ideal synset, the RDP tradeoff degenerates to the classical RD tradeoff, i.e.,
\begin{equation}
    \begin{matrix}
        \begin{aligned}
	       &  R\left( D,P \right) =\min_{p_{\hat{\boldsymbol{X}}|\boldsymbol{X}}(\hat{\boldsymbol{x}}|\boldsymbol{x})} I\left( \boldsymbol{X};\hat{\boldsymbol{X}} \right)\\
	       &\mathrm{s}.\mathrm{t}.\quad \mathbb{E} _{\boldsymbol{x}\sim p_{\hat{\boldsymbol{X}}, \boldsymbol{X}}\left( \boldsymbol{x}, \hat{\boldsymbol{x}} \right)} \left[ \Delta\left( \boldsymbol{x},\hat{\boldsymbol{x}} \right) \right] \le D,\\
	       &\quad \quad \,\, D_{\mathrm{KL}}\left[ p_{\boldsymbol{X}}||p_{\hat{\boldsymbol{X}}} \right] \le P,
        \end{aligned}
        \,\, \xRightarrow{\boldsymbol{\mathcal{X}}=\left\{ \boldsymbol{x} \right\}} \,\,
        \begin{aligned}
	       & R\left( D\right) =\min_{p_{\hat{\boldsymbol{X}}|\boldsymbol{X}}(\hat{\boldsymbol{x}}|\boldsymbol{x})} I\left( \boldsymbol{X};\hat{\boldsymbol{X}} \right)\\
	       &\mathrm{s}.\mathrm{t}.\quad \mathbb{E} _{\boldsymbol{x}\sim p_{\boldsymbol{X}}\left( \boldsymbol{x} \right)} \left[ \Delta \left( \boldsymbol{x},\hat{\boldsymbol{x}} \right) \right] \le D.
        \end{aligned}
    \end{matrix}
\end{equation}

\subsection{Discussion on the Compatibility with Existing Compression Limits}\label{SectionVI_2}

\begin{figure}[t]
	\centering{\includegraphics[width=0.75\textwidth]{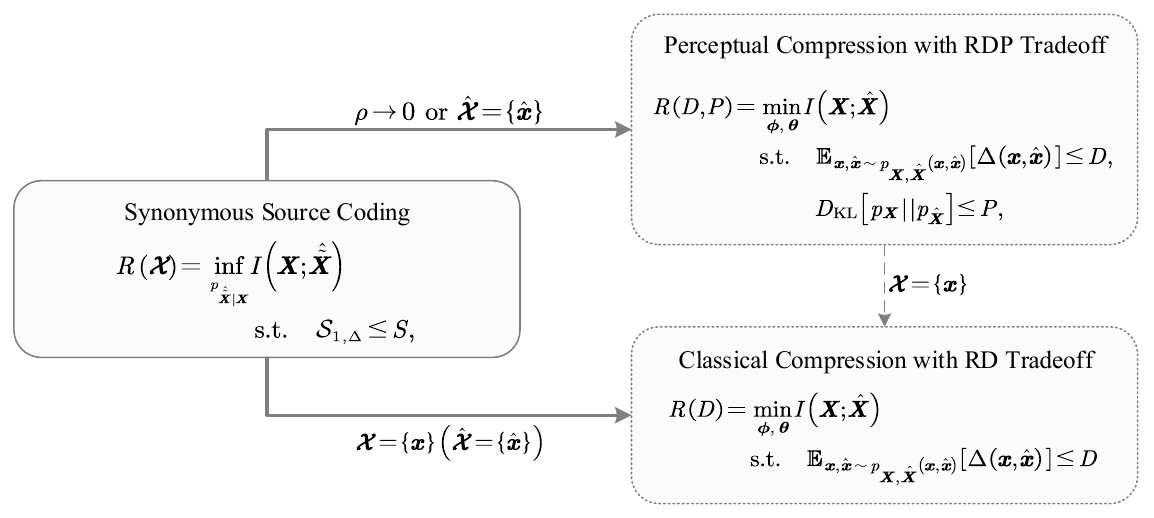}}
    \caption{{  Compatibility of synonymous source coding theory with existing compression theories.}}
    \label{fig_6}
\end{figure}

Next, we discuss the compatibility of the derived {  tight-bound synonymous source coding rate characterization}
with existing compression theories from the perspective of the {  rate characterizations}.
From Lemma~\ref{ENLSL}, Theorem~\ref{Theorem_SIC}, and their respective proofs, it can be seen that {  the proposed synonymous source coding rate characterization is compatible with existing classical source coding formulations under specific conditions.} 
{  In this sense, the proposed framework has two roles: it provides a tight-bound rate characterization under the synset-oriented reconstruction criterion, and it also gives a synonymous interpretation of existing RDP and RD formulations, which can be recovered through the Jensen-limit relaxation, the single-sample reconstructed-synset case, or the singleton ideal-synset case.}
As illustrated in Fig. \ref{fig_6}, this compatibility is manifested in that, under specific conditions, 
{  the synonymous source coding rate characterization}
of the synonymous source codec can degenerate to the theoretical optimization limits of conventional source coding methods, specifically including:

\begin{itemize}
    \item \emph{\textbf{Compatibility with Existing Rate-Distortion-Perception Tradeoff:}} From Theorem~\ref{Theorem_SIC} and its analysis, it follows that { when the parameter $\rho$ approaches to $0$ (i.e., the Jensen-limits relaxation), or} when the synonymous source decoder is restricted to producing a single reconstructed sample (i.e., $\hat{\boldsymbol{\mathcal{X}}} = \{ \hat{\boldsymbol{x}} \}$), the { synonymous source coding rate characterization} of the synonymous source codec expressed in Eq. \eqref{SVIgoal} degraded into the existing rate-distortion-perception (RDP) tradeoff targeted by existing source coding methods optimized for perceptual quality, i.e.,
    
    \begin{equation}
        \begin{matrix}
        { 
	    \begin{aligned}
            
            & R(\boldsymbol{\mathcal{X}}) = \inf_{p_{\hat{\tilde{\boldsymbol X}}|\boldsymbol X}} I\left(\boldsymbol X;\hat{\tilde{\boldsymbol X}}\right)  \\
            & \quad\quad\quad\quad \mathrm{s.t.} \quad \mathcal S_{1, \Delta} \leq S, 
        \end{aligned}
        }&
        \quad
        \xRightarrow[\text{or }\hat{\boldsymbol{\mathcal{X}}}=\left\{ \hat{\boldsymbol{x}} \right\}]{\rho \rightarrow 0}&
        \quad
        \begin{aligned}
	       &  R\left( D,P \right) =\min_{p_{\hat{\boldsymbol{X}}|\boldsymbol{X}}(\hat{\boldsymbol{x}}|\boldsymbol{x})} I\left( \boldsymbol{X};\hat{\boldsymbol{X}} \right)\\
	       &\mathrm{s}.\mathrm{t}.\quad \mathbb{E} _{\boldsymbol{x}\sim p_{\hat{\boldsymbol{X}}, \boldsymbol{X}}\left( \boldsymbol{x}, \hat{\boldsymbol{x}} \right)} \left[ \Delta\left( \boldsymbol{x},\hat{\boldsymbol{x}} \right) \right] \le D,\\
	       &\quad \quad \,\, D_{\mathrm{KL}}\left[ p_{\boldsymbol{X}}||p_{\hat{\boldsymbol{X}}} \right] \le P,
        \end{aligned}
        \end{matrix}
    \end{equation}
    in which the KL divergence $D_{\mathrm{KL}}\left[ p_{\boldsymbol{x}} \,||\, p_{\hat{\boldsymbol{x}}} \right]$ measures the difference between the source distribution $p_{\boldsymbol{x}}$ and the distribution constructed by the generative model $p_{\hat{\boldsymbol{x}}}$, representing a typical form of distributional divergence. Such divergences are widely used to quantify perceptual differences between the source and reconstructed samples \cite{blau2018perception, blau2019rethinking, theis2021a, agustsson2019generative, mentzer2020high,  muckley2023improving}. {  It should be noted that if the synonymous source decoder is restricted to producing a single reconstructed sample (i.e., $\hat{\boldsymbol{\mathcal{X}}} = \{ \hat{\boldsymbol{x}} \}$), the relaxation based on Lyapunov's inequality is not necessary and will be equal to the Jensen-limits form under this condition, which can be simply analyzed based on Eqs. \eqref{ENLSL_step3_2} and \eqref{ENLSL_step3_2_limit}.} Consequently, existing perceptually optimized source coding methods designed based on the RDP tradeoff can be viewed as a special case of synonymous source coding in which the reconstructed synset $\hat{\boldsymbol{\mathcal{X}}}$ contains only a single sample $\hat{\boldsymbol{x}}$.

    \item \emph{\textbf{Compatibility with Classical Rate-Distortion Tradeoff:}} From the form of the synonymous likelihood term in Lemma~\ref{ENLSL}, i.e., $\mathbb{E}_{\boldsymbol{x}\sim p_{\boldsymbol{X}}\left(\boldsymbol{x}\right)} \mathbb{E}_{\breve{\boldsymbol{y}}_s\sim q_{\boldsymbol{\phi}}(\breve{\boldsymbol{y}}_s|\boldsymbol{x})} \left[-\log p_{\boldsymbol{\theta}}\left(\boldsymbol{\mathcal{X}}|\breve{\boldsymbol{y}}_s \right)\right]$, it follows that when the ideal synset in the data space is ignored (equivalently, assuming the ideal synset contains only a single sample, i.e., $\boldsymbol{\mathcal{X}} = \{ \boldsymbol{x} \}$), the synonymous representation $\breve{\boldsymbol{y}}_s$ degenerates to the full representation vector $\breve{\boldsymbol{y}}$, and the target of estimation reduces from the ideal synset $\boldsymbol{\mathcal{X}}$ to the source signal $\boldsymbol{x}$. In this case, the synonymous likelihood term reduces to the reconstruction term represented by the syntactic likelihood, i.e.,
    \begin{equation}
        \mathbb{E}_{\boldsymbol{x}\sim p_{\boldsymbol{X}}\left(\boldsymbol{x}\right)} \mathbb{E}_{\breve{\boldsymbol{y}}_s\sim q_{\boldsymbol{\phi}}(\breve{\boldsymbol{y}}_s | \boldsymbol{x})} 
        \left[-\log p_{\boldsymbol{\theta}}\left(\boldsymbol{\mathcal{X}}|\breve{\boldsymbol{y}}_s\right)\right]
        \xRightarrow{\boldsymbol{\mathcal{X}} = \left\{\boldsymbol{x} \right\}} 
        \mathbb{E}_{\boldsymbol{x}\sim p_{\boldsymbol{X}}\left(\boldsymbol{x}\right)} \mathbb{E}_{\breve{\boldsymbol{y}}\sim q_{\boldsymbol{\phi}}(\breve{\boldsymbol{y}} | \boldsymbol{x})} 
        \left[- \log p_{\boldsymbol{\theta}}\left(\boldsymbol{x}|\breve{\boldsymbol{y}}\right)\right].
    \end{equation}
    In this case, the distribution term lacks a rigorous physical or statistical interpretation in the derivation and is therefore omitted, losing its structural role within the theoretical framework, as stated in Section \ref{SectionVI_1}.

    Furthermore, when $\boldsymbol{\mathcal{X}} = \{ \boldsymbol{x} \}$, the existence of $\hat{\boldsymbol{y}}_{\epsilon, j}$ becomes unnecessary, which further reduces the reconstructed synset to a single sample, $\hat{\boldsymbol{\mathcal{X}}} = \{ \hat{\boldsymbol{x}} \}$. In this scenario, the synset structure is entirely eliminated, and the system model reverts to the conventional single-sample reconstruction framework. Consequently, { the synonymous rate characterization} of the synonymous source codec expressed in Eq. \eqref{SVIgoal} degenerates to the rate-distortion (RD) tradeoff targeted by classical source coding methods optimized for syntactic distortion, i.e.,
    \begin{equation}
        \begin{matrix}
	    { 
	    \begin{aligned}
            
            & R(\boldsymbol{\mathcal{X}}) = \inf_{p_{\hat{\tilde{\boldsymbol X}}|\boldsymbol X}} I\left(\boldsymbol X;\hat{\tilde{\boldsymbol X}}\right)  \\
            & \quad\quad\quad\quad \mathrm{s.t.} \quad \mathcal S_{1, \Delta} \leq S, 
        \end{aligned}
        }&		
        \quad
        \xRightarrow[\left(\hat{\boldsymbol{\mathcal{X}}} = \left\{\hat{\boldsymbol{x}}\right\}\right)]{{\boldsymbol{\mathcal{X}}}=\left\{{\boldsymbol{x}} \right\}}&
        \quad
        \begin{aligned}
	       & R\left( D\right) =\min_{p_{\hat{\boldsymbol{X}}|\boldsymbol{X}}(\hat{\boldsymbol{x}}|\boldsymbol{x})} I\left( \boldsymbol{X};\hat{\boldsymbol{X}} \right)\\
	       &\mathrm{s}.\mathrm{t}.\quad \mathbb{E} _{\boldsymbol{x}\sim p_{\boldsymbol{X}}\left( \boldsymbol{x} \right)} \left[ \Delta \left( \boldsymbol{x},\hat{\boldsymbol{x}} \right) \right] \le D.
        \end{aligned}
        \end{matrix}
    \end{equation}
    There, existing classical source compression methods designed based on the RD tradeoff can also be regarded as a special case of synonymous source coding in which the ideal synset $\boldsymbol{\mathcal{X}}$ contains only a single sample $\boldsymbol{x}$.
\end{itemize}

\subsection{Discussion on the Potential Sources of Gains}\label{SectionVI_3}

Finally, we discuss the potential sources of gains of our proposed synonymous source coding architecture 
{  compared with the existing RDP tradeoff, based on the above synonymous source coding rate characterization.}
Based on Fig.~\ref{fig_1} and the corresponding coding process, the fundamental difference between synonymous source coding and traditional perceptually optimized source coding is as follows: the former allows reconstructed outputs to be any synset signals semantically consistent with the source and requires encoding only the synonymous part of the latent representation, using the equivalent the single-side semantic mutual information $I(\boldsymbol{X}; \hat{\tilde{\boldsymbol{X}}})$ as the theoretical lower bound on coding rate; the latter encodes the full representation vector and uses the minimization of the rate–distortion–perception function $R(D,P)$ as the theoretical lower bound.

Since the synonymous decoder permits prediction and sampling of the detailed representation during reconstruction, when the KL divergence term is strictly zero (i.e., the generative model produces signal samples exactly following the source distribution), the perceptual components of the reconstructed samples can rely on random sampling rather than solely on the received encoded sequence. In other words, with a properly designed prediction and sampling mechanism, it is possible to achieve nonnegative mutual information between the source and reconstructed detailed representations  $I(\boldsymbol{Y}_{\epsilon};\hat{\boldsymbol{Y}}_{\epsilon}) \ge 0$. This establishes the following relationship between the single-side semantic mutual information $ I(\boldsymbol{X}; \hat{\tilde{\boldsymbol{X}}})$ and the mutual information $I(\boldsymbol{X};\hat{\boldsymbol{X}})$ between the source and reconstructed variables:
\begin{equation}\label{SemanticMutualInequality}
    I\left(\boldsymbol{X};\hat{\tilde{\boldsymbol{X}}}\right) \le I\left(\boldsymbol{X};\hat{\boldsymbol{X}}\right).
\end{equation}
Consequently, the semantic entropy at the semantic level $H_s(\tilde{\boldsymbol{X}})$, which serves as the lower bound for the single-sided semantic mutual information, and the rate–distortion-perception function at the syntactic level $R(D, P)$ satisfy the following property:
\begin{equation}
    H_s\big(\tilde{\boldsymbol{X}}\big) \le R(D,P),
\end{equation}
{  {  in which the corresponding ideal synset is characterized by the synonymous reconstruction constraint $\mathcal S_{1,\Delta}\leq S$ associated with the same distortion constraint $D$ and perception constraint $P$. In this case, $H_s(\tilde{\boldsymbol X})$ represents the semantic coding limit for distinguishing the ideal synsets, whereas $R(D,P)$ characterizes the rate required for syntactic-level reconstruction under the same constraints.}
}

When the mutual information $I(\boldsymbol{Y}_{\epsilon};\hat{\boldsymbol{Y}}_{\epsilon}) = 0$, the above inequality becomes an equality. This indicates that, compared with existing perceptually optimized source coding methods, an ideal synonymous source codec has potential advantages in coding rate. 

{
 
\textit{On stochastic reconstruction and common randomness.}
The above discussion also clarifies the role of stochastic reconstruction in synonymous source coding. In this framework, stochastic reconstruction, equivalent to generative decoding, does not mean generating arbitrary random details. Instead, the decoder should sample the detailed representation under the conditional distribution associated with the recovered synonymous representation, so that the reconstructed sample remains within the admissible synset. From this perspective, common randomness can be introduced as a mechanism to coordinate detail sampling between the encoder and decoder. However, common randomness that is independent of the source details only provides uncontrolled randomization and does not by itself increase the dependence between the source and reconstructed detailed representations. To contribute to the potential coding gain characterized by Eq.~\eqref{SemanticMutualInequality}, the common randomness or the corresponding sampling mechanism should help the decoder capture source-related prior characteristics of the detailed representation, leading to a nonzero mutual information $I(\boldsymbol{Y}_{\epsilon};\hat{\boldsymbol{Y}}_{\epsilon})>0$. Therefore, the benefit of stochastic reconstruction in the proposed framework comes from synset-consistent and source-related detail sampling, rather than from arbitrary randomness.
}

However, in practical synonymous compression codecs, the generator $g_s(\cdot\,;\boldsymbol{\theta})$ typically does not have prior knowledge of the source distribution $p_{\boldsymbol{X}}(\boldsymbol{x})$ and instead approximates it through the generative process. This indicates not only that an ideal synonymous codec does not exist, but also that, using general prediction and random sampling mechanisms, it is difficult for the mutual information of the detailed representations $I(\boldsymbol{Y}_{\epsilon};\hat{\boldsymbol{Y}}_{\epsilon})$ to exceed $0$, making the inequality Eq.~\eqref{SemanticMutualInequality} hard to achieve. Nevertheless, this does not imply that a synonymous codec is limited to the same optimization behavior as existing perceptual compression methods: if the prediction and detail-sampling mechanisms are carefully designed to allow the detailed representations to capture certain prior characteristics of the source distribution, there is potential to achieve improved compression performance. This remains an open question for future investigation.

\section{Conclusion}\label{SectionVII}

In this paper, we developed a synonymous variational perspective on the rate-distortion-perception tradeoff for natural signal compression. Motivated by synonymity-based semantic information theory, we reformulated perceptual reconstruction as the recovery of any admissible sample within an ideal synonymous set associated with the source, rather than the source sample itself. Based on this reformulation, we established a corresponding synonymous source coding architecture and introduced a synonymous variational inference (SVI) analysis framework for the analysis of synset-oriented compression. Within this framework, we defined the synonymous variational lower bound (SVLBO) as a tractable analytical tool and further characterized the conditions for semantic lossless representation and identification. We then established the synonymity-perception consistency principle, showing that optimal semantic-level identification is theoretically aligned with perceptual optimization at the syntactic level, {  and clarify that the distributional divergence term arises naturally from the synset-based reconstruction objective.} 
{  On this basis, we derived a tight-bound synonymous rate characterization, where the reconstruction-related and distribution-related terms are jointly characterized by a single synonymous reconstruction constraint. 
We further showed that the Jensen-limit relaxation of this tight form leads to a practical synonymous rate-distortion-perception optimization form, and also showed that the proposed framework is compatible with existing rate-distortion-perception formulations and classical rate-distortion theory, both of which can be related to special or relaxed cases of the proposed analysis.}
Overall, the proposed analysis provides a unified theoretical perspective on perceptual compression { for continuous natural sources} and suggests that synonymous source coding may offer a useful foundation for future studies on semantics-aware compression and reconstruction.

\bibliographystyle{IEEEtran}
\bibliography{ref}

\end{document}